\documentclass[journal]{IEEEtran}

\usepackage{amsmath}
\usepackage{amssymb}
\usepackage{amsthm}
\usepackage{bm}
\usepackage{graphicx}
\usepackage[caption=false,font=footnotesize]{subfig}
\usepackage{placeins}
\usepackage[hidelinks]{hyperref}

\theoremstyle{plain}
\newtheorem{theorem}{Theorem}
\newtheorem{lemma}{Lemma}
\newtheorem{proposition}{Proposition}
\newtheorem{corollary}{Corollary}

\theoremstyle{definition}

\begin{document}
\bstctlcite{IEEEtranBSTCTL:NoDash}

\title{Finite-Modal Realization and Operator-Norm Convergence of a Source-to-Observation Electromagnetic Scattering Green Operator}

\author{
    Zhukang Wang,~\IEEEmembership{Graduate Student Member,~IEEE,}
    Da Li,~\IEEEmembership{Member,~IEEE,}
    Ruifeng Li,
    Jinyan Ma,
    Jiarun Hu,
    Jiahui Wang,
    Anqi Xia,
    Tengjiao Wang,~\IEEEmembership{Member,~IEEE,}
    Said Mikki,
    and~Er-Ping Li,~\IEEEmembership{Fellow,~IEEE}

    \thanks{
        This work has been submitted to the IEEE for possible publication. Copyright may be transferred without notice, after which this version may no longer be accessible.

        This work was supported in part by Zhejiang Laboratory Foundation of China under Grants LR25F010001 and LHZSD25F010001 and in part by the Information Technology Center and the State Key Laboratory of CAD\&CG, Zhejiang University. (\emph{Zhukang Wang and Ruifeng Li contributed equally to this work.}) (\emph{Corresponding author: Da Li.})

        Zhukang Wang, Da Li, Ruifeng Li, Jinyan Ma, Jiahui Wang, Anqi Xia, and Er-Ping Li are with Zhejiang Key Lab of Intelligent Electromagnetics Control and Advanced Electronic Integration, and also with the College of Information Science and Electronic Engineering, Zhejiang University, Hangzhou 310027, China (e-mail: li-da@zju.edu.cn, li\_ruifeng@zju.edu.cn, jinyanma@zju.edu.cn).

        Jiarun Hu and Said Mikki are with the Zhejiang University/University of Illinois at Urbana--Champaign (ZJU-UIUC) Institute, Zhejiang University, Haining 314400, China (e-mail: mikkisaid@intl.zju.edu.cn).

        Tengjiao Wang is with the Department of Wireless Network Research, Huawei Technologies Co., Ltd., Shanghai 201206, China (e-mail: wangtengjiao6@huawei.com).
    }}

\maketitle

\begin{abstract}
    Source-to-observation operators provide reusable environment-level descriptions for multi-query electromagnetic (EM) prediction and communication-mode analysis. However, in practical multiple-scattering models, these operators are represented with finitely many angular modes, and agreement for selected excitations or between successive truncation orders does not establish uniform accuracy of the full map or reliability of its singular channels. To close this gap, we formulate the environment-induced response as a scattering Green operator on fixed continuous source and observation spaces and derive an exact trace-space factorization that reconstructs the Maxwell scattered field. For fixed, pairwise-disjoint enclosing trace spheres and a well-posed collective problem, nested vector spherical wave function (VSWF) realizations converge in operator norm. A structural bound separates external modal tails from collective-resolvent sensitivity, and operator-norm convergence guarantees uniform convergence of the singular values. We further construct a finite metric core that preserves the nonzero singular values of each finite-order operator and reconstructs matched orthonormal source--field channels without introducing external-support discretization degrees of freedom (DoF) into the spectral problem. Full-wave benchmarks verify the finite-order implementation. A controlled near-resonant two-sphere study shows that adjacent-order agreement can precede resolution of the dominant high-order collective direction. It further shows that only part of the internal amplification appears in externally accessible gains and that resonance promotes a distinct high-order channel pair above an otherwise preserved low-order family. The resulting framework provides a convergent, metric-consistent finite-modal representation of multiple-scattering source-to-observation operators and their accessible channels.
\end{abstract}

\begin{IEEEkeywords}
    Electromagnetic scattering, Green operators, multiple scattering, operator-norm convergence, vector spherical wave functions.
\end{IEEEkeywords}

\section{Introduction}
\label{sec:introduction}

\IEEEPARstart{M}{any} electromagnetic (EM) scattering problems require reusable maps from distributed sources to prescribed observation spaces, rather than a field solution for each excitation. In nanophotonics, such source-to-observation maps provide a unified description of environment-dependent wave interactions, underpinning optical characterization and calculations of the local density of states (LDoS)~\cite{bohren1983,mishchenko2002,mishchenko2004,yannopapas2007green,egel2017}. Beyond nanophotonics, the same operator framework is central to electromagnetic information theory (EMIT)~\cite{zhu2024eit,wang2024eit,li2026eitai}, where it supports communication-mode analysis~\cite{miller2000communicating,miller2019waves}, quantification of spatial degrees of freedom (DoF)~\cite{wan2026dof}, channel state information (CSI) acquisition~\cite{li2023emit,li2024neuroute}, and channel capacity evaluation~\cite{jensen2008continuous,mikki2023shannon,zhang2026nonparaxial}. With the homogeneous-background contribution known, the key challenge is to determine the environment-induced scattering operator that captures the deterministic source-to-observation coupling introduced by structured environments.

For finite scatterer assemblies, transition-matrix (T-matrix) methods provide a natural internal realization from which such a scattering operator can be constructed~\cite{waterman1965,mishchenko1996}. In this formulation, the incident and scattered fields are represented by regular and outgoing wave expansions, respectively. The T-matrix associated with each individual scatterer maps the incident coefficients to the scattered ones, and analytical translation formulas provide the interbody coupling. This modular structure has been used for Green-tensor construction~\cite{yannopapas2007green}, EM channel modeling~\cite{li2023emit}, and EM wave manipulation~\cite{ma2025joint}. Individual T-matrices can be extracted to controlled accuracy at a fixed retained order using high-order integral-equation or hybrid solvers~\cite{ganesh2010,gimbutas2013}. However, the resulting finite source-to-observation realization also truncates the interbody translations together with the source and observation interfaces. Therefore, accuracy of the finite T-matrices does not by itself establish that the assembled map converges to the physical Maxwell scattering operator as the truncation order increases.

The remaining question is which notion of convergence is appropriate for the complete source-to-observation map. Existing literature offers several complementary forms of guidance on modal truncation. Rigorous asymptotic rates have been established for two-dimensional scalar multiple scattering by circular cylinders under prescribed incident fields~\cite{fitzpatrick2021}. For Maxwell problems, practical order-selection rules have been formulated for particular observables, including near-field radiative transfer between spheres and phase-function accuracy for individual particles~\cite{sasihithlu2011convergence,zhang2022vswftruncation}. Operator-based domain-decomposition work has also examined convergence of reduced eigencurrent solutions for prescribed illuminations and proposed an empirical a priori truncation criterion~\cite{lancellotti2010}. Separately, tunnelling escape explains the weak exterior coupling of sufficiently high angular channels~\cite{miller2025tunnelling}. Taken together, these results inform truncation-order selection for prescribed incident fields, discrete solution currents, or specific observables, but they do not bound the complete map uniformly between fixed continuous spaces. Completeness of the underlying wave-expansion basis ensures convergence for each fixed field, yet it neither gives uniform convergence over all admissible sources nor controls the order-dependent multiple-scattering inverse, or collective resolvent. Near a collective resonance, this inverse can amplify unresolved modal content, so apparent stability for selected excitations need not bound the worst-case map error. The appropriate criterion is therefore operator-norm convergence to the physical Maxwell scattering operator, which requires the observation-space error to vanish uniformly over all admissible unit-norm sources.

This distinction is especially consequential for singular-value analysis. For a scattering map equipped with prescribed source and observation norms, its singular values quantify the gains of orthogonal source-to-observation channels and characterize spatial DoF~\cite{miller2019waves,miller2012modeconverters,molesky2022toperator}. However, the singular values of a finite coefficient matrix are not intrinsically physical: they change with the normalization of the modal coefficients, even when the represented electromagnetic fields do not. Therefore, a meaningful finite-order spectrum must inherit the same measures of source strength and observed-field magnitude as the continuous scattering map. This provides a valid spectrum at each fixed order, but not convergence across orders. To ensure that the resulting channel gains approach those of the exact Maxwell map, the finite-order operators must converge in norm to one common continuous operator.

To address the issues above, we formulate the source-to-observation problem on fixed continuous spaces and restrict angular truncation to the internal realization of multiple scattering. This approach leads to the following contributions.

\begin{enumerate}
    \item First, we define a continuous EM scattering operator independent of angular truncation. It maps distributed electric and equivalent magnetic currents to observation spaces endowed with prescribed metrics. Its exact trace-space factorization reconstructs the physical scattered field under the stated uniqueness assumptions and provides a common target for all finite realizations.
    \item Building on this target, we construct nested finite VSWF realizations by truncating only the internal regular and outgoing trace variables while retaining the external spaces. For fixed, pairwise-disjoint auxiliary spheres under collective well-posedness, we prove operator-norm convergence of the complete map and derive a structural error bound. The bound separates source and observation modal tails from collective-interaction perturbations. This convergence also guarantees consistency of the associated singular spectra. To the best of our knowledge, this is the first operator-norm convergence theorem for finite-modal realizations of a three-dimensional Maxwell multiple-scattering source-to-observation map between fixed continuous spaces.
    \item For metric-consistent spectral analysis, we compress the prescribed metrics into a finite core over the retained internal modes. The core preserves every nonzero singular value without carrying external-support discretization DoF into the spectral problem, while its singular vectors reconstruct orthonormal accessible scattering channels through explicit surface-current and field formulas.
    \item Full-wave comparisons for spherical and nonspherical clusters assess the accuracy and efficiency of the finite realization. Complementary two-sphere experiments examine truncation near a collective resonance and show how internal amplification, source accessibility, and observation visibility jointly govern convergence and the externally accessible channel structure.
\end{enumerate}

\emph{Organization.} Section~\ref{sec:abstract_green_operator} formulates the continuous scattering operator and establishes its Maxwell interpretation. Section~\ref{sec:vswf_modal_realization} develops the finite VSWF realization, proves operator-norm and singular-value convergence, and constructs the metric core. Section~\ref{sec:numerical_validation_strategy} presents full-wave validation, convergence diagnostics, and accessible-channel analysis. Section~\ref{sec:conclusion} concludes the paper.

\emph{Notation and conventions.} We adopt the $\exp(-\mathrm{j}\omega t)$ time convention, where $\mathrm j^2=-1$. Bold symbols denote vectors, matrices, or vector-valued fields; calligraphic symbols denote operators or spaces; and double overbars denote dyadics. Superscripts $\mathsf T$, $*$, and $\dagger$ denote transpose, conjugation, and the Hilbert adjoint, respectively; for arrays, $\dagger$ is the Hermitian transpose. Inner products $\langle\cdot,\cdot\rangle$ are linear in the second argument, and unqualified norms are induced by the stated inner products unless specified otherwise. For an operator $\mathcal A:\mathcal X\to \mathcal Y$, $\|\mathcal A\|_{\mathcal X\to \mathcal Y}$ denotes the induced operator norm, with subscripts omitted when the spaces are clear. Singular values $\sigma_n$ are listed in nonincreasing order with multiplicity; finite singular-value lists are zero-padded when compared with infinite ones. Operator, matrix, and dyadic identities are $\mathbb I$, $\mathbf I$, and $\overline{\overline{\mathbf I}}$. The symbols $\delta(\cdot)$ and $\delta_{ab}$ denote the Dirac distribution and Kronecker delta, respectively.

\section{Hilbert-Space Source-to-Observation Scattering Green Operator}
\label{sec:abstract_green_operator}

\subsection{Single-Body Hilbert Spaces and Interface Operators}
\label{subsec:abstract_single_body_response}
\label{subsec:fixed_sphere_trace_setting}

We first fix the homogeneous reference problem. Its parameters satisfy $\omega,\varepsilon,\mu>0$, $k=\omega\sqrt{\mu\varepsilon}$, and $\eta=\sqrt{\mu/\varepsilon}$. For fields written in the observation variable $\mathbf r$, define
\begin{equation}
    \mathcal L_{k,\mathbf r}\mathbf F(\mathbf r)
    :=
    \nabla_{\mathbf r}\times\nabla_{\mathbf r}\times\mathbf F(\mathbf r)
    -
    k^2\mathbf F(\mathbf r).
    \label{eq:maxwell_operator_lk}
\end{equation}
For single-argument fields, we write $\mathcal L_k$ when the differentiation variable is clear. The outgoing free-space dyadic Green's function $\overline{\overline{\mathbf G}}^0$ is the inverse kernel of $\mathcal L_{k,\mathbf r}$:
\begin{equation}
    \mathcal L_{k,\mathbf r}
    \overline{\overline{\mathbf G}}^0(\mathbf r,\mathbf r')
    =
    \overline{\overline{\mathbf I}}\delta(\mathbf r-\mathbf r'),
    \label{eq:free_space_dyadic_inverse}
\end{equation}
where the equality is distributional and the outgoing radiation condition is implied. Explicitly,
\begin{equation}
    \overline{\overline{\mathbf G}}^0(\mathbf r,\mathbf r')
    =
    \left(
    \overline{\overline{\mathbf I}}
    +
    \frac{1}{k^2}\nabla_{\mathbf r}\nabla_{\mathbf r}
    \right)
    \frac{\exp\!\left(\mathrm{j}k|\mathbf r-\mathbf r'|\right)}
    {4\pi|\mathbf r-\mathbf r'|}.
    \label{eq:free_space_dyadic_green_definition}
\end{equation}

Impressed electric current $\mathbf J$ and equivalent magnetic current $\mathbf M$ enter Maxwell's equations as
\begin{align}
    \nabla\times\mathbf E
     & =
    \mathrm{j}\omega\mu\mathbf H
    -
    \mathbf M,
    \label{eq:maxwell_with_equiv_sources_e} \\
    \nabla\times\mathbf H
     & =
    \mathbf J
    -
    \mathrm{j}\omega\varepsilon\mathbf E.
    \label{eq:maxwell_with_equiv_sources_h}
\end{align}
Eliminating $\mathbf H$ gives
\begin{equation}
    \mathcal L_k\mathbf E
    =
    \mathrm{j}\omega\mu\mathbf J
    -
    \nabla\times\mathbf M.
    \label{eq:maxwell_operator_with_source}
\end{equation}
Here derivatives are distributional for nonsmooth sources. For a Lipschitz domain $\Omega\subset\mathbb R^3$, set $H(\operatorname{curl}^2,\Omega):=\{\mathbf E\in H(\operatorname{curl},\Omega):\nabla\times\mathbf E\in H(\operatorname{curl},\Omega)\}$, with the corresponding local regularity understood on unbounded exterior domains. Define the source-free Maxwell space
\begin{equation}
    \mathcal M_k(\Omega)
    :=
    \{\mathbf E\in H(\operatorname{curl}^2,\Omega):
    \mathcal L_k\mathbf E=\mathbf 0\}.
    \label{eq:source_free_space}
\end{equation}

We now attach a Hilbert state to one scatterer on an enclosing sphere. Let $D=\{\Omega_D,\Theta_D\}$ denote the single-body datum, where $\Omega_D\subset\mathbb R^3$ is the material support and $\Theta_D$ collects the linear material parameters and boundary or transmission conditions. Fix an enclosing ball $B_D=B_R(\mathbf c)$ with $\overline{\Omega}_D\subset B_D$ and boundary sphere $\Gamma_D=\partial B_D$. With $\mathbf n_D$ the outward unit normal, define the Cauchy trace of Maxwell fields possessing the standard tangential traces \cite{nedelec2001,buffa2002traces}
\begin{equation}
    \gamma_D^{\mathrm C}\mathbf E
    :=
    \left.
    \left(
    \mathbf n_D\times\mathbf E,
    \mathbf n_D\times k^{-1}\nabla\times\mathbf E
    \right)
    \right|_{\Gamma_D}.
    \label{eq:maxwell_cauchy_trace}
\end{equation}
We use this Cauchy datum as the interface state of a regular interior field or a radiating exterior field. This trace pair belongs to the product Maxwell trace space $H^{-1/2}(\operatorname{div},\Gamma_D)^2$, equipped with a fixed Hilbert inner product inducing the standard product Maxwell trace topology. The regular and outgoing state spaces are the corresponding trace ranges
\begin{align}
    \mathcal H^{\mathrm{reg}}(D)
     & :=
    \gamma_D^{\mathrm C}\mathcal M_k(B_D),
    \label{eq:local_incident_field_space} \\
    \mathcal H^{\mathrm{out}}(D)
     & :=
    \gamma_D^{\mathrm C}
    \left\{
    \mathbf E\in\mathcal M_k(\mathbb R^3\setminus\overline{B}_D)
    :
    \mathbf E \,\,\text{outgoing}
    \right\},
    \label{eq:local_scattered_field_space}
\end{align}
Here outgoing means the Silver--M\"uller condition \cite{chew1995,colton2013}. Maxwell Calder\'on theory identifies the two trace spaces in \eqref{eq:local_incident_field_space}--\eqref{eq:local_scattered_field_space} with the closed ranges of the bounded interior and radiating exterior projectors \cite{nedelec2001,buffa2002traces,kristensson2020}; hence they are Hilbert spaces. The Maxwell representation formula makes the complete Cauchy trace injective on the regular interior class, while the exterior representation formula together with radiating uniqueness gives the corresponding result for outgoing fields. Thus each trace determines a unique regular or outgoing field representative. Technical details are given in Supplementary Note S1.

For a regular incident field $\mathbf E^{\mathrm{inc}}$, let $\mathbf E^{\mathrm{sca}}$ be the outgoing field for which $\mathbf E^{\mathrm{inc}}+\mathbf E^{\mathrm{sca}}$ satisfies the material and boundary or transmission conditions encoded by $\Theta_D$. The corresponding trace-space transition operator is
\begin{equation}
    \mathcal T_D
    : \mathcal H^{\mathrm{reg}}(D)
    \to
    \mathcal H^{\mathrm{out}}(D),
    \quad
    \mathcal T_D
    (\gamma_D^{\mathrm C}\mathbf E^{\mathrm{inc}})
    =
    \gamma_D^{\mathrm C}\mathbf E^{\mathrm{sca}}.
    \label{eq:abstract_t_operator}
\end{equation}
We assume that the isolated-body problem is well posed in the chosen trace norms, so that $\mathcal T_D$ is bounded.

To connect the trace response to sources and measurements, fix external supports $\Xi_{\mathrm{src}}$ and $\Xi_{\mathrm{obs}}$, each either a bounded volume region or a compact Lipschitz surface support, both at positive distance from $\overline{B}_D$. They carry the corresponding volume or surface measures $\mathrm d\mu_{\mathrm{src}}$ and $\mathrm d\mu_{\mathrm{obs}}$ and finite component dimensions $d_X$ and $d_Y$. For example, a full vector field on a volume uses $\mathrm d\mu=\mathrm dV$ and $d=3$, whereas its tangential restriction to a Lipschitz surface uses $\mathrm d\mu=\mathrm dS$ and $d=2$. Thus a volume electric--magnetic source pair has $d_X=6$, a tangential Huygens surface has $d_X=4$, and electric-field observation has $d_Y=3$ or $2$ for full or tangential components. The source and observation Hilbert spaces are
\begin{equation}
    \mathcal X
    :=
    L^2(\Xi_{\mathrm{src}};\mathbb C^{d_X}),
    \quad
    \mathcal Y
    :=
    L^2(\Xi_{\mathrm{obs}};\mathbb C^{d_Y}),
    \label{eq:source_observation_hilbert_spaces}
\end{equation}
where $\mathbf q=(\mathbf J^{\mathsf T},\mathbf M^{\mathsf T})^{\mathsf T}\in\mathcal X$ is an electric/equivalent-magnetic current pair and $\mathbf E\in\mathcal Y$ is the observed field component. We use the weighted inner products
\begin{align}
    \langle \mathbf q_1,\mathbf q_2\rangle_{\mathcal X}
     & =
    \int_{\Xi_{\mathrm{src}}}
    \mathbf q_1^\dagger(\mathbf r')
    \mathbf W_X(\mathbf r')
    \mathbf q_2(\mathbf r')\,\mathrm d\mu_{\mathrm{src}}(\mathbf r'),
    \label{eq:hilbert_source_metric} \\
    \langle \mathbf E_1,\mathbf E_2\rangle_{\mathcal Y}
     & =
    \int_{\Xi_{\mathrm{obs}}}
    \mathbf E_1^\dagger(\mathbf r)
    \mathbf W_Y(\mathbf r)
    \mathbf E_2(\mathbf r)\,\mathrm d\mu_{\mathrm{obs}}(\mathbf r).
    \label{eq:hilbert_observation_metric}
\end{align}
Here $\mathbf W_X(\mathbf r')\in\mathbb C^{d_X\times d_X}$ and $\mathbf W_Y(\mathbf r)\in\mathbb C^{d_Y\times d_Y}$ are bounded, uniformly positive-definite Hermitian weights.

The source space enters the trace problem through free-space radiation. Combining the inverse kernel \eqref{eq:free_space_dyadic_inverse}--\eqref{eq:free_space_dyadic_green_definition} with the driven equation \eqref{eq:maxwell_operator_with_source} defines the free-space Green operator $\mathcal G^0$, and $\mathbf E^0:=\mathcal G^0\mathbf q$ is the corresponding radiation field:
\begin{equation}
    \begin{aligned}
        \mathbf E^0(\mathbf r)
         & =
        \mathrm{j}\omega\mu
        \int_{\Xi_{\mathrm{src}}}
        \overline{\overline{\mathbf G}}^0(\mathbf r,\mathbf r')
        \cdot\mathbf J(\mathbf r')\,\mathrm d\mu_{\mathrm{src}}(\mathbf r') \\
         & \quad
        -
        \nabla_{\mathbf r}\times
        \int_{\Xi_{\mathrm{src}}}
        \overline{\overline{\mathbf G}}^0(\mathbf r,\mathbf r')
        \cdot\mathbf M(\mathbf r')\,\mathrm d\mu_{\mathrm{src}}(\mathbf r').
    \end{aligned}
    \label{eq:background_radiation_integral}
\end{equation}
Since $\Xi_{\mathrm{src}}$ is separated from $B_D$, $\mathbf E^0$ is source-free in $B_D$, hence $\mathbf E^0\in\mathcal M_k(B_D)$. Set $\mathbf E^{\mathrm{inc}}:=\mathbf E^0=\mathcal G^0\mathbf q$ and define the source and observation maps by
\begin{align}
    \mathcal S_D
     & : \mathcal X
    \to
    \mathcal H^{\mathrm{reg}}(D),
    \quad
    \mathcal S_D\mathbf q
    =
    \gamma_D^{\mathrm C}\mathcal G^0 \mathbf q ,
    \label{eq:single_body_source_operator} \\
    \mathcal O_D
     & : \mathcal H^{\mathrm{out}}(D)
    \to
    \mathcal Y,
    \quad
    \mathcal O_D
    \left(
    \gamma_D^{\mathrm C}\mathbf E^{\mathrm{sca}}
    \right)
    =
    \left.\mathbf E^{\mathrm{sca}}\right|_{\Xi_{\mathrm{obs}}} ,
    \label{eq:single_body_observation_operator}
\end{align}
Here restriction to $\Xi_{\mathrm{obs}}$ means the selected component in $\mathcal Y$. Strict spatial separation makes the free-space kernels smooth on $\Gamma_D\times\Xi_{\mathrm{src}}$ and on $\Xi_{\mathrm{obs}}$ away from $\Gamma_D$. Hence $\mathcal S_D$ maps bounded sources into a smoother trace class that is compactly embedded in $\mathcal H^{\mathrm{reg}}(D)$. The same smoothing makes $\mathcal O_D$ compact from outgoing traces to $\mathcal Y$. The separated-support regularity and compact-embedding details for both maps are given in Supplementary Note S2. The single-body scattering Green operator is consequently the compact map
\begin{equation}
    \mathcal G_D^{\mathrm{sca}}
    =
    \mathcal O_D
    \mathcal T_D
    \mathcal S_D
    :
    \mathcal X
    \to
    \mathcal Y.
    \label{eq:single_body_scattering_green_operator}
\end{equation}

\subsection{Multiple-Scattering Foldy--Lax Assembly}
\label{subsec:abstract_green_factorization}

\begin{figure*}[t]
    \centering
    \includegraphics{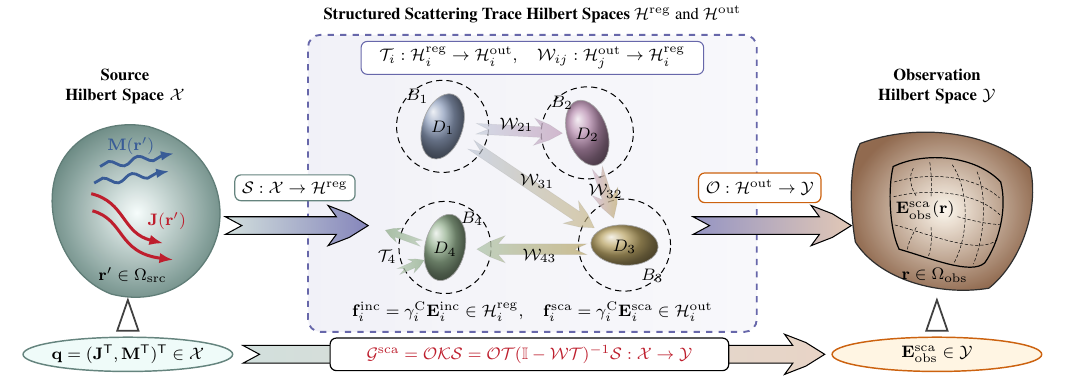}
    \caption{Source-to-observation scattering Green operator factorization. The source map $\mathcal S$ produces local regular Cauchy traces from compact electric and equivalent magnetic currents. The local operators $\mathcal T_i$ and off-diagonal continuations $\mathcal W_{ij}$ form the collective trace response. The observation map $\mathcal O$ reconstructs and superposes the solved outgoing fields on the observation support. The black dashed circles denote the enclosing-sphere boundaries $\Gamma_i=\partial B_i$, on which the Cauchy-trace states $\mathbf f_i^{\mathrm{inc}}$ and $\mathbf f_i^{\mathrm{sca}}$ are defined.}
    \label{fig:abstract_green_operator_diagram}
\end{figure*}

In multiple scattering, each body retains its own trace space, while interbody continuation maps an outgoing trace on one enclosing sphere to a regular incident trace on another. Consider $N$ compact scatterers with data $D_i=\{\Omega_i,\Theta_i\}$. Choose enclosing balls $B_i=B_{R_i}(\mathbf c_i)$ such that $\overline{\Omega}_i\subset B_i$ and the closed balls are pairwise disjoint. Set $\Gamma_i=\partial B_i$, and let $\gamma_i^{\mathrm C}$ denote the trace on $\Gamma_i$. We retain $\mathcal X$ and $\mathcal Y$, now requiring both supports to have positive distance from every $\overline{B}_i$. With $\mathcal H_i^{\mathrm{reg}}:=\mathcal H^{\mathrm{reg}}(D_i)$ and $\mathcal H_i^{\mathrm{out}}:=\mathcal H^{\mathrm{out}}(D_i)$, define the product Hilbert spaces with direct-sum inner products:
\begin{equation}
    \mathcal H^{\mathrm{reg}}
    =
    \bigoplus_{i=1}^{N}\mathcal H_i^{\mathrm{reg}},
    \quad
    \mathcal H^{\mathrm{out}}
    =
    \bigoplus_{i=1}^{N}\mathcal H_i^{\mathrm{out}}.
    \label{eq:abstract_product_spaces}
\end{equation}
The isolated responses form the bounded block-diagonal operator
\begin{equation}
    \mathcal T
    =
    \bigoplus_{i=1}^{N}
    \mathcal T_i
    :
    \mathcal H^{\mathrm{reg}}
    \to
    \mathcal H^{\mathrm{out}}.
    \label{eq:abstract_block_t_operator}
\end{equation}

For $i\neq j$ and $\gamma_j^{\mathrm C}\mathbf E_j\in\mathcal H_j^{\mathrm{out}}$, the represented outgoing field $\mathbf E_j$ is source-free in $B_i$. Define
\begin{equation}
    \mathcal W_{ij}
    : \mathcal H_j^{\mathrm{out}}
    \to
    \mathcal H_i^{\mathrm{reg}},
    \quad
    \mathcal W_{ij}
    (\gamma_j^{\mathrm C}\mathbf E_j)
    :=
    \gamma_i^{\mathrm C}
    \mathbf E_j.
    \label{eq:abstract_pair_translation_action}
\end{equation}
Set $\mathcal W_{ii}=0$, because self-interaction is already contained in $\mathcal T_i$. The blocks assemble into the interbody coupling operator
\begin{equation}
    \mathcal W
    =
    \left[
        \mathcal W_{ij}
        \right]_{i,j=1}^{N}
    :
    \mathcal H^{\mathrm{out}}
    \to
    \mathcal H^{\mathrm{reg}}.
    \label{eq:abstract_block_translation_action}
\end{equation}
Because the trace spheres are separated for $i\neq j$, the outgoing representative from $\Gamma_j$ is smooth in a neighborhood of $\Gamma_i$. Therefore $\mathcal W_{ij}$ maps bounded sets into a smoother trace class whose embedding in $\mathcal H_i^{\mathrm{reg}}$ is compact. Hence $\mathcal W_{ij}$ is compact, and the finite block operator $\mathcal W$ is compact. The detailed compactness argument is given in Supplementary Note S2.

We next describe how the external source acts on the multiple-scattering system. Its free-space radiation induces regular Cauchy traces on the enclosing spheres, which define the local and stacked source maps
\begin{align}
    \mathcal S_i
     & : \mathcal X
    \to
    \mathcal H_i^{\mathrm{reg}},
    \quad
    \mathcal S_i\mathbf q
    =
    \gamma_i^{\mathrm C}\mathcal G^0\mathbf q,
    \label{eq:abstract_source_projection_local} \\
    \mathcal S
     & : \mathcal X
    \to
    \mathcal H^{\mathrm{reg}},
    \quad
    \mathcal S\mathbf q
    =
    \left(
    \mathcal S_i\mathbf q
    \right)_{i=1}^{N}.
    \label{eq:abstract_source_projection_operator}
\end{align}
Each $\mathcal S_i$ is compact by the single-body argument, so the finite direct sum $\mathcal S$ is compact.

With these pieces fixed, for each body $i$ define the direct-source trace $\mathbf f_i^{\mathrm{exc}}:=\mathcal S_i\mathbf q$, the collective incident trace $\mathbf f_i^{\mathrm{inc}}:=\gamma_i^{\mathrm C}\mathbf E_i^{\mathrm{inc}}$, and the outgoing scattered trace $\mathbf f_i^{\mathrm{sca}}:=\gamma_i^{\mathrm C}\mathbf E_i^{\mathrm{sca}}$. Stack them as $\mathbf f^{\mathrm{exc}}$, $\mathbf f^{\mathrm{inc}}$, and $\mathbf f^{\mathrm{sca}}$. Then $\mathbf f^{\mathrm{exc}},\mathbf f^{\mathrm{inc}}\in\mathcal H^{\mathrm{reg}}$ and $\mathbf f^{\mathrm{sca}}\in\mathcal H^{\mathrm{out}}$. The local response and Foldy--Lax balance are, respectively \cite{foldy1945,lax1951},
\begin{align}
    \mathbf f^{\mathrm{sca}}
     & =
    \mathcal T\mathbf f^{\mathrm{inc}},
    \label{eq:abstract_block_local_response} \\
    \mathbf f^{\mathrm{inc}}
     & =
    \mathbf f^{\mathrm{exc}}
    +
    \mathcal W\mathbf f^{\mathrm{sca}}
    =
    \mathcal S\mathbf q
    +
    \mathcal W\mathbf f^{\mathrm{sca}}.
    \label{eq:abstract_component_foldy_lax}
\end{align}
Eliminating $\mathbf f^{\mathrm{sca}}$ from \eqref{eq:abstract_block_local_response} and \eqref{eq:abstract_component_foldy_lax} gives
\begin{equation}
    \left(\mathbb I-\mathcal W\mathcal T\right)
    \mathbf f^{\mathrm{inc}}
    =
    \mathcal S\mathbf q,
    \label{eq:abstract_incident_system}
\end{equation}
where $\mathbb I$ is the identity on $\mathcal H^{\mathrm{reg}}$. Let $\mathcal A:=\mathcal W\mathcal T$. Since $\mathcal W$ is compact and $\mathcal T$ is bounded, $\mathcal A$ is compact, so $\mathbb I-\mathcal A$ is Fredholm of index zero \cite{kato1995}. Assume the homogeneous radiating multiple-scattering Maxwell problem is unique at the frequency considered \cite{nedelec2001,colton2013}. A kernel element reconstructs a homogeneous radiating multiple-scattering solution: collective uniqueness makes the superposed field zero, bodywise outgoing uniqueness then makes every scattered field and trace vanish, and the homogeneous Foldy--Lax balance makes every incident trace vanish. Thus $\ker(\mathbb I-\mathcal A)=\{\mathbf0\}$, and the Fredholm alternative and the bounded inverse theorem make $\mathbb I-\mathcal A$ boundedly invertible. The detailed field-reconstruction and kernel arguments are given in Supplementary Note S3. Define the collective resolvent and cluster transition by
\begin{equation}
    \mathcal R
    :=
    \left(\mathbb I-\mathcal W\mathcal T\right)^{-1},
    \quad
    \mathcal K
    :=
    \mathcal T\mathcal R
    =
    \mathcal T\left(\mathbb I-\mathcal W\mathcal T\right)^{-1}.
    \label{eq:abstract_collective_resolvent_transition}
\end{equation}
Here $\mathcal R:\mathcal H^{\mathrm{reg}}\to\mathcal H^{\mathrm{reg}}$ and $\mathcal K:\mathcal H^{\mathrm{reg}}\to\mathcal H^{\mathrm{out}}$ are bounded, and the solved scattered trace is $\mathbf f^{\mathrm{sca}}=\mathcal K\mathcal S\mathbf q$.

Finally, the observation map continues and superposes the local outgoing fields on $\Xi_{\mathrm{obs}}$:
\begin{equation}
    \mathcal O
    \colon \mathcal H^{\mathrm{out}}
    \to
    \mathcal Y,
    \quad
    \mathcal O\mathbf f^{\mathrm{sca}}
    =
    \sum_{i=1}^{N}
    \left.\mathbf E_i^{\mathrm{sca}}\right|_{\Xi_{\mathrm{obs}}}
    =:
    \mathbf E_{\mathrm{obs}}^{\mathrm{sca}}.
    \label{eq:abstract_observation_operator}
\end{equation}
The observation--sphere separations make $\mathcal O$ compact by the same smooth-continuation and compact-embedding argument used for $\mathcal O_D$. Applying $\mathcal O$ to the solved scattered trace gives
\begin{equation}
    \mathcal G^{\mathrm{sca}}
    :=
    \mathcal O\mathcal K
    \mathcal S
    :
    \mathcal X
    \to
    \mathcal Y,
    \quad
    \mathcal G^{\mathrm{sca}}\mathbf q
    =
    \mathbf E_{\mathrm{obs}}^{\mathrm{sca}}.
    \label{eq:abstract_scattering_green_operator}
\end{equation}
Because $\mathcal K$ is bounded and $\mathcal S$ and $\mathcal O$ are compact, $\mathcal G^{\mathrm{sca}}$ is compact. Equivalently, $\mathcal G^{\mathrm{sca}}=\mathcal O\mathcal T(\mathbb I-\mathcal W\mathcal T)^{-1}\mathcal S$, which separates source accessibility, collective scattering, and observation visibility. Fig.~\ref{fig:abstract_green_operator_diagram} summarizes the resulting flow through the external and trace spaces.

The following proposition identifies this Hilbert-space factorization with the physical solution of the Maxwell multiple-scattering problem.

\begin{proposition}[Exactness of the trace assembly]
    \label{prop:exact_trace_factorization_equivalence}
    Under the isolated-body and collective uniqueness assumptions above, for every $\mathbf q\in\mathcal X$, the solutions of \eqref{eq:abstract_block_local_response}--\eqref{eq:abstract_component_foldy_lax} are exactly the Cauchy traces induced by the physical multiple-scattering solution. Consequently, \eqref{eq:abstract_scattering_green_operator} returns the observed component of the physical scattered field on $\Xi_{\mathrm{obs}}$.
\end{proposition}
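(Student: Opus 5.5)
The plan is to prove the two directions of the correspondence separately: physical solution $\Rightarrow$ trace solution, and trace solution $\Rightarrow$ physical solution. The second direction is made unique by the bounded invertibility of $\mathbb I-\mathcal W\mathcal T$ already established above. Fix $\mathbf q\in\mathcal X$ and let $\mathbf E^{\mathrm{tot}}=\mathcal G^0\mathbf q+\mathbf E^{\mathrm{sca}}$ be the physical solution. It satisfies the material and transmission conditions $\Theta_i$ on every $\Omega_i$, solves the source-free equation $\mathcal L_k\mathbf E^{\mathrm{sca}}=\mathbf 0$ outside $\bigcup_i\overline\Omega_i$, and is radiating. Its existence follows from the Fredholm setting, and its uniqueness from the collective uniqueness assumption.

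\emph{Decompose the physical scattered field bodywise.} I will split $\mathbf E^{\mathrm{sca}}=\sum_i\mathbf E_i^{\mathrm{sca}}$, with each $\mathbf E_i^{\mathrm{sca}}$ outgoing and source-free in $\mathbb R^3\setminus\overline B_i$. The natural construction is the Stratton--Chu representation on the boundaries $\partial\Omega_i$, or on slightly enlarged surfaces inside $B_i$. Writing $\mathbf E^{\mathrm{sca}}$ outside $\bigcup_i\Omega_i$ as a sum of exterior boundary potentials gives one radiating piece per body. Each piece is smooth outside $\overline\Omega_i$, so its trace lies in $\mathcal H_i^{\mathrm{out}}$. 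Then I define
\[
\mathbf E_i^{\mathrm{inc}}:=\mathcal G^0\mathbf q+\sum_{j\ne i}\mathbf E_j^{\mathrm{sca}},
\]
which is source-free in $B_i$ because $B_i$ is disjoint from $\Xi_{\mathrm{src}}$ and from every $\overline B_j$. By construction, $\mathbf E_i^{\mathrm{inc}}+\mathbf E_i^{\mathrm{sca}}=\mathbf E^{\mathrm{tot}}$ near $\Omega_i$, so this pair satisfies $\Theta_i$. The definition of $\mathcal T_i$, together with isolated-body well-posedness (unique outgoing response), then gives $\mathbf f_i^{\mathrm{sca}}=\mathcal T_i\mathbf f_i^{\mathrm{inc}}$. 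Taking traces on $\Gamma_i$ and using the definitions of $\mathcal S_i$ and $\mathcal W_{ij}$ yields $\mathbf f_i^{\mathrm{inc}}=\mathcal S_i\mathbf q+\sum_{j\neq i}\mathcal W_{ij}\mathbf f_j^{\mathrm{sca}}$. So the physical traces solve the local-response and Foldy--Lax equations.

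\emph{Uniqueness and the observation formula.} Because $\mathbb I-\mathcal W\mathcal T$ is injective, the incident-trace system has exactly one solution, and hence the trace system has exactly one solution. This solution therefore coincides with the physical traces. Conversely, any solution of the trace system equals this one, which proves the "exactly" claim. The observation statement then follows. Injectivity of the Cauchy trace on outgoing fields means that $\mathbf f_i^{\mathrm{sca}}$ determines the radiating field $\mathbf E_i^{\mathrm{sca}}$ outside $\overline B_i$. Since $\Xi_{\mathrm{obs}}$ lies at positive distance from every $\overline B_i$, we obtain $\mathcal O\mathbf f^{\mathrm{sca}}=\sum_i\mathbf E_i^{\mathrm{sca}}|_{\Xi_{\mathrm{obs}}}=\mathbf E^{\mathrm{sca}}|_{\Xi_{\mathrm{obs}}}$. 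Combined with $\mathbf f^{\mathrm{sca}}=\mathcal K\mathcal S\mathbf q$, this gives $\mathcal G^{\mathrm{sca}}\mathbf q=\mathbf E_{\mathrm{obs}}^{\mathrm{sca}}$.

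\emph{Main obstacle.} I expect the bodywise decomposition to be the hardest step. Each $\mathbf E_i^{\mathrm{sca}}$ must be genuinely outgoing and source-free on all of $\mathbb R^3\setminus\overline B_i$, not merely outside $\Omega_i$, and the field representing the trace must be the one $\mathcal T_i$ refers to. I will handle this with the Stratton--Chu representation, supported on surfaces $\Sigma_i$ with $\overline\Omega_i\subset\Sigma_i\subset B_i$. The point to verify is that the exterior representation on $\mathbb R^3\setminus\overline B_i$ agrees with the outgoing representative attached to $\mathbf f_i^{\mathrm{sca}}$ in Supplementary Note S1, which holds by radiating uniqueness. As a cross-check, I will use the reverse direction: from a trace solution, build the superposed field and verify the transmission conditions and the radiation condition. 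Collective uniqueness then identifies this field with the physical one, with the detailed reconstruction being the argument deferred to Supplementary Note S3.
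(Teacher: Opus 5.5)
Your main line (physical solution $\Rightarrow$ trace solution) starts from a premise the hypotheses do not supply: existence of the driven physical multiple-scattering solution. The paper assumes only isolated-body well-posedness and \emph{uniqueness} for the homogeneous collective problem. The Fredholm alternative is applied to the trace operator $\mathbb I-\mathcal W\mathcal T$, not to the material problem, and nothing in the hypotheses gives you a Fredholm theory for the latter. Invertibility of $\mathbb I-\mathcal W\mathcal T$ yields a physical solution only through the reconstruction you list as a ``cross-check,'' so that step has to be the proof, not a check on it.

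That reconstruction is exactly the paper's argument:
\begin{enumerate}
\item Take $\mathbf f^{\mathrm{inc}}=\mathcal R\mathcal S\mathbf q$ and $\mathbf f^{\mathrm{sca}}=\mathcal T\mathbf f^{\mathrm{inc}}$, and reconstruct their unique regular and outgoing representatives.
\item By injectivity of the regular Cauchy trace, upgrade the $i$th Foldy--Lax trace identity to the field identity $\mathbf E_i^{\mathrm{inc}}=\mathbf E^0+\sum_{j\ne i}\mathbf E_j^{\mathrm{sca}}$ in $B_i$.
\item By exterior trace injectivity and the definition of $\mathcal T_i$, identify each reconstructed outgoing field on $\mathbb R^3\setminus\overline B_i$ with the isolated-body response to $\mathbf E_i^{\mathrm{inc}}$. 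This continues it to $\partial\Omega_i$ with the local conditions satisfied.
\item The superposition is then a radiating solution of the driven problem, and collective uniqueness identifies it with the physical one.
\end{enumerate}
Existence and identification come together, and no Stratton--Chu decomposition is needed. To make the ``induced'' traces unambiguous, the paper uses a short lemma: if outgoing fields $\mathbf E_i$ on $\mathbb R^3\setminus\overline B_i$ sum to zero outside all balls, each $\mathbf E_i$ glues with $-\sum_{j\ne i}\mathbf E_j$ to an entire radiating solution and vanishes by Rellich's lemma and unique continuation. Once existence is secured, your forward argument is a valid substitute for that lemma, since every bodywise decomposition then solves the uniquely solvable trace system.

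Separately, suppose $\Sigma_i$ strictly encloses $\overline\Omega_i$. Then your claim $\mathbf E_i^{\mathrm{inc}}+\mathbf E_i^{\mathrm{sca}}=\mathbf E^{\mathrm{tot}}$ near $\Omega_i$ fails for the $\Sigma_i$-potential itself. Inside the surfaces the Stratton--Chu sum vanishes (extinction), so there $\mathbf E_i^{\mathrm{inc}}+\mathbf E_i^{\mathrm{sca}}=\mathbf E^0$. You must first continue $\mathbf E_i^{\mathrm{sca}}$ inward by $\mathbf E^{\mathrm{sca}}-\sum_{j\ne i}\mathbf E_j^{\mathrm{sca}}$, which agrees with the potential between $\Sigma_i$ and $\Gamma_i$, and only then invoke isolated-body uniqueness. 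Relatedly, the check you name in your last paragraph is tautological: $\mathbf f_i^{\mathrm{sca}}$ is \emph{defined} as the trace of your piece, so of course it agrees with that representative. What actually needs verifying is that the glued field is the $\mathcal T_i$-response to $\mathbf E_i^{\mathrm{inc}}$.
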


\begin{proof}
    The unique trace solution reconstructs regular incident fields and radiating bodywise scattered fields. The Foldy--Lax balance identifies each incident field with the free-space excitation plus the fields scattered by the other bodies, while the local transition relations enforce the prescribed material or boundary conditions. Their superposition is therefore the unique physical scattering solution; bodywise outgoing uniqueness identifies its scattered components, after which the Foldy--Lax balance identifies the incident traces. A complete proof is given in Supplementary Note S3.
\end{proof}

If the free-space source-to-observation map $\mathcal G^0:\mathcal X\to\mathcal Y$ is bounded, the corresponding total-field operator is simply
\begin{equation}
    \mathcal G^{\mathrm{tot}}
    =
    \mathcal G^0
    +
    \mathcal G^{\mathrm{sca}}.
    \label{eq:green_operator_additive_split}
\end{equation}

\section{Finite VSWF Realization of the Scattering Green Operator}
\label{sec:vswf_modal_realization}

Section~\ref{sec:abstract_green_operator} leaves the untruncated Green operator in an infinite-dimensional trace form. A numerical realization should approximate only this internal representation, without changing the prescribed source and observation spaces or their metrics. VSWFs provide nested modal coordinates on the enclosing spheres, making truncation explicit and independent of external sampling. They therefore yield a sequence of finite source-to-observation realizations that can be compared with the same continuous Maxwell operator and evaluated under the same external metrics.

\subsection{Local VSWF Coordinates}
\label{subsec:vswf}

At each local center, regular and outgoing VSWFs represent incident and scattered traces, respectively. With $\mathbf r$ measured from the local center, we use the solenoidal convention of \cite{doicu2006}. The real functions $\widehat P_l^{|m|}$ are normalized by $\int_{-1}^{1}|\widehat P_l^{|m|}(\mu)|^2\,\mathrm d\mu=1$, and appear through the angular factor $\widehat P_l^{|m|}(\cos\theta)\mathrm{e}^{\mathrm{j}m\phi}$. The radial branches are $z_l^{\mathrm{reg}}=j_l$ and $z_l^{\mathrm{out}}=h_l^{(1)}$, where $l\geq1$ and $-l\leq m\leq l$. For $s\in\{\mathrm{reg},\mathrm{out}\}$,
\begin{align}
    \mathbf U_{lm}^{s}(\mathbf r)
     & =
    \frac{1}{\sqrt{2l(l+1)}}\nabla\times
    \left[\mathbf r\,z_l^{s}(kr)\widehat P_l^{|m|}(\cos\theta)\mathrm{e}^{\mathrm{j}m\phi}\right],
    \label{eq:vswf_u} \\
    \mathbf V_{lm}^{s}(\mathbf r)
     & =
    \frac{1}{k}\nabla\times\mathbf U_{lm}^{s}(\mathbf r).
    \label{eq:vswf_v}
\end{align}
The longitudinal family is absent for source-free homogeneous fields. Write $\bm{\Phi}_{1lm}^{s}=\mathbf U_{lm}^{s}$ and $\bm{\Phi}_{2lm}^{s}=\mathbf V_{lm}^{s}$, and use $\alpha=(\tau,l,m)$ for modal labels:
\begin{align}
    \Lambda
     & =
    \{(\tau,l,m):\tau=1,2,\ l\geq1,\ -l\leq m\leq l\},
    \label{eq:physical_modal_set} \\
    \Lambda_L
     & =
    \{\alpha=(\tau,l,m)\in\Lambda:l\leq L\}.
    \label{eq:truncated_modal_set}
\end{align}
Set $\bm{\Phi}_{\alpha}^{s}=\bm{\Phi}_{\tau lm}^{s}$ for $\alpha=(\tau,l,m)$, and define
\begin{equation}
    \overline\alpha=(\tau,l,-m),
    \quad
    \alpha^\star=(\varsigma(\tau),l,m),
    \label{eq:modal_label_involutions}
\end{equation}
where $\varsigma(1)=2$ and $\varsigma(2)=1$. The involutions commute and preserve $\Lambda_L$. Infinite series are limits of partial sums over $\Lambda_L$. Finite vectors use the polarization-major map
\begin{align}
    u=\xi_L(\tau,l,m)
     & =
    (\tau-1)L(L+2)+l(l+1)+m,
    \label{eq:vswf_single_index} \\
    U_L
     & =
    2L(L+2),
    \label{eq:vswf_total_modes}
\end{align}
for $(\tau,l,m)\in\Lambda_L$, so $u=1,\ldots,U_L$. For finite-modal label $u$, we write $\bm{\Phi}_u^{s}=\bm{\Phi}_{\xi_L^{-1}(u)}^{s}$. If $\xi_L^{-1}(u)=\alpha$, then $\overline u=\xi_L(\overline\alpha)$ and $u^\star=\xi_L(\alpha^\star)$. The truncated expansions are
\begin{equation}
    \mathbf E_L^{\mathrm{inc}}(\mathbf r)
    =
    \sum_{u=1}^{U_L}a_u^{\mathrm{inc}}\bm{\Phi}_u^{\mathrm{reg}}(\mathbf r),
    \quad
    \mathbf E_L^{\mathrm{sca}}(\mathbf r)
    =
    \sum_{u=1}^{U_L}a_u^{\mathrm{sca}}\bm{\Phi}_u^{\mathrm{out}}(\mathbf r).
    \label{eq:vswf_modal_expansion}
\end{equation}
The regular branch is finite at the center, whereas the outgoing branch is singular there and satisfies the radiation condition.

\subsection{VSWF Trace Projections}
\label{subsec:modal_internal_operators}

The coordinate approximation is made on traces, not on arbitrary volume fields. For body $i$, branch $s\in\{\mathrm{reg},\mathrm{out}\}$, and $\alpha\in\Lambda$, trace the local basis field by setting
$\bm\phi_{i,\alpha}^{s}:=\gamma_i^{\mathrm C}[\bm\Phi_\alpha^{s}(\,\cdot-\mathbf c_i)]$ and define the retained synthesis map
\begin{equation}
    P_{i,L}^{s}:\mathbb C^{U_L}\to\mathcal H_i^s,
    \quad
    P_{i,L}^{s}\mathbf a
    :=
    \sum_{u=1}^{U_L}
    a_u\bm\phi_{i,\xi_L^{-1}(u)}^{s}.
    \label{eq:local_modal_trace_synthesis}
\end{equation}
Thus $P_{i,L}^s$ maps raw, trace-unnormalized VSWF amplitudes to Cauchy traces.

We now make explicit the fixed trace inner product referred to in Section~\ref{subsec:fixed_sphere_trace_setting}. On each sphere $\Gamma_i$, it is the VSH-diagonal product inner product, normalized by $1/(2\eta)$, which induces the standard $H^{-1/2}(\operatorname{div},\Gamma_i)^2$ topology \cite{kristensson2020,buffa2002traces}. The VSWF traces form a complete orthogonal family in this metric \cite{kristensson2020,chew1995,doicu2006}. Equipping $\mathbb C^{U_L}$ with the Euclidean inner product, define
\begin{equation}
    \mathbf G_{i,L}^{s}
    :=
    (P_{i,L}^{s})^\dagger P_{i,L}^{s}
    \in \mathbb{C}^{U_L\times U_L}.
    \label{eq:local_modal_trace_gram}
\end{equation}
For $u=\xi_L(\tau,l,m)$, set $x_i:=kR_i$, $d_l^s(x):=[xz_l^s(x)]'/x$, and $w_{l,i}^{\pm}:=(1+l(l+1)/x_i^2)^{\pm1/2}$. The Cauchy-trace formulas in this chosen metric give, for $u'=1,\ldots,U_L$,
\begin{equation}
    [\mathbf G_{i,L}^{s}]_{u,u'}
    =
    \delta_{u u'}
    \frac{\pi R_i^2}{2\eta}
    \left[
        w_{l,i}^{+}|z_l^s(x_i)|^2
        +
        w_{l,i}^{-}|d_l^s(x_i)|^2
        \right].
    \label{eq:trace_gram_entries}
\end{equation}
Equation~\eqref{eq:trace_gram_entries} is the Gram matrix obtained by restricting the fixed infinite-dimensional trace metric to the retained raw VSWF traces. Each diagonal weight is positive and depends on the mode through $l$ and on $R_i$, $k$, and $s$, but not on the ambient truncation order $L$. Hence $\mathbf G_{i,L}^s$ is Hermitian positive definite, and the trace analysis map is
\begin{equation}
    Q_{i,L}^{s}
    :=
    (\mathbf G_{i,L}^{s})^{-1}(P_{i,L}^{s})^\dagger
    :
    \mathcal H_i^s \to \mathbb C^{U_L}.
    \label{eq:local_modal_trace_analysis}
\end{equation}
It satisfies $Q_{i,L}^sP_{i,L}^s=\mathbf I$, while $P_{i,L}^sQ_{i,L}^s$ is the orthogonal projection onto $\operatorname{ran}P_{i,L}^s$. If a trace $\mathbf f\in\mathcal H_i^s$ is expanded as
$\mathbf f=\sum_{\alpha\in\Lambda}a_\alpha\bm\phi_{i,\alpha}^s$, then
\begin{equation}
    [Q_{i,L}^s\mathbf f]_u
    =a_{\xi_L^{-1}(u)},
    \quad u=1,\ldots,U_L.
    \label{eq:raw_modal_coefficient_extraction}
\end{equation}

Taking direct sums gives
\begin{equation}
    P_L^s
    =\bigoplus_{i=1}^{N}P_{i,L}^s,
    \quad
    Q_L^s
    =\bigoplus_{i=1}^{N}Q_{i,L}^s,
    \quad
    \Pi_L^s
    =P_L^sQ_L^s .
    \label{eq:stacked_modal_projection_operators}
\end{equation}
Thus $Q_L^sP_L^s=\mathbf I$, and $\Pi_L^s:\mathcal H^s\to\mathcal H^s$ is the orthogonal projection onto $\operatorname{ran}P_L^s$. The sets $\Lambda_L$ make these ranges nested, and VSWF completeness makes their union dense, so
\begin{equation}
    \|\Pi_L^s\|=1,
    \quad
    \Pi_L^s
    \xrightarrow[L\to\infty]{\mathrm{strong}}
    \mathbb I .
    \label{eq:modal_projection_stability}
\end{equation}
The VSH metric, the trace calculation leading to \eqref{eq:trace_gram_entries}, and the completeness and projection arguments underlying \eqref{eq:modal_projection_stability} are detailed in Supplementary Note S4.

\subsection{Finite-Modal Scattering Green Operator}
\label{subsec:finite_modal_scattering_operator}

We now replace each trace-space factor by retained coordinates. The projected local T-matrix is
\begin{equation}
    \mathbf T_i^{(L)}
    =
    Q_{i,L}^{\mathrm{out}}
    \mathcal T_i
    P_{i,L}^{\mathrm{reg}}
    \in \mathbb C^{U_L\times U_L}.
    \label{eq:vswf_t_operator_specialization}
\end{equation}
Its columns are obtained from retained regular-VSWF excitations and outgoing extraction or exterior-field fitting. The isolated-body solve may be analytic, numerical, or externally supplied \cite{ganesh2010,harrington1968,taflove2005,jin2014}. Extraction, fitting, or discretization errors in a supplied T-matrix are separate numerical errors.

All local VSWF frames are taken parallel to the global frame. If a body-fixed frame is used, the local T-matrix is first rotated into this convention. For translation from body $j$ to $i$, with $\mathbf d=\mathbf c_i-\mathbf c_j$, the addition theorem gives \cite{cruzan1962,egel2017}
\begin{equation}
    \bm{\Phi}_{\alpha'}^{\mathrm{out}}(\mathbf r+\mathbf d)
    =
    \lim_{L\to\infty}
    \sum_{\alpha\in\Lambda_L}
    A_{\alpha',\alpha}^{\mathrm{out}}(\mathbf d)
    \bm{\Phi}_{\alpha}^{\mathrm{reg}}(\mathbf r),
    \quad
    \alpha'\in\Lambda.
    \label{eq:vswf_addition_theorem}
\end{equation}
Within its convergence domain, the block from body $j$ to $i$ is
\begin{equation}
    \left[\mathbf W_{ij}^{(L)}\right]_{u,u'}
    =
    A_{\xi_L^{-1}(u'),\xi_L^{-1}(u)}^{\mathrm{out}}(\mathbf c_i-\mathbf c_j),
    \quad
    i\neq j,
    \label{eq:pair_translation_matrix_entries}
\end{equation}
and $\mathbf W_{ii}^{(L)}=\mathbf 0$, equivalently $\mathbf W_{ij}^{(L)}=Q_{i,L}^{\mathrm{reg}}\mathcal W_{ij}P_{j,L}^{\mathrm{out}}$. Here $u'$ is outgoing at $\mathbf c_j$ and $u$ regular at $\mathbf c_i$. The condition $R_i+R_j<|\mathbf c_i-\mathbf c_j|$ suffices for convergence.

Accordingly, the assembled internal matrices are
\begin{equation}
    \mathbf T_L
    =
    \bigoplus_{i=1}^{N}\mathbf T_i^{(L)},
    \quad
    [\mathbf W_L]_{ij}
    =
    \begin{cases}
        \mathbf 0,            & i=j,     \\
        \mathbf W_{ij}^{(L)}, & i\neq j,
    \end{cases}.
    \label{eq:block_diagonal_t_matrix}
\end{equation}
With $N_L:=NU_L$, both $\mathbf T_L$ and $\mathbf W_L$ belong to $\mathbb C^{N_L\times N_L}$. Let $\mathbf a^{\mathrm{exc}},\mathbf a^{\mathrm{inc}},\mathbf a^{\mathrm{sca}}\in\mathbb C^{N_L}$ denote the stacked external-excitation, collective-incident, and scattered coefficients. Projecting the abstract balance onto the retained modal subspaces gives
\begin{equation}
    \left(\mathbf I-\mathbf W_L\mathbf T_L\right)\mathbf a^{\mathrm{inc}}
    =
    \mathbf a^{\mathrm{exc}},
    \quad
    \mathbf a^{\mathrm{sca}}=\mathbf T_L\mathbf a^{\mathrm{inc}}.
    \label{eq:global_foldy_lax_system}
\end{equation}
Assume the retained system is nonsingular at the order considered. Then
\begin{equation}
    \mathbf R_L
    :=
    \left(\mathbf I-\mathbf W_L\mathbf T_L\right)^{-1},
    \quad
    \mathbf K_L
    :=
    \mathbf T_L\mathbf R_L.
    \label{eq:finite_internal_resolvent_transition}
\end{equation}
Thus $\mathbf a^{\mathrm{sca}}=\mathbf K_L\mathbf a^{\mathrm{exc}}$; for a fixed scene and frequency, $\mathbf K_L$ is independent of the source. Its lifted transition is
\begin{equation}
    \widehat{\mathcal K}_L
    :=
    P_L^{\mathrm{out}}\mathbf K_LQ_L^{\mathrm{reg}}.
    \label{eq:lifted_finite_transition}
\end{equation}

Next, we consider the external excitation field generated by $\mathbf q\in\mathcal X$. Since $\Xi_{\mathrm{src}}$ is separated from $B_i$, this field has the regular expansion about $\mathbf c_i$:
\begin{equation}
    \mathbf E^0(\mathbf c_i+\mathbf r)
    =
    \lim_{L\to\infty}
    \sum_{\alpha\in\Lambda_L}
    a_{i,\alpha}^{0}
    \bm{\Phi}_{\alpha}^{\mathrm{reg}}(\mathbf r),
    \quad
    |\mathbf r|<R_i.
    \label{eq:local_external_excitation_expansion}
\end{equation}
The retained source map $\mathcal S_L:=Q_L^{\mathrm{reg}}\mathcal S$ records the finite coefficient vector of this direct excitation. The following lemma gives these coefficients as source pairings.

\begin{lemma}[Source-to-modal excitation]
    \label{lem:continuous_source_modal_excitation}
    Let $\mathbf q\in\mathcal X$ have components $\mathbf J$ and $\mathbf M$. Assume $R_i<\operatorname{dist}(\mathbf c_i,\Xi_{\mathrm{src}})$ for each body. The exact background coefficients in \eqref{eq:local_external_excitation_expansion} are, for $\alpha\in\Lambda$,
    \begin{equation}
        \begin{aligned}
            a_{i,\alpha}^{0}
             & =
            -\frac{\omega\mu k}{\pi}
            \int_{\Xi_{\mathrm{src}}}
            \bm{\Phi}_{\overline\alpha}^{\mathrm{out}}(\mathbf r'-\mathbf c_i)
            \cdot\mathbf J(\mathbf r')\,\mathrm d\mu_{\mathrm{src}}(\mathbf r') \\
             & \quad
            -
            \frac{\mathrm{j}k^2}{\pi}
            \int_{\Xi_{\mathrm{src}}}
            \bm{\Phi}_{\overline{\alpha^\star}}^{\mathrm{out}}(\mathbf r'-\mathbf c_i)
            \cdot\mathbf M(\mathbf r')\,\mathrm d\mu_{\mathrm{src}}(\mathbf r').
        \end{aligned}
        \label{eq:continuous_source_modal_projection}
    \end{equation}
    Retaining $l\leq L$ gives $\mathcal S_L \mathbf q=\mathbf a^{\mathrm{exc}}\in\mathbb C^{N_L}$, where
    $a_{i,u}^{\mathrm{exc}}:=a_{i,\xi_L^{-1}(u)}^{0}$ for $u=1,\ldots,U_L$.
\end{lemma}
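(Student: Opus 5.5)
The argument substitutes the bilinear VSWF expansion of the free-space dyadic Green's function into the radiation integral \eqref{eq:background_radiation_integral} and then reads off the regular coefficients. Fix body $i$ and translate coordinates so that $\mathbf c_i$ is the origin. For $|\mathbf r|<|\mathbf r'|$, which holds for every $\mathbf r\in B_i$ and $\mathbf r'\in\Xi_{\mathrm{src}}$ under the hypothesis $R_i<\operatorname{dist}(\mathbf c_i,\Xi_{\mathrm{src}})$, I will use the outgoing-regular expansion of $\overline{\overline{\mathbf G}}^0$ away from the source point. In the solenoidal normalization of \cite{doicu2006} it should take the form
\begin{equation*}
\overline{\overline{\mathbf G}}^0(\mathbf r,\mathbf r')
= \frac{\mathrm{j}k}{\pi}\sum_{\alpha\in\Lambda}
\bm\Phi_\alpha^{\mathrm{reg}}(\mathbf r)\,
\bm\Phi_{\overline\alpha}^{\mathrm{out}}(\mathbf r'),
\qquad |\mathbf r|<|\mathbf r'|.
\end{equation*}
The longitudinal term is absent because it is supported only where $\mathbf r$ meets the source. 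The conjugated label $\overline\alpha$ arises from the identity $\widehat P_l^{|m|}\mathrm e^{\mathrm jm\phi}\,\widehat P_l^{|m|}\mathrm e^{-\mathrm jm\phi'}$ in the addition theorem for the scalar Green's function. The prefactor $\mathrm jk/\pi$ comes from $\mathrm jk\,j_l h_l^{(1)}$ combined with the normalizations $1/\sqrt{2l(l+1)}$ and $\int|\widehat P|^2=1$, which yield $1/(2\pi)\cdot 2=1/\pi$ after summing both polarizations. I will verify this constant by checking one case, $l=1$ with a point dipole.

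With this expansion in hand, the $\mathbf J$ term is direct. Multiplying by $\mathrm j\omega\mu$ and exchanging sum and integral gives the coefficient $\mathrm j\omega\mu\cdot\mathrm jk/\pi=-\omega\mu k/\pi$ in front of $\int\bm\Phi^{\mathrm{out}}_{\overline\alpha}\cdot\mathbf J$, which is the first line of \eqref{eq:continuous_source_modal_projection}.

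For the $\mathbf M$ term I will move the curl from $\mathbf r$ onto the regular factor. The operator $\nabla_{\mathbf r}\times$ acts only on $\bm\Phi^{\mathrm{reg}}_\alpha(\mathbf r)$, and \eqref{eq:vswf_v} together with $\nabla\times\nabla\times\mathbf U=k^2\mathbf U$ gives $\nabla\times\bm\Phi^{\mathrm{reg}}_\alpha=k\bm\Phi^{\mathrm{reg}}_{\alpha^\star}$. Reindexing $\alpha\mapsto\alpha^\star$, using the commuting involutions, turns the paired outgoing function into $\bm\Phi^{\mathrm{out}}_{\overline{\alpha^\star}}$ with coefficient $-k\cdot\mathrm jk/\pi=-\mathrm jk^2/\pi$. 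This is the second line of \eqref{eq:continuous_source_modal_projection}.

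The analytic justification needs uniform convergence of the bilinear series on compact subsets of $\{|\mathbf r|<\rho<|\mathbf r'|\}$, along with its first derivatives. For this I will use the large-$l$ asymptotics $j_l(x)h_l^{(1)}(y)\sim(x/y)^l/((2l+1)x\cdot\ldots)$ and polynomial bounds on the normalized Legendre functions. The series terms then decay geometrically in $(|\mathbf r|/|\mathbf r'|)^l$, uniformly over $\mathbf r'\in\Xi_{\mathrm{src}}$ and over $|\mathbf r|\le R_i'$ for some $R_i'$ with $R_i<R_i'<\operatorname{dist}(\mathbf c_i,\Xi_{\mathrm{src}})$. Since $\mathbf q\in L^2$ lies in $L^1$ on the bounded support, dominated convergence permits interchanging sum and integral. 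Uniqueness of regular VSWF expansions, via the orthogonality of traces on $\Gamma_i$ and \eqref{eq:raw_modal_coefficient_extraction}, then identifies these numbers with $a^0_{i,\alpha}$. The final claim follows from the definition $\mathcal S_L=Q_L^{\mathrm{reg}}\mathcal S$, applied blockwise with $\xi_L$.

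The main obstacle is fixing the exact normalization constant and the $\overline\alpha$ placement in the dyadic expansion under this particular convention. I will settle it by a direct single-mode check instead of quoting the literature.
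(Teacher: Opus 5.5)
Your proposal is correct and follows essentially the same route as the paper's Appendix~A. You substitute the bilinear expansion $\overline{\overline{\mathbf G}}^0=\frac{\mathrm jk}{\pi}\sum_\alpha\bm\Phi_\alpha^{\mathrm{reg}}\otimes\bm\Phi_{\overline\alpha}^{\mathrm{out}}$, integrate termwise, move the curl onto the regular factor via $\nabla\times\bm\Phi_\alpha^{\mathrm{reg}}=k\bm\Phi_{\alpha^\star}^{\mathrm{reg}}$ and relabel, then identify coefficients by uniqueness and take Cauchy traces to obtain $Q_{i,L}^{\mathrm{reg}}\mathcal S_i\mathbf q$; your geometric-decay argument is just a more explicit version of the uniform convergence the paper cites. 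One small correction to your heuristic for the constant: the factor $2$ in $\frac{1}{2\pi}\cdot 2$ comes from the $2$ in $\sqrt{2l(l+1)}$ relative to the standard $1/(l(l+1))$ weight, so each polarization term separately carries $\mathrm jk/\pi$; it does not come from summing the two polarizations.
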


\begin{proof}
    See Appendix~\ref{app:continuous_source_projection}.
\end{proof}

After $\mathbf K_L$ maps the direct excitation to scattered coefficients, the observation map synthesizes their outgoing field. For $\mathbf r\in\Xi_{\mathrm{obs}}$, the selected observation component of $\mathcal O_L:=\mathcal O P_L^{\mathrm{out}}$ is
\begin{equation}
    \left(\mathcal O_L\mathbf a^{\mathrm{sca}}\right)(\mathbf r)
    =
    \sum_{i=1}^{N}\sum_{u=1}^{U_L}
    a_{i,u}^{\mathrm{sca}}\bm\Phi_u^{\mathrm{out}}(\mathbf r-\mathbf c_i)
    =
    \mathbf E_{\mathrm{obs},L}^{\mathrm{sca}}(\mathbf r).
    \label{eq:global_scattered_field_reconstruction}
\end{equation}
Combining source analysis, modal propagation, and observation synthesis gives the finite-modal scattering operator
\begin{equation}
    \mathcal G_L^{\mathrm{sca}}
    :=
    \mathcal O_L\mathbf K_L\mathcal S_L
    =
    \mathcal O\widehat{\mathcal K}_L\mathcal S
    :\mathcal X\to\mathcal Y,
    \quad
    \mathcal G_L^{\mathrm{sca}}\mathbf q
    =
    \mathbf E_{\mathrm{obs},L}^{\mathrm{sca}}.
    \label{eq:finite_scattering_green_operator}
\end{equation}
Whenever the bounded free-space map in Section~\ref{sec:abstract_green_operator} is available, the total field is $\mathbf E_{\mathrm{obs},L}^{\mathrm{tot}}=\mathbf E_{\mathrm{obs}}^0+\mathbf E_{\mathrm{obs},L}^{\mathrm{sca}}$. Only the scattered contribution is modal-truncated.

\subsection{Finite-Modal Operator Convergence}
\label{subsec:modal_green_convergence}

The finite matrices must approximate the same Maxwell scattering map defined in Section~\ref{sec:abstract_green_operator}, uniformly over the chosen source Hilbert space. By Proposition~\ref{prop:exact_trace_factorization_equivalence}, $\mathcal G^{\mathrm{sca}}$ returns the physical scattered field under the stated uniqueness assumptions. The following theorem establishes operator-norm convergence of the retained VSWF realization and separates the effects of modal tails and collective resolvent sensitivity.

\begin{theorem}[Asymptotic convergence of the modal scattering Green operator]
    \label{thm:modal_green_operator_convergence}
    Under the separated fixed-sphere setting and the collective well-posedness condition above, define
    \begin{equation}
        \mathcal A=\mathcal W\mathcal T,
        \quad
        \mathcal A_L
        =
        \Pi_L^{\mathrm{reg}}\mathcal W
        \Pi_L^{\mathrm{out}}\mathcal T
        \Pi_L^{\mathrm{reg}},
        \quad
        \delta_L=\|\mathcal A_L-\mathcal A\|.
        \label{eq:lifted_collective_coupling}
    \end{equation}
    Then $\delta_L\to0$. Hence, for all sufficiently large $L$, $\|\mathcal R\|\delta_L<1$, $\mathbf I-\mathbf W_L\mathbf T_L$ is invertible, and the lifted resolvent $\mathcal R_L^\sharp=(\mathbb I-\mathcal A_L)^{-1}$ satisfies
    \begin{equation}
        \|\mathcal R_L^\sharp\|
        \leq
        \frac{\|\mathcal R\|}{1-\|\mathcal R\|\delta_L},
        \quad
        \|\mathcal R_L^\sharp-\mathcal R\|
        \leq
        \frac{\|\mathcal R\|^2\delta_L}
        {1-\|\mathcal R\|\delta_L}.
        \label{eq:collective_resolvent_error_bound}
    \end{equation}
    Moreover, $\widehat{\mathcal K}_L\to\mathcal K$ strongly and is uniformly bounded for all sufficiently large $L$. With
    \begin{equation}
        \epsilon_{S,L}
        =\|(\mathbb I-\Pi_L^{\mathrm{reg}})\mathcal S\|,
        \quad
        \epsilon_{O,L}
        =\|\mathcal O(\mathbb I-\Pi_L^{\mathrm{out}})\|,
        \label{eq:source_observation_modal_tails}
    \end{equation}
    both tails tend to zero. Define the structural finite-$L$ majorant
    \begin{equation}
        \begin{aligned}
            E_L
            :={} &
            \epsilon_{O,L}\|\mathcal T\|\|\mathcal R_L^\sharp\|\|\mathcal S\| \\
                 & +\|\mathcal O\|\|\mathcal T\|
            \|\mathcal R_L^\sharp-\mathcal R\|\|\mathcal S\|                  \\
                 & +\|\mathcal O\|\|\mathcal T\|\|\mathcal R\|\epsilon_{S,L}.
        \end{aligned}
        \label{eq:green_operator_convergence_bound}
    \end{equation}
    Then
    $\|\mathcal G_L^{\mathrm{sca}}-\mathcal G^{\mathrm{sca}}\|_{\mathcal X\to\mathcal Y}\leq E_L\to0$.
\end{theorem}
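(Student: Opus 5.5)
The plan is to reduce everything to the compactness facts established in Section~\ref{sec:abstract_green_operator} together with the standard principle that strongly convergent, uniformly bounded projections converge \emph{uniformly} on compact sets. This principle upgrades strong convergence \eqref{eq:modal_projection_stability} to norm convergence whenever a compact operator stands on the appropriate side. Three facts drive the argument. First, $(\mathbb I-\Pi_L^{\mathrm{reg}})\mathcal S\to0$ in norm because $\mathcal S$ is compact and $\|\Pi_L^{\mathrm{reg}}\|=1$; this gives $\epsilon_{S,L}\to0$. Second, for $\epsilon_{O,L}$, $\mathcal O(\mathbb I-\Pi_L^{\mathrm{out}})$ has adjoint $(\mathbb I-\Pi_L^{\mathrm{out}})\mathcal O^\dagger$, since $\Pi_L^{\mathrm{out}}$ is an orthogonal projection. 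Because $\mathcal O^\dagger$ is compact, the same argument applies. Third, $\delta_L\to0$ follows from the telescoping decomposition
\[
\mathcal A_L-\mathcal A=(\Pi_L^{\mathrm{reg}}-\mathbb I)\mathcal W\Pi_L^{\mathrm{out}}\mathcal T\Pi_L^{\mathrm{reg}}+\mathcal W(\Pi_L^{\mathrm{out}}-\mathbb I)\mathcal T\Pi_L^{\mathrm{reg}}+\mathcal W\mathcal T(\Pi_L^{\mathrm{reg}}-\mathbb I).
\]
The first term is bounded by $\|(\mathbb I-\Pi_L^{\mathrm{reg}})\mathcal W\|\|\mathcal T\|$, which vanishes since $\mathcal W$ is compact. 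The second is bounded by $\|\mathcal W(\mathbb I-\Pi_L^{\mathrm{out}})\|\|\mathcal T\|$, which vanishes by the adjoint trick since $\mathcal W^\dagger$ is compact. The third term is the delicate one.

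\textbf{Main obstacle.} The third term, $\mathcal A(\Pi_L^{\mathrm{reg}}-\mathbb I)$, has the compact operator on the \emph{left}, and that does not force norm convergence. I would first try the adjoint route: $\mathcal A^\dagger=\mathcal T^\dagger\mathcal W^\dagger$ is compact, so $\|(\mathbb I-\Pi_L^{\mathrm{reg}})\mathcal A^\dagger\|\to0$, and taking adjoints gives $\|\mathcal A(\mathbb I-\Pi_L^{\mathrm{reg}})\|\to0$. This works because $\Pi_L^{\mathrm{reg}}$ is self-adjoint in the fixed trace Hilbert metric. So the key point is that every $\Pi$ in \eqref{eq:lifted_collective_coupling} is orthogonal, and that $\mathcal A$, $\mathcal S$ and $\mathcal O$ are all compact, hence so are their adjoints. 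With $\delta_L\to0$ in hand, the condition $\|\mathcal R\|\delta_L<1$ holds for large $L$. Writing
\[
\mathbb I-\mathcal A_L=(\mathbb I-\mathcal A)\bigl[\mathbb I-\mathcal R(\mathcal A_L-\mathcal A)\bigr]
\]
and applying a Neumann series gives the two bounds in \eqref{eq:collective_resolvent_error_bound}. The second bound comes from the resolvent identity $\mathcal R_L^\sharp-\mathcal R=\mathcal R_L^\sharp(\mathcal A_L-\mathcal A)\mathcal R$.

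\textbf{Finite system and lifted transition.} Next I will connect the finite system to the lifted one. Using $\mathbf W_L=Q_L^{\mathrm{reg}}\mathcal W P_L^{\mathrm{out}}$, $\mathbf T_L=Q_L^{\mathrm{out}}\mathcal T P_L^{\mathrm{reg}}$ and $P Q=\Pi$, one gets $\mathcal A_L=P_L^{\mathrm{reg}}\mathbf W_L\mathbf T_L Q_L^{\mathrm{reg}}$. Hence $\mathbb I-\mathcal A_L$ is block diagonal with respect to $\operatorname{ran}\Pi_L^{\mathrm{reg}}\oplus\ker\Pi_L^{\mathrm{reg}}$: it is the identity on the kernel and is conjugate, via the isomorphism $P_L^{\mathrm{reg}}$, to $\mathbf I-\mathbf W_L\mathbf T_L$ on the range. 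Invertibility therefore transfers, and
\[
P_L^{\mathrm{reg}}\mathbf R_L Q_L^{\mathrm{reg}}=\Pi_L^{\mathrm{reg}}\mathcal R_L^\sharp\Pi_L^{\mathrm{reg}}=\mathcal R_L^\sharp\Pi_L^{\mathrm{reg}}.
\]
Consequently $\widehat{\mathcal K}_L=\Pi_L^{\mathrm{out}}\mathcal T\mathcal R_L^\sharp\Pi_L^{\mathrm{reg}}$, which is uniformly bounded by $\|\mathcal T\|\sup_L\|\mathcal R_L^\sharp\|$. Strong convergence to $\mathcal K$ then follows from norm convergence of $\mathcal R_L^\sharp$, the strong convergence of $\Pi_L$, and uniform boundedness.

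\textbf{Final estimate.} From \eqref{eq:finite_scattering_green_operator}, $\mathcal G_L^{\mathrm{sca}}=\mathcal O\Pi_L^{\mathrm{out}}\mathcal T\mathcal R_L^\sharp\Pi_L^{\mathrm{reg}}\mathcal S$. I will telescope $\mathcal G_L^{\mathrm{sca}}-\mathcal G^{\mathrm{sca}}$ into three pieces:
\[
\mathcal O(\Pi_L^{\mathrm{out}}-\mathbb I)\mathcal T\mathcal R_L^\sharp\Pi_L^{\mathrm{reg}}\mathcal S+\mathcal O\mathcal T(\mathcal R_L^\sharp-\mathcal R)\Pi_L^{\mathrm{reg}}\mathcal S+\mathcal O\mathcal T\mathcal R(\Pi_L^{\mathrm{reg}}-\mathbb I)\mathcal S.
\]
Using $\|\Pi\|=1$, each piece is bounded by the corresponding term of $E_L$. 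All three terms of $E_L$ vanish as $L\to\infty$ by the previous steps, which proves the theorem.
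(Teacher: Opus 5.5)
Your proposal is correct and follows essentially the same route as the paper: compactness of $\mathcal S$, $\mathcal O$, $\mathcal W$ and $\mathcal A$ combined with two-sided norm decay of orthogonal projection tails (which you justify via adjoints, where the paper simply cites it), the Neumann-lemma and resolvent-identity bounds, the identification $\mathcal R_L^\sharp=\mathbb I-\Pi_L^{\mathrm{reg}}+P_L^{\mathrm{reg}}\mathbf R_LQ_L^{\mathrm{reg}}$ giving $\widehat{\mathcal K}_L=\Pi_L^{\mathrm{out}}\mathcal T\mathcal R_L^\sharp\Pi_L^{\mathrm{reg}}$, and the same three-term telescoping that yields $E_L$. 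The only difference is cosmetic: your splitting of $\mathcal A_L-\mathcal A$ places a compact factor directly beside each projection tail, which avoids the paper's extra re-splitting of its last term, and the third term you flag as the main obstacle is in fact handled by exactly the same adjoint argument you already used for the second term.
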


\begin{proof}
    See Appendix~\ref{app:modal_green_convergence_proof}.
\end{proof}

The additive decomposition in \eqref{eq:green_operator_additive_split} is inherited by the truncated realization. Whenever $\mathcal G^0:\mathcal X\to\mathcal Y$ is bounded as specified in Section~\ref{sec:abstract_green_operator}, set $\mathcal G_L^{\mathrm{tot}}:=\mathcal G^0+\mathcal G_L^{\mathrm{sca}}$. Since $\mathcal G^0$ is unchanged by modal truncation,
\begin{equation}
    \|\mathcal G_L^{\mathrm{tot}}-\mathcal G^{\mathrm{tot}}\|_{\mathcal X\to\mathcal Y}
    =
    \|\mathcal G_L^{\mathrm{sca}}-\mathcal G^{\mathrm{sca}}\|_{\mathcal X\to\mathcal Y}
    \leq
    E_L
    \to
    0.
    \label{eq:total_green_operator_convergence_bound}
\end{equation}
Moreover, since the scattering maps are compact, the same norm convergence uniformly stabilizes their singular values. Weyl's inequality gives \cite{simon2005}
\begin{equation}
    \sup_{n\geq1}\left|
    \sigma_n(\mathcal G_L^{\mathrm{sca}})
    -
    \sigma_n(\mathcal G^{\mathrm{sca}})
    \right|
    \leq
    \|\mathcal G_L^{\mathrm{sca}}-\mathcal G^{\mathrm{sca}}\|_{\mathcal X\to\mathcal Y}
    \to
    0.
    \label{eq:singular_value_stability_from_norm_convergence}
\end{equation}

The bound separates three convergence mechanisms. The first and third terms are the observation and source modal tails; their sizes reflect the high-order trace content sampled or excited by the external supports. Through the resolvent estimate in \eqref{eq:collective_resolvent_error_bound}, the middle term is controlled by $\delta_L$, the error in the full projected product $\mathcal W\mathcal T$, and is amplified by the collective resolvent. It therefore includes incident-space truncation, high-order outgoing response generated by $\mathcal T$, and subsequent translation by $\mathcal W$, rather than translation error alone. A sensitive resolvent can magnify these unresolved components, but compactness by itself does not predict the order at which the asymptotic regime begins.

\subsection{Finite-Modal Operator Metrics}
\label{subsec:finite_operator_metrics}

Theorem~\ref{thm:modal_green_operator_convergence} is an infinite-dimensional consistency result, whereas the numerical diagnostics require norms of retained finite operators. One route discretizes both external supports and forms a large weighted transfer matrix or generalized eigenvalue problem \cite{miller2000communicating,miller2019waves,golub2013matrix}. Instead, \eqref{eq:finite_scattering_green_operator} compresses the prescribed external metrics onto the retained modal coordinates. The following theorem states this finite realization.

\begin{theorem}[Metric realization of the finite scattering operator]
    \label{thm:finite_operator_metric_realization}
    Assume $\mathbf I-\mathbf W_L\mathbf T_L$ is nonsingular at order $L$. Define the source and observation Gramians by
    \begin{equation}
        \mathbf C_{S,L}
        =
        \mathcal S_L\mathcal S_L^\dagger,
        \quad
        \mathbf C_{O,L}
        =
        \mathcal O_L^\dagger\mathcal O_L.
        \label{eq:finite_source_observation_grams}
    \end{equation}
    Both Gramians are $N_L\times N_L$ Hermitian positive semidefinite matrices and depend on the chosen source and observation Hilbert metrics. All square roots below denote the Hermitian positive-semidefinite square roots. Set
    \begin{equation}
        \mathbf B_L
        :=
        \mathbf C_{O,L}^{1/2}\mathbf K_L\mathbf C_{S,L}^{1/2}.
        \label{eq:finite_metric_core}
    \end{equation}
    Then $\mathcal G_L^{\mathrm{sca}}$ and $\mathbf B_L$ have the same nonzero singular values:
    \begin{equation}
        \sigma_n(\mathcal G_L^{\mathrm{sca}})
        =
        \sigma_n(\mathbf B_L).
        \label{eq:finite_metric_singular_equivalence}
    \end{equation}
    In particular,
    \begin{equation}
        \|\mathcal G_L^{\mathrm{sca}}\|_{\mathcal X\to\mathcal Y}
        =
        \sigma_1(\mathbf B_L).
        \label{eq:finite_operator_norm_metric_core}
    \end{equation}
\end{theorem}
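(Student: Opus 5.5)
The plan is to reduce the claim to a standard fact about singular values of a product $\mathcal A\mathbf M\mathcal B$ with Hilbert-space factors. Here $\mathcal G_L^{\mathrm{sca}}=\mathcal O_L\mathbf K_L\mathcal S_L$, where $\mathcal S_L:\mathcal X\to\mathbb C^{N_L}$ and $\mathcal O_L:\mathbb C^{N_L}\to\mathcal Y$ are bounded and $\mathbb C^{N_L}$ carries the Euclidean inner product.

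\emph{Gramians.} Boundedness of $\mathcal S_L=Q_L^{\mathrm{reg}}\mathcal S$ and $\mathcal O_L=\mathcal O P_L^{\mathrm{out}}$ follows from Section~\ref{sec:abstract_green_operator}. Their Hilbert adjoints with respect to the weighted inner products \eqref{eq:hilbert_source_metric}--\eqref{eq:hilbert_observation_metric} are therefore well defined. Consequently $\mathbf C_{S,L}=\mathcal S_L\mathcal S_L^\dagger$ and $\mathbf C_{O,L}=\mathcal O_L^\dagger\mathcal O_L$ are $N_L\times N_L$ matrices. They are Hermitian and positive semidefinite, since
\begin{equation*}
\mathbf x^\dagger\mathbf C_{S,L}\mathbf x=\|\mathcal S_L^\dagger\mathbf x\|_{\mathcal X}^2 ,\qquad \mathbf x^\dagger\mathbf C_{O,L}\mathbf x=\|\mathcal O_L\mathbf x\|_{\mathcal Y}^2 .
\end{equation*}
Their entries involve $\mathbf W_X$ and $\mathbf W_Y$, which shows the dependence on the chosen metrics.

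\emph{Polar factorizations.} This is the key step. By the polar decomposition in finite dimensions, $\mathcal O_L=\mathcal V_O\mathbf C_{O,L}^{1/2}$, where $\mathcal V_O:\mathbb C^{N_L}\to\mathcal Y$ is a partial isometry. It is isometric on $\operatorname{ran}\mathbf C_{O,L}^{1/2}$ and vanishes on its orthogonal complement. One checks this from
\begin{equation*}
\|\mathcal O_L\mathbf x\|_{\mathcal Y}=\|\mathbf C_{O,L}^{1/2}\mathbf x\|,
\end{equation*}
which lets us define $\mathcal V_O(\mathbf C_{O,L}^{1/2}\mathbf x):=\mathcal O_L\mathbf x$ consistently on the range.

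Applying the same argument to $\mathcal S_L^\dagger$ gives $\mathcal S_L^\dagger=\mathcal V_S\mathbf C_{S,L}^{1/2}$. Hence $\mathcal S_L=\mathbf C_{S,L}^{1/2}\mathcal V_S^\dagger$, where $\mathcal V_S:\mathbb C^{N_L}\to\mathcal X$ is a partial isometry with initial space $\operatorname{ran}\mathbf C_{S,L}^{1/2}$. Substituting both factorizations yields
\begin{equation*}
\mathcal G_L^{\mathrm{sca}}=\mathcal V_O\,\mathbf B_L\,\mathcal V_S^\dagger .
\end{equation*}

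\emph{Singular values.} The range of $\mathbf B_L$ lies in $\operatorname{ran}\mathbf C_{O,L}^{1/2}$. The cokernel side is handled in the same way: $\mathbf B_L$ annihilates $\ker\mathbf C_{S,L}^{1/2}=(\operatorname{ran}\mathbf C_{S,L}^{1/2})^\perp$, and $\mathcal V_S^\dagger$ maps onto $\operatorname{ran}\mathbf C_{S,L}^{1/2}$ isometrically from $\operatorname{ran}\mathcal V_S$. It follows that
\begin{equation*}
(\mathcal G_L^{\mathrm{sca}})^\dagger\mathcal G_L^{\mathrm{sca}}=\mathcal V_S\,\mathbf B_L^\dagger\mathbf B_L\,\mathcal V_S^\dagger .
\end{equation*}
This operator is unitarily equivalent to $\mathbf B_L^\dagger\mathbf B_L$ restricted to $\operatorname{ran}\mathbf C_{S,L}^{1/2}$, and both vanish on the respective complements. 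Therefore the nonzero eigenvalues coincide with multiplicity, which gives \eqref{eq:finite_metric_singular_equivalence}. Since $\mathcal G_L^{\mathrm{sca}}$ has finite rank, its norm is $\sigma_1$, which gives \eqref{eq:finite_operator_norm_metric_core}. This also covers the trivial case in which all singular values are zero.

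A shorter alternative is to use the fact that $\mathcal A\mathcal A^\dagger$ and $\mathcal A^\dagger\mathcal A$ have the same nonzero spectrum. Under this fact, the nonzero eigenvalues of $\mathcal G\mathcal G^\dagger=\mathcal O_L\mathbf K_L\mathbf C_{S,L}\mathbf K_L^\dagger\mathcal O_L^\dagger$ equal those of $\mathbf K_L\mathbf C_{S,L}\mathbf K_L^\dagger\mathbf C_{O,L}$. Next apply cyclicity with $\mathbf C_{O,L}=\mathbf C_{O,L}^{1/2}\mathbf C_{O,L}^{1/2}$, and similarly split $\mathbf C_{S,L}$. This shows the nonzero eigenvalues equal those of $\mathbf B_L\mathbf B_L^\dagger$. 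Each step uses only the nonzero-spectrum identity for $\mathcal A\mathcal B$ versus $\mathcal B\mathcal A$ with bounded $\mathcal A,\mathcal B$ between Hilbert spaces, and this requires no compactness because the products are finite rank. The main obstacle I expect is bookkeeping multiplicities and the Hilbert adjoints with respect to the weighted metrics, rather than the Euclidean ones; this route handles that cleanly.
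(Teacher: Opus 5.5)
Your proposal is correct and follows the paper's own argument: polar-decompose the two outer factors so that $\mathcal G_L^{\mathrm{sca}}=\mathcal V_O\mathbf B_L\mathcal V_S^\dagger$ (your $\mathcal V_S^\dagger$ plays the role of the paper's $\mathcal V_{S,L}$), then use the fact that $\mathbf B_L$ acts only between the initial and final spaces of these partial isometries. Your unitary equivalence between $(\mathcal G_L^{\mathrm{sca}})^\dagger\mathcal G_L^{\mathrm{sca}}$ and $\mathbf B_L^\dagger\mathbf B_L$ on $\operatorname{ran}\mathbf C_{S,L}^{1/2}$ spells out the step the paper states in one line, and your $\mathcal A\mathcal B$ versus $\mathcal B\mathcal A$ alternative also works, provided you track algebraic multiplicities through the non-Hermitian intermediate products.
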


\begin{proof}
    Apply the finite polar decomposition \cite{simon2005} to the two outer factors in $\mathcal G_L^{\mathrm{sca}}=\mathcal O_L\mathbf K_L\mathcal S_L$. Since $\mathbf C_{S,L}=\mathcal S_L\mathcal S_L^\dagger$ and $\mathbf C_{O,L}=\mathcal O_L^\dagger\mathcal O_L$, there are partial isometries $\mathcal V_{S,L}$ and $\mathcal V_{O,L}$ such that $\mathcal S_L=\mathbf C_{S,L}^{1/2}\mathcal V_{S,L}$ and $\mathcal O_L=\mathcal V_{O,L}\mathbf C_{O,L}^{1/2}$. Therefore $\mathcal G_L^{\mathrm{sca}}=\mathcal V_{O,L}\mathbf B_L\mathcal V_{S,L}$. The square-root factors make $\mathbf B_L$ act only between the initial and final spaces of these partial isometries, so the nonzero singular values are unchanged by the two isometric embeddings. The norm identity follows from the standard equality between the induced Hilbert-space operator norm and the largest singular value.
\end{proof}

\begin{corollary}[Accessible scattering channels]
    \label{cor:accessible_scattering_channels}
    For fixed $L$, the singular value decomposition of $\mathbf B_L$ is $\mathbf B_L=\mathbf U\mathbf\Sigma\mathbf V^\dagger$, where $\mathbf U=[\mathbf u_1, \ldots, \mathbf u_{N_L}]$, $\mathbf V=[\mathbf v_1, \ldots, \mathbf v_{N_L}]$, and $\mathbf\Sigma=\operatorname{diag}(\sigma_1, \ldots, \sigma_{N_L})$. For any $1\leq n\leq N_L$ with $\sigma_n>0$, define
    \begin{equation}
        \mathbf q_n
        :=
        \mathcal S_L^\dagger
        \bigl(\mathbf C_{S,L}^{1/2}\bigr)^+
        \mathbf v_n,
        \quad
        \widehat{\mathbf E}_{n}^{\mathrm{sca}}
        :=
        \mathcal O_L
        \bigl(\mathbf C_{O,L}^{1/2}\bigr)^+
        \mathbf u_n,
        \label{eq:accessible_scattering_channel_pair}
    \end{equation}
    where $+$ denotes the Moore--Penrose inverse. The reconstructed source and observation families are orthonormal, i.e., $\langle\mathbf q_n,\mathbf q_{n^\prime}\rangle_{\mathcal X}=\langle\widehat{\mathbf E}_{n}^{\mathrm{sca}},\widehat{\mathbf E}_{n^\prime}^{\mathrm{sca}}\rangle_{\mathcal Y}=\delta_{n n^\prime}$. Each matched pair further satisfies the channel action
    \begin{equation}
        \mathcal G_L^{\mathrm{sca}}\mathbf q_n
        =
        \sigma_n\widehat{\mathbf E}_{n}^{\mathrm{sca}}.
        \label{eq:accessible_scattering_channel_action}
    \end{equation}
    These pairs are the source-accessible, observation-visible scattering channels under the prescribed metrics.
\end{corollary}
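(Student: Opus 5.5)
The plan is to work entirely with finite-dimensional linear algebra, using the polar factorizations from the proof of Theorem~\ref{thm:finite_operator_metric_realization}. Write $\mathbf P_S$ and $\mathbf P_O$ for the orthogonal projections in $\mathbb C^{N_L}$ onto $\operatorname{ran}\mathbf C_{S,L}$ and $\operatorname{ran}\mathbf C_{O,L}$. Two preliminary facts will be needed.

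\emph{Step 1: the singular vectors live in the right subspaces.} For $\sigma_n>0$ we have $\mathbf u_n=\sigma_n^{-1}\mathbf B_L\mathbf v_n\in\operatorname{ran}\mathbf C_{O,L}^{1/2}$. Likewise $\mathbf v_n=\sigma_n^{-1}\mathbf B_L^\dagger\mathbf u_n\in\operatorname{ran}\mathbf C_{S,L}^{1/2}$. Since each square root is Hermitian, its range equals the range of the corresponding Gramian. Consequently $\mathbf P_S\mathbf v_n=\mathbf v_n$ and $\mathbf P_O\mathbf u_n=\mathbf u_n$. The Moore--Penrose identities then give
\[
\mathbf C_{S,L}^{1/2}\bigl(\mathbf C_{S,L}^{1/2}\bigr)^+=\mathbf P_S,
\qquad
\bigl(\mathbf C_{O,L}^{1/2}\bigr)^+\mathbf C_{O,L}^{1/2}=\mathbf P_O,
\]
and both pseudoinverses are Hermitian.

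\emph{Step 2: orthonormality.} For the sources,
\[
\langle\mathbf q_n,\mathbf q_{n'}\rangle_{\mathcal X}
=\mathbf v_n^\dagger\bigl(\mathbf C_{S,L}^{1/2}\bigr)^+\mathcal S_L\mathcal S_L^\dagger\bigl(\mathbf C_{S,L}^{1/2}\bigr)^+\mathbf v_{n'}.
\]
The middle factor is $\mathbf C_{S,L}=\mathbf C_{S,L}^{1/2}\mathbf C_{S,L}^{1/2}$, so the whole expression reduces to $\mathbf v_n^\dagger\mathbf P_S\mathbf P_S\mathbf v_{n'}=\mathbf v_n^\dagger\mathbf v_{n'}=\delta_{nn'}$. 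The same computation with $\mathcal O_L^\dagger\mathcal O_L=\mathbf C_{O,L}$ gives $\mathbf u_n^\dagger\mathbf P_O\mathbf u_{n'}=\delta_{nn'}$ for the fields. Here $\mathcal S_L^\dagger$ and $\mathcal O_L^\dagger$ are the Hilbert adjoints with respect to the weighted metrics of $\mathcal X$ and $\mathcal Y$, so these inner products are exactly the prescribed ones.

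\emph{Step 3: the channel action.} Expand
\[
\mathcal G_L^{\mathrm{sca}}\mathbf q_n=\mathcal O_L\mathbf K_L\mathcal S_L\mathcal S_L^\dagger\bigl(\mathbf C_{S,L}^{1/2}\bigr)^+\mathbf v_n
=\mathcal O_L\mathbf K_L\mathbf C_{S,L}^{1/2}\mathbf P_S\mathbf v_n
=\mathcal O_L\mathbf K_L\mathbf C_{S,L}^{1/2}\mathbf v_n .
\]
To bring in $\mathbf B_L$, the plan is to prove $\mathcal O_L=\mathcal O_L\mathbf P_O$. This follows from $\ker\mathbf C_{O,L}=\ker\mathcal O_L$: if $\mathcal O_L^\dagger\mathcal O_L\mathbf x=0$, then $\|\mathcal O_L\mathbf x\|^2=0$, so $\mathcal O_L$ vanishes on $\operatorname{ran}(\mathbf I-\mathbf P_O)$. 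Hence
\[
\mathcal O_L=\mathcal O_L\bigl(\mathbf C_{O,L}^{1/2}\bigr)^+\mathbf C_{O,L}^{1/2}.
\]
Substituting, $\mathcal G_L^{\mathrm{sca}}\mathbf q_n=\mathcal O_L(\mathbf C_{O,L}^{1/2})^+\mathbf B_L\mathbf v_n$. Using $\mathbf B_L\mathbf v_n=\sigma_n\mathbf u_n$, this equals $\sigma_n\widehat{\mathbf E}_n^{\mathrm{sca}}$.

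The main point needing care is the range and kernel bookkeeping in Steps 1 and 3. The Gramians may be singular, so the pseudoinverses act as inverses only on their ranges, and the hypothesis $\sigma_n>0$ is exactly what guarantees that $\mathbf u_n$ and $\mathbf v_n$ lie in those ranges. Finally, Theorem~\ref{thm:finite_operator_metric_realization} shows that these $\sigma_n$ are the nonzero singular values of $\mathcal G_L^{\mathrm{sca}}$. Together with orthonormality, this identifies $\{(\mathbf q_n,\widehat{\mathbf E}_n^{\mathrm{sca}})\}$ as a singular system of $\mathcal G_L^{\mathrm{sca}}$, which justifies interpreting the pairs as accessible scattering channels.
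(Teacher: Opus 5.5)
Your proof is correct. The hypothesis $\sigma_n>0$ places $\mathbf v_n$ and $\mathbf u_n$ in $\operatorname{ran}\mathbf C_{S,L}^{1/2}$ and $\operatorname{ran}\mathbf C_{O,L}^{1/2}$, and the Moore--Penrose products are exactly the range projectors. The identity $\ker\mathbf C_{O,L}=\ker\mathcal O_L$ gives $\mathcal O_L=\mathcal O_L\mathbf P_O$. Orthonormality and the channel action then follow by direct Gram computations.

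This is not the route of the main-text proof, though it is essentially the calculation the paper gives in Supplementary Note S5. The main text uses the partial isometries from the polar decompositions $\mathcal S_L=\mathbf C_{S,L}^{1/2}\mathcal V_{S,L}$ and $\mathcal O_L=\mathcal V_{O,L}\mathbf C_{O,L}^{1/2}$. The range facts turn the channel pair into $\mathbf q_n=\mathcal V_{S,L}^\dagger\mathbf v_n$ and $\widehat{\mathbf E}_n^{\mathrm{sca}}=\mathcal V_{O,L}\mathbf u_n$. Orthonormality is then inherited because the partial isometries are isometric on their support spaces. The channel action is read off from $\mathcal G_L^{\mathrm{sca}}=\mathcal V_{O,L}\mathbf B_L\mathcal V_{S,L}$.

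Your version never constructs the polar factors; despite your opening sentence, you do not actually use them. It also makes explicit the kernel identity that the partial-isometry argument hides inside its statement about support spaces. The partial-isometry version, in exchange, transports the whole SVD of $\mathbf B_L$ in one step. That makes the closing interpretation as a singular system of $\mathcal G_L^{\mathrm{sca}}$ immediate.

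In your write-up that final interpretive sentence needs one more line. Your appeal to Theorem~\ref{thm:finite_operator_metric_realization} plus orthonormality is sufficient only after an extra step, for instance a Hilbert--Schmidt count showing $\mathcal G_L^{\mathrm{sca}}$ annihilates the orthogonal complement of the $\mathbf q_n$. More directly, you could prove the adjoint relation $(\mathcal G_L^{\mathrm{sca}})^\dagger\widehat{\mathbf E}_n^{\mathrm{sca}}=\sigma_n\mathbf q_n$. It follows symmetrically from $\mathcal S_L^\dagger=\mathcal S_L^\dagger\mathbf P_S$, since $\ker\mathcal S_L^\dagger=\ker\mathbf C_{S,L}$.
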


\begin{proof}
    For $\sigma_n>0$, the SVD gives $\mathbf B_L\mathbf v_n=\sigma_n\mathbf u_n$ and $\mathbf B_L^\dagger\mathbf u_n=\sigma_n\mathbf v_n$. By \eqref{eq:finite_metric_core}, $\mathbf v_n\in\operatorname{ran}\mathbf C_{S,L}^{1/2}$ and $\mathbf u_n\in\operatorname{ran}\mathbf C_{O,L}^{1/2}$. Let $\mathcal V_{S,L}$ and $\mathcal V_{O,L}$ be the partial isometries used above. Since the square-root pseudoinverse products project orthogonally onto these ranges, \eqref{eq:accessible_scattering_channel_pair} becomes $\mathbf q_n=\mathcal V_{S,L}^\dagger\mathbf v_n$ and $\widehat{\mathbf E}_n^{\mathrm{sca}}=\mathcal V_{O,L}\mathbf u_n$. The columns of $\mathbf U$ and $\mathbf V$ are orthonormal, and the partial isometries preserve inner products on their support spaces; hence both reconstructed families are orthonormal. Finally, $\mathcal V_{S,L}\mathcal V_{S,L}^\dagger\mathbf v_n=\mathbf v_n$, so substituting this identity and $\mathbf B_L\mathbf v_n=\sigma_n\mathbf u_n$ into $\mathcal G_L^{\mathrm{sca}}=\mathcal V_{O,L}\mathbf B_L\mathcal V_{S,L}$ proves \eqref{eq:accessible_scattering_channel_action}.
\end{proof}

Thus retained scattering-operator norms can be evaluated by an $N_L\times N_L$ singular-value computation after the source and observation metrics are included. For all operator-level results below, we use the tangential surface-to-surface instance of \eqref{eq:hilbert_source_metric}--\eqref{eq:hilbert_observation_metric}. The source surface $\Sigma_{\mathrm{src}}$ carries $\mathbf q=(\mathbf J_t^{\mathsf T},\mathbf M_t^{\mathsf T})^{\mathsf T}$, and the observation surface $\Sigma_{\mathrm{obs}}$ carries the tangential electric field. Thus $\mathrm d\mu_{\mathrm{src}}=\mathrm d\mu_{\mathrm{obs}}=\mathrm dS$, $d_X=4$, and $d_Y=2$. In local tangential coordinates, the weights are
\begin{equation}
    \mathbf W_{X,\Sigma}
    =
    \frac12
    \begin{bmatrix}
        \eta\mathbf I_t & \mathbf 0            \\
        \mathbf 0       & \eta^{-1}\mathbf I_t
    \end{bmatrix},
    \quad
    \mathbf W_{Y,\Sigma}
    =
    \frac{1}{2\eta}\mathbf I_t,
    \label{eq:surface_metric_weights}
\end{equation}
where $\mathbf I_t$ acts on the two tangential components. For $i=1,\ldots,N$ and $u=1,\ldots,U_L$, define the finite cluster index
\begin{equation}
    p
    =
    \zeta_L(i,u)
    =
    (i-1)U_L+u,
    \quad
    p=1,\ldots,N_L.
    \label{eq:finite_cluster_index}
\end{equation}
Set $\bm\Psi_p(\mathbf r):=\bm\Phi_u^{\mathrm{out}}(\mathbf r-\mathbf c_i)$ and $\bm\Psi_{p,t}:=\mathbf P_t\bm\Psi_p$, where $\mathbf P_t(\mathbf r):=\mathbf I-\hat{\mathbf n}(\mathbf r)\hat{\mathbf n}^{\mathsf T}(\mathbf r)$. The relabelings $\overline p:=\zeta_L(i,\overline u)$ and $p^\star:=\zeta_L(i,u^\star)$ inherit \eqref{eq:modal_label_involutions}. The source-side pairings follow by specializing Lemma~\ref{lem:continuous_source_modal_excitation} to $\Xi_{\mathrm{src}}=\Sigma_{\mathrm{src}}$, while the observation side follows from \eqref{eq:global_scattered_field_reconstruction}. With the Hilbert metrics \eqref{eq:hilbert_source_metric}--\eqref{eq:hilbert_observation_metric} specialized by \eqref{eq:surface_metric_weights}, evaluating \eqref{eq:finite_source_observation_grams} and using $\omega\mu=k\eta$ gives
\begin{align}
    [\mathbf C_{S,L}]_{p,q}
     & =
    \frac{2\eta k^4}{\pi^2}
    \int_{\Sigma_{\mathrm{src}}}
    (
    \bm\Psi_{\overline p,t}^{\mathsf T}
    \bm\Psi_{\overline q,t}^{*}
    +
    \bm\Psi_{\overline{p^\star},t}^{\mathsf T}
    \bm\Psi_{\overline{q^\star},t}^{*}
    )\,\mathrm dS,
    \label{eq:surface_source_gram_entries} \\
    [\mathbf C_{O,L}]_{p,q}
     & =
    \frac{1}{2\eta}
    \int_{\Sigma_{\mathrm{obs}}}
    \bm\Psi_{p,t}^{\dagger}
    \bm\Psi_{q,t}\,\mathrm dS.
    \label{eq:surface_observation_gram_entries}
\end{align}
Consequently, for fixed $(\Sigma_{\mathrm{src}},\Sigma_{\mathrm{obs}})$ and retained core $\mathbf K_L$, \eqref{eq:finite_operator_norm_metric_core} gives the norm of $\mathcal G_L^{\mathrm{sca}}$ without selecting a particular source distribution or retaining source- or observation-support quadrature points as DoF of the spectral problem.

For reconstruction of the channels in Corollary~\ref{cor:accessible_scattering_channels}, set
\begin{equation}
    \mathbf h_{n,S}
    :=
    \bigl(\mathbf C_{S,L}^{1/2}\bigr)^+\mathbf v_n,
    \quad
    \mathbf h_{n,O}
    :=
    \bigl(\mathbf C_{O,L}^{1/2}\bigr)^+\mathbf u_n.
    \label{eq:accessible_channel_unwhitened_coordinates}
\end{equation}
Comparing the source pairings in \eqref{eq:continuous_source_modal_projection} with the Hilbert-adjoint identity under \eqref{eq:surface_metric_weights} gives $\mathcal S_L^\dagger\mathbf h_{n,S}$, while \eqref{eq:global_scattered_field_reconstruction} gives $\mathcal O_L\mathbf h_{n,O}$. Consequently, with $\mathbf q_{n,t}=(\mathbf J_{n,t}^{\mathsf T},\mathbf M_{n,t}^{\mathsf T})^{\mathsf T}$, \eqref{eq:accessible_scattering_channel_pair} becomes
\begin{align}
    \mathbf J_{n,t}(\mathbf r')
     & =
    -\frac{2k^2}{\pi}
    \sum_{p=1}^{N_L}
    [\mathbf h_{n,S}]_p
    \bm\Psi_{\overline p,t}^{*}(\mathbf r'),
    \label{eq:accessible_channel_electric_current} \\
    \mathbf M_{n,t}(\mathbf r')
     & =
    \frac{2\mathrm j\eta k^2}{\pi}
    \sum_{p=1}^{N_L}
    [\mathbf h_{n,S}]_p
    \bm\Psi_{\overline{p^\star},t}^{*}(\mathbf r'),
    \label{eq:accessible_channel_magnetic_current} \\
    \widehat{\mathbf E}_{n,t}^{\mathrm{sca}}(\mathbf r)
     & =
    \sum_{p=1}^{N_L}
    [\mathbf h_{n,O}]_p
    \bm\Psi_{p,t}(\mathbf r).
    \label{eq:accessible_channel_observation_mode}
\end{align}
The interface-Gram and channel-recovery calculations leading to \eqref{eq:surface_source_gram_entries}--\eqref{eq:surface_observation_gram_entries} and \eqref{eq:accessible_channel_electric_current}--\eqref{eq:accessible_channel_observation_mode} are given in Supplementary Note S5. These surface-metric formulas underlie the finite-order operator diagnostics in Section~\ref{subsec:numerical_operator_convergence} and the explicit accessible-channel reconstructions in Section~\ref{subsec:accessible_scattering_channels}.

\section{Full-Wave Validation and Operator Analysis}
\label{sec:numerical_validation_strategy}

The preceding theory establishes asymptotic convergence, but it does not quantify implementation error at practical orders, convergence rates in sensitive configurations, or the physical form of the resulting scattering channels. We therefore use three complementary numerical assessments. Full-wave comparisons test the source-to-field pipeline for representative excitations and mixed clusters. Operator-norm experiments probe worst-case truncation behavior under fixed Hilbert metrics, keeping implementation error distinct from finite-reference modal convergence. Finally, accessible-channel analysis connects the metric core to optimal source--field pairs.

\subsection{Full-Wave Implementation Tests}
\label{subsec:full_wave_validation}

The present full-wave comparisons assess the practical finite-modal implementation, not operator-norm convergence: they evaluate the source-to-field response for prescribed excitations and therefore do not control the worst-case error over the source unit ball. We accordingly fix $L=3$, which sufficiently resolves the selected electrically small, nonresonant bodies, and vary both the mixed-scatterer composition and the excitation type to test the implementation across diverse configurations. PEC spheres use analytic diagonal Mie T-matrices \cite{doicu2006,bohren1983,mishchenko2002}. The locally centered T-matrices of a cube, an ellipsoid, and a torus are extracted with SCUFF-EM \cite{reid2015scuff} in vacuum at $1\,\mathrm{GHz}$. Table~\ref{tab:nonspherical_tmatrix_data} summarizes their material and surface-discretization settings, extraction times, and independent plane-wave replay errors. The surface meshes, extraction protocol, convention conversion, and complete diagnostics are given in Supplementary Note S6.

\begin{table}[!t]
    \caption{Nonspherical local T-matrix extraction and replay data}
    \label{tab:nonspherical_tmatrix_data}
    \centering
    \footnotesize
    \begin{tabular}{c c c c c c}
        \hline
        Body      & $\varepsilon_r$   & Triangles & Time (s) & E replay & H replay \\
        \hline
        Cube      & $4$               & 726       & 39.43    & $1.63\%$ & $0.61\%$ \\
        Ellipsoid & $10$              & 908       & 64.33    & $2.46\%$ & $0.97\%$ \\
        Torus     & $6+0.5\mathrm{j}$ & 318       & 27.90    & $3.44\%$ & $2.42\%$ \\
        \hline
    \end{tabular}
\end{table}

\begin{figure}[!t]
    \centering
    \includegraphics{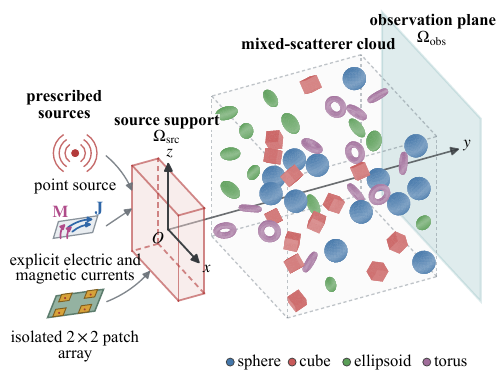}
    \caption{Full-wave implementation-test configuration at $1\,\mathrm{GHz}$.}
    \label{fig:fullwave_validation_scene}
\end{figure}

Fig.~\ref{fig:fullwave_validation_scene} shows the wavelength-normalized configuration. The tests use one deterministic family of nested $5\leq N\leq40$ mixed clusters, comprising PEC spheres with $ka=1$ and the three nonspherical bodies in Table~\ref{tab:nonspherical_tmatrix_data}; every pair of enclosing spheres has a surface gap of at least $0.2\lambda$. COMSOL Multiphysics 6.4 supplies the full-wave reference. The computational domain, PML, mesh, solver, field sampling, nested-scene construction, phasor and constitutive-parameter conversions, and timing protocol are specified in Supplementary Note S7.

Three prescribed excitations probe distinct source interfaces: a distributional electric point current, a compactly supported Gaussian electric--magnetic surface-current pair, and the frozen Huygens currents of an isolated, equal-phase $2\times2$ patch array. The point current is used only as an implementation check and lies outside the $L^2$ source space $\mathcal X$. For the third case, one empty-environment COMSOL solve fixes $\mathbf J_H=\hat{\mathbf n}\times\mathbf H^0$ and $\mathbf M_H=-\hat{\mathbf n}\times\mathbf E^0$ on the source box; the same trace data are then retained for every loaded scene. Thus this case tests scattering from a prescribed Huygens source, not self-consistent antenna loading or port back-action. The exact source definitions and the two solvers' respective source discretizations are given in Supplementary Note S7.

\begin{figure}[!t]
    \centering
    \subfloat[]{\includegraphics{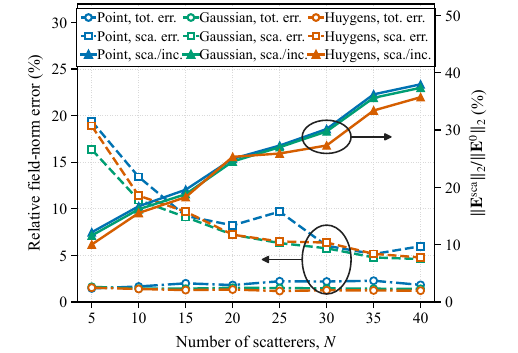}\label{fig:fullwave_electric_scaling_a}}
    \par
    \subfloat[]{\includegraphics{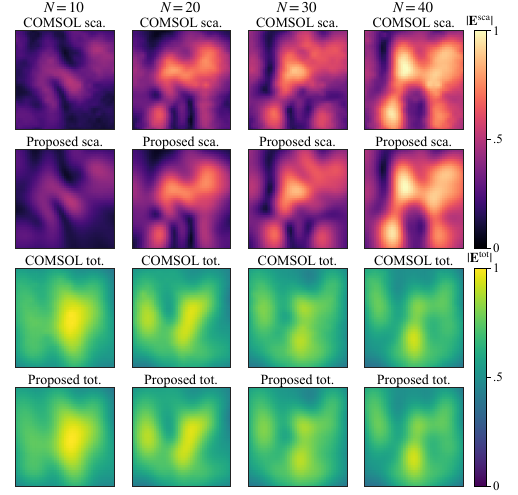}\label{fig:fullwave_electric_scaling_b}}
    \caption{Electric-field implementation tests. (a) Total- and scattered-field relative discrete field-norm errors and $\|\mathbf E^{\mathrm{sca}}\|_2/\|\mathbf E^0\|_2$ for the three excitations. (b) Frozen-Huygens-current field-amplitude maps. The scattered-field maps and the total-field maps are normalized by their respective group maxima.}
    \label{fig:fullwave_electric_scaling}
\end{figure}

Fig.~\ref{fig:fullwave_electric_scaling} quantifies implementation agreement for the three sources over the fixed nested cases; the curves are deterministic comparisons, not ensemble averages. The total-electric-field error remains below $2.3\%$, and its largest excess above the corresponding empty-scene baseline is $0.11$ percentage points. As $\|\mathbf E^{\mathrm{sca}}\|_2/\|\mathbf E^0\|_2$ grows from $10.0\%$--$12.1\%$ at $N=5$ to $35.7\%$--$38.0\%$ at $N=40$, the scattered-field error falls from $16.4\%$--$19.4\%$ to $4.6\%$--$6.0\%$. These are source-to-field implementation comparisons interpreted against an empty-scene interface baseline, not operator-norm or isolated modal-truncation errors. The Huygens maps in Fig.~\ref{fig:fullwave_electric_scaling}\subref{fig:fullwave_electric_scaling_b} show the same spatial agreement as the scattering contribution grows; the exact discrete metrics and phasor mapping are given in Supplementary Note S7.

\begin{table}[!t]
    \captionsetup{position=top}
    \caption{Single-run runtime comparison and Huygens timing breakdown in seconds}
    \label{tab:runtime_comparison_breakdown}
    \centering
    \footnotesize
    \subfloat[Online per-scene wall time: proposed/COMSOL {[speedup]}]{%
    \label{tab:end_to_end_runtime}
    \setlength{\tabcolsep}{5pt}
    \begin{tabular*}{0.9\columnwidth}{@{\extracolsep{\fill}} c c c c}
        \hline
        $N$ & Point & Gaussian & Huygens \\
        \hline
        10 & 0.54/95.0 [$175\times$] & 1.72/88.2 [$51\times$] & 13.41/422.5 [$32\times$] \\
        20 & 1.11/170.6 [$154\times$] & 3.52/160.3 [$46\times$] & 25.47/521.0 [$20\times$] \\
        30 & 1.98/272.0 [$137\times$] & 5.59/263.0 [$47\times$] & 37.95/593.9 [$16\times$] \\
        40 & 3.98/338.0 [$85\times$] & 8.90/344.6 [$39\times$] & 55.96/724.7 [$13\times$] \\
        \hline
    \end{tabular*}
    }
    \par
    \subfloat[Proposed Huygens timing breakdown]{%
        \label{tab:huygens_timing_breakdown}
        \setlength{\tabcolsep}{5pt}
        \begin{tabular*}{0.9\columnwidth}{@{\extracolsep{\fill}} c r r r r r r}
            \hline
            $N$ & \multicolumn{1}{c}{$t_{\mathrm{set}}$} & \multicolumn{1}{c}{$t_{\mathcal S}$} & \multicolumn{1}{c}{$t_{\mathrm{sol}}$} & \multicolumn{1}{c}{$t_{\mathcal O}$} & \multicolumn{1}{c}{$t_{\mathrm{tot}}$} & \multicolumn{1}{c}{$t_{\mathcal S}/t_{\mathrm{tot}}$} \\
            \hline
            10 & 0.145 & 11.110 & 0.273 & 1.881 & 13.410 & $82.8\%$ \\
            20 & 0.175 & 22.512 & 0.780 & 1.999 & 25.467 & $88.4\%$ \\
            30 & 0.204 & 33.961 & 1.740 & 2.046 & 37.950 & $89.5\%$ \\
            40 & 0.255 & 50.023 & 3.378 & 2.299 & 55.956 & $89.4\%$ \\
            \hline
        \end{tabular*}
    }
\end{table}

Table~\ref{tab:runtime_comparison_breakdown} reports single-run wall-clock measurements on an Intel Core i7-14700 workstation with $32\,\mathrm{GB}$ of memory; the precise timer boundaries and offline exclusions are given in Supplementary Note S7. With precomputed local T-matrices, the speedups over $N=10$--$40$ are $85\times$--$175\times$, $39\times$--$51\times$, and $13\times$--$32\times$ for the point, Gaussian, and Huygens sources, respectively. At $N=40$, the Huygens online times are $55.96\,\mathrm s$ and $724.7\,\mathrm s$. Direct CPU pairing of the 3456 electric--magnetic surface-current samples with $U_L=30$ VSWFs at every scatterer is the bottleneck: source projection accounts for $82.8\%$--$89.5\%$ of the proposed runtime, whereas the collective solve takes $0.27$--$3.38\,\mathrm s$. All modal solves reach relative residuals below $7.9\times10^{-10}$.

\begin{figure}[!t]
    \centering
    \includegraphics{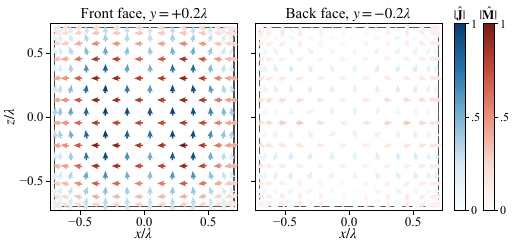}
    \caption{Equivalent Huygens currents extracted from the isolated $2\times2$ patch array. Arrows show the instantaneous tangential directions at a common reference phase. Colors show the separately normalized phasor magnitudes of $\mathbf J_H$ and $\mathbf M_H$ across the front and back faces.}
    \label{fig:huygens_front_back_currents}
\end{figure}

Fig.~\ref{fig:huygens_front_back_currents} displays the directions and normalized magnitudes of $\mathbf J_H$ and $\mathbf M_H$ on the two broadside faces $y=\pm0.2\lambda$ of the closed source support. The nonzero $\mathbf M_H$ motivates retaining equivalent magnetic currents in the source map. The frozen Huygens pair reproduces the isolated radiator's prescribed exterior incident field without transferring its internal conduction or volume currents to each scene.

\subsection{Modal Operator Convergence and Resolvent Sensitivity}
\label{subsec:numerical_operator_convergence}

The structural majorant $E_L$ in Theorem~\ref{thm:modal_green_operator_convergence} separates source tails, collective stability, and observation tails. Its first and third terms contain the observation and source projection tails, respectively, while its middle term measures the collective-resolvent perturbation. We diagnose these factors through finite high-order references rather than evaluating the majorant itself. A finite order $L_\star$ is accepted as a reference only after the retained operator and the diagnostics of interest are stable, to the accuracy required for the comparison, under further increases of $L_\star$; it then serves as a numerically resolved surrogate for the limiting system.

Let $L_\star$ be such a finite reference. Assuming $\|\mathcal G_{L_\star}^{\mathrm{sca}}\|_{\mathcal X\to\mathcal Y}>0$, define the relative worst-case discrepancy and its adjacent-order counterpart by
\begin{equation}
    e_{L\mid L_\star}
    :=
    \frac{
    \|\mathcal G_{L_\star}^{\mathrm{sca}}
    -
    \mathcal G_L^{\mathrm{sca}}\|_{\mathcal X\to\mathcal Y}
    }{
    \|\mathcal G_{L_\star}^{\mathrm{sca}}\|_{\mathcal X\to\mathcal Y}
    },
    \quad
    \widehat e_L
    :=
    e_{L\mid L+1}.
    \label{eq:numerical_finite_reference_relative_error}
\end{equation}
Thus $e_{L\mid L_\star}$ measures the largest relative change visible between orders $L$ and $L_\star$ over all unit-norm prescribed sources, whereas $\widehat e_L$ tests only the next truncation order and can miss content omitted by both adjacent systems.

The finite-reference source and observation tails are
\begin{align}
    \kappa_{S,L\mid L_\star}
     & :=
    \frac{
    \|(\Pi_{L_\star}^{\mathrm{reg}}-\Pi_L^{\mathrm{reg}})
    \mathcal S\|_{\mathcal X\to\mathcal H^{\mathrm{reg}}}
    }{
    \|\Pi_{L_\star}^{\mathrm{reg}}
    \mathcal S\|_{\mathcal X\to\mathcal H^{\mathrm{reg}}}
    },
    \label{eq:finite_reference_source_tail_definition} \\
    \kappa_{O,L\mid L_\star}
     & :=
    \frac{
    \|\mathcal O
    (\Pi_{L_\star}^{\mathrm{out}}-\Pi_L^{\mathrm{out}})
    \|_{\mathcal H^{\mathrm{out}}\to\mathcal Y}
    }{
    \|\mathcal O
    \Pi_{L_\star}^{\mathrm{out}}
    \|_{\mathcal H^{\mathrm{out}}\to\mathcal Y}
    }.
    \label{eq:finite_reference_observation_tail_definition}
\end{align}
They quantify the fractions of the reference-resolved source-accessible regular trace and observation-visible outgoing trace, respectively, that are carried by orders above $L$ and are therefore omitted at order $L$.

To monitor internal collective amplification, let $\mathbf G_{r,L}=\bigoplus_i\mathbf G_{i,L}^{\mathrm{reg}}$ be the fixed regular-trace Gramian from \eqref{eq:trace_gram_entries} and define
\begin{equation}
    \rho_L
    :=
    \|\widetilde{\mathbf R}_L\|_2
    =
    \|\mathbf G_{r,L}^{1/2}
    \mathbf R_L
    \mathbf G_{r,L}^{-1/2}\|_2.
    \label{eq:numerical_trace_normalized_resolvent}
\end{equation}
This is the largest amplification of a retained incident trace measured in the fixed trace metric, equivalently the reciprocal smallest singular value of the metric-normalized collective system.

These finite-reference quantities connect directly to the three terms of $E_L$. For fixed $L$, compactness of the source and observation maps gives, as $L_\star\to\infty$,
\begin{equation}
    \kappa_{S,L\mid L_\star}
    \to
    \frac{\epsilon_{S,L}}{\|\mathcal S\|},
    \quad
    \kappa_{O,L\mid L_\star}
    \to
    \frac{\epsilon_{O,L}}{\|\mathcal O\|},
    \label{eq:finite_reference_tail_limits}
\end{equation}
whenever the corresponding limiting denominator is nonzero. Moreover, with $\delta_L=\|\mathcal A_L-\mathcal A\|$ as in \eqref{eq:lifted_collective_coupling},
\begin{align}
    \|\mathcal R_L^\sharp\|
     & =
    \max\{1,\rho_L\},
    \label{eq:numerical_lifted_resolvent_norm} \\
    \|\mathcal R_L^\sharp-\mathcal R\|
     & \leq
    \max\{1,\rho_L\}\,\delta_L\|\mathcal R\|.
    \label{eq:numerical_resolvent_perturbation_diagnostic}
\end{align}
Hence $\kappa_{O,L\mid L_\star}$ diagnoses the observation-tail factor in the first term of $E_L$, $\rho_L$ gives its resolvent factor exactly and quantifies the amplification of $\delta_L$ underlying the middle term, and $\kappa_{S,L\mid L_\star}$ diagnoses the source-tail factor in the third term. Likewise, $e_{L\mid L_\star}$ tends to $\|\mathcal G_L^{\mathrm{sca}}-\mathcal G^{\mathrm{sca}}\|/\|\mathcal G^{\mathrm{sca}}\|$ when the limiting operator is nonzero, and therefore checks the combined end-to-end effect of all three mechanisms. These diagnostics do not evaluate $E_L$ itself, but they are finite-reference realizations of precisely the quantities that govern it. Detailed derivations are given in Supplementary Note S8.

\begin{figure}[!t]
    \centering
    \subfloat[]{\includegraphics{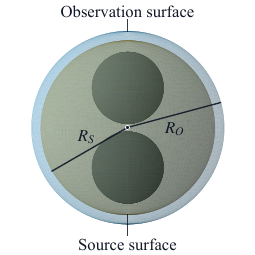}\label{fig:operator_convergence_setup_a}}\hfill
    \subfloat[]{\includegraphics{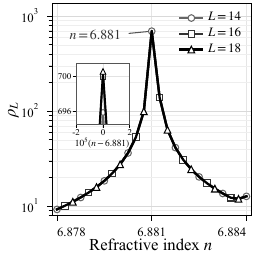}\label{fig:operator_convergence_setup_b}}
    \caption{Two-sphere configuration and sampled resolvent peak. (a) Surface-radius convention; the source and observation surfaces coincide in the fixed-surface test. (b) Refractive-index scan of $\rho_L$ for $L=14,16,18$; the inset magnifies the separation among the three curves at $n=6.881$.}
    \label{fig:operator_convergence_setup}
\end{figure}

The following finite-reference test examines how collective resolvent amplification affects practical truncation. We use two identical nonmagnetic lossless dielectric spheres with $\mu_r=1$, $\varepsilon_r=n^2$, and analytic Mie T-matrices, eliminating local extraction error. Taking the trace-sphere radius $R$ as the length unit, each scatterer has radius $a=0.97R$, with $kR=1.575$ and $ka=1.528$. Their trace-sphere centers are $\mathbf c_{1,2}=(0,0,\mp1.08R)$, giving $d_C=2.16R$ and a smallest concentric enclosing-sphere radius $R_C=R+d_C/2=2.08R$. For the source and observation radii in Fig.~\ref{fig:operator_convergence_setup}\subref{fig:operator_convergence_setup_a}, define $d_S:=k(R_S-R_C)$ and $d_O:=k(R_O-R_C)$. The fixed surfaces coincide at $R_S=R_O=2.334R$, or $d_S=d_O=0.4$.

\begin{figure}[!t]
    \centering
    \includegraphics{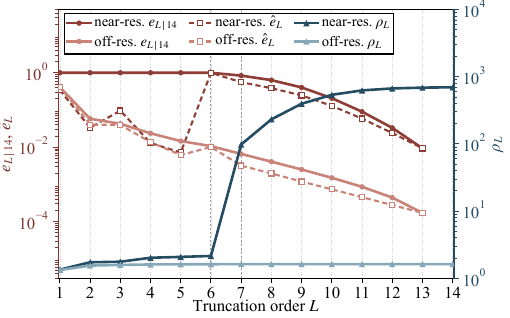}
    \caption{Fixed-surface truncation comparison for the near-resonant ($n=6.881$) and off-resonant ($n=7.000$) cases. The curves show $e_{L\mid14}$, $\widehat e_L$, and $\rho_L$; gray guides mark the $L=6\to7$ transition.}
    \label{fig:operator_convergence_comparison}
\end{figure}

With the geometry and trace metric fixed, we scan $n\in[6.878,6.884]$ at 25 equally spaced points for $L=14,16,18$. All three curves in Fig.~\ref{fig:operator_convergence_setup}\subref{fig:operator_convergence_setup_b} attain a common sampled maximum at $n=6.881$. We take this value as the near-resonant case and $n=7.000$ as the off-resonant control, with all other parameters unchanged. At the selected peak, $\rho_{14}=6.959\times10^2$, $\rho_{16}=7.000\times10^2$, and $\rho_{18}=7.006\times10^2$, a spread below $0.7\%$. With $L=18$ as a higher-order check, $e_{14\mid18}=5.262\times10^{-3}$ and $e_{16\mid18}=5.839\times10^{-4}$. We therefore use $L_\star=14$ as the finite reference.

To identify the angular content responsible for the resolvent peak, we next examine the truncation dependence of the retained collective system. Under the same fixed surface metrics, Fig.~\ref{fig:operator_convergence_comparison} reveals different behavior in the two material states. In the off-resonant control, $\rho_L$ remains of order unity, increasing from $1.33$ to $1.63$, while $e_{L\mid14}$ and $\widehat e_L$ closely track each other. In the present near-resonant case, $e_{L\mid14}$ remains close to unity through $L=6$, although several adjacent-order differences are small. Consecutive low-order systems omit essentially the same collective direction, so their mutual agreement does not imply agreement with the higher-order response. Across $L=6\to7$, $\rho_L$ jumps from $2.15$ to $97.9$, and $\widehat e_6=0.9997$. Since this step admits precisely the $l=7$ VSWF subspace, we examine the dominant input singular subspace of $\widetilde{\mathbf R}_L$. This subspace is two-dimensional for both $L=7$ and $14$, and the corresponding basis-invariant trace-normalized squared-coefficient fractions at $l=7$ are $99.4\%$ and $99.3\%$, respectively. The singular-subspace construction and separation assessment are detailed in Supplementary Note S8. The newly admitted $l=7$ subspace therefore carries the dominant near-singular input subspace. Once this subspace is retained, $\rho_L$ approaches its reference amplification and both error measures decrease together. Thus an adjacent-order indicator can suggest premature convergence until the dominant resolvent subspace is resolved.

\begin{figure}[!t]
    \centering
    \subfloat[]{\includegraphics[width=0.515\columnwidth]{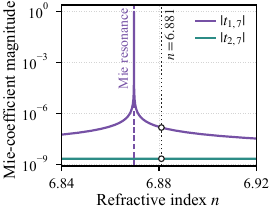}\label{fig:collective_resonance_diagnostic_mie}}\hfill
    \subfloat[]{\includegraphics[width=0.475\columnwidth]{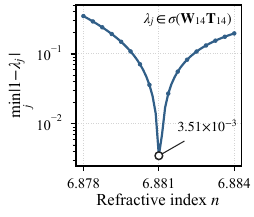}\label{fig:collective_resonance_diagnostic_wt}}
    \caption{Single-body and collective diagnostics for the $l=7$ sensitivity identified by the truncation scan. (a) Isolated-sphere $l=7$ Mie coefficients. (b) Minimum distance of the eigenvalues of $\mathbf W_{14}\mathbf T_{14}$ to unity.}
    \label{fig:collective_resonance_diagnostic}
\end{figure}

The localization at $l=7$ motivates a direct test of whether the observed sensitivity originates from an isolated-sphere Mie pole of the same angular order. Since the isolated-sphere T-matrix is diagonal in angular order and polarization, we examine its two $l=7$ coefficients in Fig.~\ref{fig:collective_resonance_diagnostic}\subref{fig:collective_resonance_diagnostic_mie}. A bracketed search for a zero of the imaginary part of the lossless Mie denominator locates the unity-magnitude resonance at $n=6.870$. At $n=6.881$, however, the corresponding coefficients are only $|t_{1,7}|=1.49\times10^{-7}$ and $|t_{2,7}|=2.15\times10^{-9}$, so the sampled peak is not a direct continuation of the isolated-sphere $l=7$ pole. The collective diagnostic in Fig.~\ref{fig:collective_resonance_diagnostic}\subref{fig:collective_resonance_diagnostic_wt} instead yields $\min_j|1-\lambda_j|=3.51\times10^{-3}$, indicating that an eigenvalue of $\mathbf W_{14}\mathbf T_{14}$ lies close to unity. Together, these results rule out direct inheritance from the same-order isolated-sphere pole and locate the observed near singularity in the coupled interaction resolvent.

To isolate external filtering, Fig.~\ref{fig:operator_clearance_kappa} varies the relevant clearance $d_S$ and $d_O$ while holding the two-sphere trace geometry and surface-metric fixed. All six finite-reference tails decrease with $L$ and their corresponding clearance. Under the present metrics, $\kappa_{S,L\mid14}<\kappa_{O,L\mid14}$ at every matched $d$. At $L=1$, increasing $d$ from $0.4$ to $2.0$ reduces the source and observation tails by about $7\%$ and $11\%$, respectively. At $L=13$, the same increase reduces both tails by a factor of about $5.6\times10^3$.

\begin{figure}[!t]
    \centering
    \includegraphics{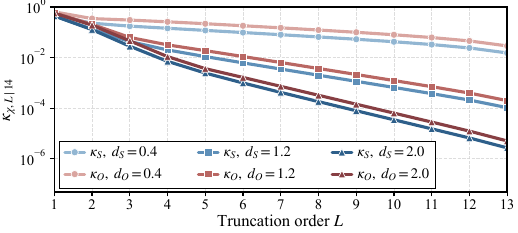}
    \caption{Source and observation finite-reference tail ratios at $L_\star=14$ for $d_S,d_O\in\{0.4,1.2,2.0\}$.}
    \label{fig:operator_clearance_kappa}
\end{figure}

\begin{figure}[!t]
    \centering
    \includegraphics{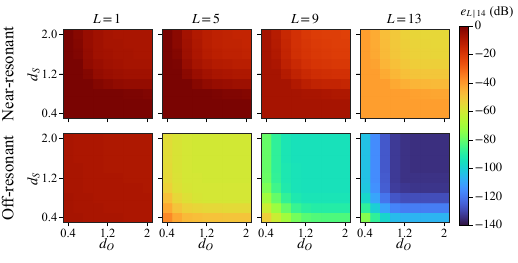}
    \caption{Joint maps of $e_{L\mid14}^{\mathrm{dB}}:=20\log_{10}e_{L\mid14}$ over $d_S,d_O\in[0.4,2.0]$ at $L=1,5,9,13$. Values are computed and displayed on a $9\times9$ clearance grid. The near- and off-resonant states occupy the upper and lower rows, respectively.}
    \label{fig:operator_clearance_heatmaps}
\end{figure}

Across the $9\times9$ grid in Fig.~\ref{fig:operator_clearance_heatmaps}, larger clearances generally lower $e_{L\mid14}$ in both material states, but with different sensitivities. In the present near-resonant case, moving either surface from $d=0.4$ to $2.0$ while holding the other at $d=0.4$ attenuates the numerator $\|\mathcal G_L^{\rm sca}-\mathcal G_{14}^{\rm sca}\|$ and reference norm $\|\mathcal G_{14}^{\rm sca}\|$ almost proportionally. At $L=13$, either one-sided displacement therefore reduces $e_{L\mid14}$ by at most $0.29$~dB, whereas moving both surfaces to $d_S=d_O=2.0$ reduces it by $14.2$~dB. The weak one-sided sensitivity is thus conditional on the opposite surface remaining at $d=0.4$. In the off-resonant control, the reference norm varies by less than $2\%$, so external filtering is expressed more directly in the relative error. Increasing only $d_S$ or $d_O$ to $2.0$, while holding the other at $0.4$, reduces $e_{L\mid14}$ by $27.7$ and $31.0$~dB at $L=13$, respectively. Increasing both reduces it by $58.7$~dB. This contrast identifies a coupled two-sided bottleneck in the near-resonant case: appreciable improvement occurs only after both external tails are suppressed.

\subsection{Accessible Scattering Channels}
\label{subsec:accessible_scattering_channels}

Having assessed finite-order convergence, Fig.~\ref{fig:accessible_scattering_channels} illustrates the 10 largest singular values of $\mathcal G_{14}^{\mathrm{sca}}$ for the same $d_S=d_O=0.4$ geometry. Each singular value is the scattered-field gain attained by a unit-norm source in the corresponding accessible channel. Consistent with the axial symmetry, each spectrum begins with a $\pm m$ pair: $m=\pm2$ near resonance and $m=\pm1$ in the off-resonant control. The corresponding leading singular values are $(\sigma_1,\sigma_2)=(29.068,29.068)$ and $(1.406,1.406)$, respectively. Near resonance, $\sigma_3=1.455$ and $\sigma_{10}=0.696$, giving $\sigma_2/\sigma_3=19.976$, while the leading pair accounts for $99.170\%$ of the total squared singular-value sum over all retained channels. By contrast, the off-resonant case has $\sigma_3=1.376$, $\sigma_{10}=0.586$, and $\sigma_2/\sigma_3=1.022$, with no comparable separation. The cross-state ratio $\sigma_1^{\mathrm{near}}/\sigma_1^{\mathrm{off}}=20.672$ is far below $\rho_{14}^{\mathrm{near}}/\rho_{14}^{\mathrm{off}}=427.987$. This disparity shows that internal amplification is only partly transferred to $\mathcal G_{14}^{\mathrm{sca}}$, whose gain also depends on $\mathbf T_{14}$, source accessibility, and observation visibility.

\begin{figure}[!t]
    \centering
    \subfloat[]{\includegraphics{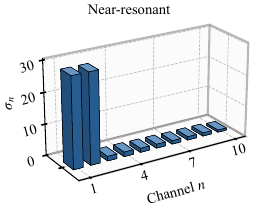}\label{fig:accessible_scattering_channels_a}}\hfill
    \subfloat[]{\includegraphics{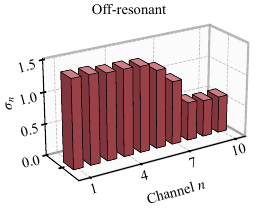}\label{fig:accessible_scattering_channels_b}}
    \caption{Leading 10 accessible-channel singular values at $L=14$ and $d_S=d_O=0.4$. (a) Near-resonant state with an $m=\pm2$ leading pair. (b) Off-resonant state with an $m=\pm1$ leading pair.}
    \label{fig:accessible_scattering_channels}
\end{figure}

Corollary~\ref{cor:accessible_scattering_channels}, together with the explicit surface formulas \eqref{eq:accessible_channel_electric_current}--\eqref{eq:accessible_channel_observation_mode}, recovers the source and observation members of each scattering channel from the finite metric core. The observation member is unit normalized, so the actual scattered field of the $n$th singular source is $\sigma_n\widehat{\mathbf E}_n^{\mathrm{sca}}$. Because each leading $\pm m$ pair spans a degenerate two-dimensional subspace, its orthonormal basis is nonunique. We phase-fix the two $m$-resolved singular sources $\mathbf q_1,\mathbf q_2$ and visualize the unit-norm standing-wave representative $\mathbf q_{\mathrm{dom}}=(\mathbf q_1+\mathbf q_2)/\sqrt{2}$. Fig.~\ref{fig:accessible_optimal_source_field_vectors} shows its electric and magnetic surface currents in the first row and its actual scattered field in the second. Although both displayed sources have unit $\mathcal X$-norm, the near- and off-resonant output norms are $29.068$ and $1.406$, respectively. The common output scale makes this enhancement directly visible. An independent forward solution of the retained Foldy--Lax system using the reconstructed source reproduces the two outputs with relative $\mathcal Y$-norm errors of $9.370\times10^{-10}$ and $9.150\times10^{-12}$, respectively. The stable finite-core realization, degeneracy treatment, and forward check are detailed in Supplementary Note S9.

\begin{figure}[!t]
    \centering
    \includegraphics[width=\columnwidth]{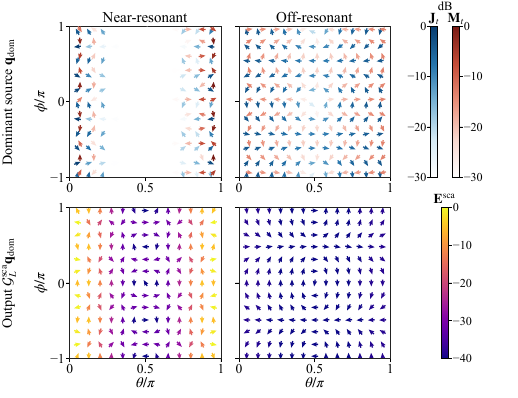}
    \caption{One nonunique, phase-fixed representative of each leading $\pm m$ pair at $L=14$ and $d_S=d_O=0.4$; the rows show equivalent source currents and singular-value-weighted scattered fields.}
    \label{fig:accessible_optimal_source_field_vectors}
\end{figure}

Fig.~\ref{fig:accessible_optimal_source_field_vectors} shows the external source and field patterns but not the internal angular orders carrying the enhancement. For each member of the leading pair, let $\mathbf f_n^{\mathrm{exc}}:=P_{14}^{\mathrm{reg}}\mathcal S_{14}\mathbf q_n$ be its directly excited regular trace and $\mathbf f_n^{\mathrm{sca}}:=P_{14}^{\mathrm{out}}\mathbf K_{14}\mathcal S_{14}\mathbf q_n$ the resulting scattered outgoing trace. Set $\Delta\Pi_l^{s}:=\Pi_l^{s}-\Pi_{l-1}^{s}$ for $s\in\{\mathrm{reg},\mathrm{out}\}$, with $\Pi_0^{s}=0$. Using $\|\cdot\|_{\mathrm{reg}}$ and $\|\cdot\|_{\mathrm{out}}$ for the fixed norms of the two trace spaces, we define the pair-averaged angular fractions by
\begin{align}
    w_l^{\mathrm{exc}}
     & :=
    \frac{\sum_{n=1}^{2}\|\Delta\Pi_l^{\mathrm{reg}}\mathbf f_n^{\mathrm{exc}}\|_{\mathrm{reg}}^2}
    {\sum_{n=1}^{2}\|\mathbf f_n^{\mathrm{exc}}\|_{\mathrm{reg}}^2},
    \label{eq:accessible_channel_trace_weight_exc} \\
    w_l^{\mathrm{sca}}
     & :=
    \frac{\sum_{n=1}^{2}\|\Delta\Pi_l^{\mathrm{out}}\mathbf f_n^{\mathrm{sca}}\|_{\mathrm{out}}^2}
    {\sum_{n=1}^{2}\|\mathbf f_n^{\mathrm{sca}}\|_{\mathrm{out}}^2}.
    \label{eq:accessible_channel_trace_weight_sca}
\end{align}
These fractions sum to one and are invariant under a unitary change of basis within the leading two-dimensional channel subspace. They compare the marginal trace content before and after the coherent multiple-scattering map; they are not an incoherent $l$-to-$l$ power balance. Their coordinate realization and invariance are derived in Supplementary Note S9.

\begin{figure}[!t]
    \centering
    \subfloat[]{\includegraphics{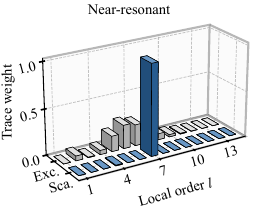}\label{fig:accessible_channel_trace_weights_a}}\hfill
    \subfloat[]{\includegraphics{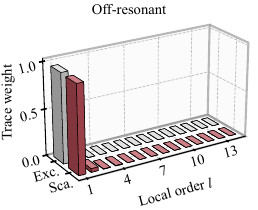}\label{fig:accessible_channel_trace_weights_b}}
    \caption{Angular trace-metric fractions of the leading two-dimensional channel subspace at $L=14$ and $d_S=d_O=0.4$. (a) Near-resonant state. (b) Off-resonant state. \emph{Exc.} denotes the directly excited regular trace, and \emph{Sca.} denotes the corresponding scattered outgoing trace.}
    \label{fig:accessible_channel_trace_weights}
\end{figure}

Fig.~\ref{fig:accessible_channel_trace_weights} reveals the angular reorganization behind the spectral enhancement. In the near-resonant pair, $79.6\%$ of the excitation regular-trace weight lies at $l=5$--$8$, while $99.25\%$ of the scattered outgoing-trace weight is concentrated at $l=7$. In the off-resonant pair, the corresponding excitation and scattered weights are instead $96.35\%$ and $95.31\%$ concentrated at $l=1$. In the $\mathcal X$ and $\mathcal Y$ metrics, the near-resonant $n=3,4$ source and observation subspaces align closely with their off-resonant $n=1,2$ counterparts, with minimum principal cosines of $0.9941$ and $0.9967$, respectively. The collective peak therefore promotes a distinct high-order pair above an otherwise preserved low-order channel family rather than merely amplifying the off-resonant optimum.

Together, these numerical studies separate implementation fidelity, truncation convergence, and channel strength. Full-wave field comparisons validate representative excitations, operator diagnostics expose worst-case truncation error and distinguish external tail suppression from internal resolvent amplification, and the channel analysis identifies the corresponding unit-source responses. In the two-sphere example, near resonance creates a coupled two-sided convergence bottleneck and promotes a distinct high-order accessible channel.

\section{Conclusion}
\label{sec:conclusion}

This work formulates a source-to-observation scattering Green operator on fixed continuous Hilbert spaces and establishes its Maxwell interpretation through a trace-space factorization under isolated-body well-posedness and collective uniqueness. For fixed, pairwise-disjoint auxiliary trace spheres enclosing the scatterers and bounded local transition operators, the finite VSWF realizations of the scattering Green operator converge in operator norm. The resulting structural bound separates external modal tails from collective resolvent sensitivity. A finite metric core preserves the retained operator's nonzero singular values without carrying external-support quadrature DoF into the singular-value problem and reconstructs the corresponding accessible source--field pairs. Mixed-cluster full-wave tests quantify implementation fidelity, while finite-reference near-resonant tests expose the distinct roles of internal amplification, source accessibility, observation visibility, and channel reordering.

\appendices
\setcounter{equation}{0}
\renewcommand{\theequation}{A\arabic{equation}}
\renewcommand{\theHequation}{A\arabic{equation}}

\section{Proof of the Source-to-Modal Excitation Map}
\label{app:continuous_source_projection}

Set $\mathbf r_i=\mathbf r-\mathbf c_i$ and $\mathbf E_i^0(\mathbf r_i)=\mathbf E^0(\mathbf c_i+\mathbf r_i)$. Then \eqref{eq:background_radiation_integral} becomes
\begin{equation}
    \begin{aligned}
        \mathbf E_i^0(\mathbf r_i)
         & =
        \mathrm{j}\omega\mu
        \int_{\Xi_{\mathrm{src}}}
        \overline{\overline{\mathbf G}}^0(\mathbf c_i+\mathbf r_i,\mathbf r')
        \cdot\mathbf J(\mathbf r')\,\mathrm d\mu_{\mathrm{src}}(\mathbf r') \\
         & \quad
        -
        \nabla_i\times
        \int_{\Xi_{\mathrm{src}}}
        \overline{\overline{\mathbf G}}^0(\mathbf c_i+\mathbf r_i,\mathbf r')
        \cdot\mathbf M(\mathbf r')\,\mathrm d\mu_{\mathrm{src}}(\mathbf r').
    \end{aligned}
    \label{eq:appendix_source_field_local}
\end{equation}
Here $\nabla_i$ acts on $\mathbf r_i$. The separation hypothesis gives some $\rho_i<1$ such that
\begin{equation}
    \frac{|\mathbf r_i|}{|\mathbf r'-\mathbf c_i|}
    \leq
    \rho_i
    <
    1
    \label{eq:appendix_separated_ratio}
\end{equation}
for $|\mathbf r_i|\leq R_i$ and $\mathbf r'$ in the source support. On this separated product set the free-space dyadic is smooth and admits the VSWF expansion in the convention of Section~\ref{subsec:vswf} \cite{chew1995,doicu2006}
\begin{equation}
    \overline{\overline{\mathbf G}}^0(\mathbf c_i+\mathbf r_i,\mathbf r')
    =
    \frac{\mathrm{j}k}{\pi}
    \lim_{L\to\infty}
    \sum_{\alpha\in\Lambda_L}
    \bm{\Phi}_{\alpha}^{\mathrm{reg}}(\mathbf r_i)
    \otimes
    \bm{\Phi}_{\overline\alpha}^{\mathrm{out}}(\mathbf r'-\mathbf c_i).
    \label{eq:appendix_dyadic_green_vswf_expansion}
\end{equation}
The partial sums and one field-variable curl converge absolutely and uniformly on such separated compact sets. For the magnetic-current contribution, the second Cauchy component follows from $\nabla\times\nabla\times\overline{\overline{\mathbf G}}^0=k^2\overline{\overline{\mathbf G}}^0$ on the separated source-free product set. Thus the termwise integration and curl operations for every $\mathbf q\in\mathcal X$ are valid. Point sources, or other distributions not represented by the chosen Hilbert support, use the same formulas as distributional pairings outside the theorem.

Substitution into the electric-current term yields
\begin{equation}
    \begin{aligned}
        \mathbf E_i^{J}(\mathbf r_i)
         & =
        -\frac{\omega\mu k}{\pi}
        \lim_{L\to\infty}
        \sum_{\alpha\in\Lambda_L}
        \bm{\Phi}_{\alpha}^{\mathrm{reg}}(\mathbf r_i) \\
         & \quad{}\times
        \int_{\Xi_{\mathrm{src}}}
        \bm{\Phi}_{\overline\alpha}^{\mathrm{out}}(\mathbf r'-\mathbf c_i)
        \cdot\mathbf J(\mathbf r')\,\mathrm d\mu_{\mathrm{src}}(\mathbf r').
    \end{aligned}
    \label{eq:appendix_electric_source_expansion}
\end{equation}
For the magnetic term, $\nabla_i\times\bm{\Phi}_{\alpha}^{\mathrm{reg}}=k\bm{\Phi}_{\alpha^\star}^{\mathrm{reg}}$ and relabelling give
\begin{equation}
    \begin{aligned}
        \mathbf E_i^{M}(\mathbf r_i)
         & =
        -\frac{\mathrm{j}k^2}{\pi}
        \lim_{L\to\infty}
        \sum_{\alpha\in\Lambda_L}
        \bm{\Phi}_{\alpha}^{\mathrm{reg}}(\mathbf r_i) \\
         & \quad{}\times
        \int_{\Xi_{\mathrm{src}}}
        \bm{\Phi}_{\overline{\alpha^\star}}^{\mathrm{out}}(\mathbf r'-\mathbf c_i)
        \cdot\mathbf M(\mathbf r')\,\mathrm d\mu_{\mathrm{src}}(\mathbf r').
    \end{aligned}
    \label{eq:appendix_magnetic_source_expansion}
\end{equation}
Adding \eqref{eq:appendix_electric_source_expansion} and \eqref{eq:appendix_magnetic_source_expansion}, then comparing with the unique regular expansion \eqref{eq:local_external_excitation_expansion}, gives the coefficients $a_{i,\alpha}^{0}$ in \eqref{eq:continuous_source_modal_projection}. Taking Cauchy traces termwise on $\Gamma_i$ identifies their retained entries with $Q_{i,L}^{\mathrm{reg}}\mathcal S_i\mathbf q$. Stacking the bodies gives $\mathcal S_L\mathbf q=\mathbf a^{\mathrm{exc}}$, completing the proof of Lemma~\ref{lem:continuous_source_modal_excitation}.

\setcounter{equation}{0}
\renewcommand{\theequation}{B\arabic{equation}}
\renewcommand{\theHequation}{B\arabic{equation}}

\section{Proof of the Modal Green-Operator Convergence}
\label{app:modal_green_convergence_proof}

For readability, write $r$ and $o$ for $\mathrm{reg}$ and $\mathrm{out}$. We work in the trace setting of Section~\ref{subsec:fixed_sphere_trace_setting}. By the positive-separation compactness in Section~\ref{subsec:abstract_green_factorization}, $\mathcal S$, $\mathcal O$, and $\mathcal W$ are compact. Since $\mathcal T$ is bounded, $\mathcal A:=\mathcal W\mathcal T$ is compact. For compact $\mathcal C$ and strongly convergent orthogonal projections $\Pi_L$, both $\|(\mathbb I-\Pi_L)\mathcal C\|$ and $\|\mathcal C(\mathbb I-\Pi_L)\|$ vanish \cite{kato1995,simon2005}. Using
\begin{equation}
    \begin{aligned}
        \mathcal A-\mathcal A_L
        ={} &
        (\mathbb I-\Pi_L^r)\mathcal A
        +\Pi_L^r\mathcal W(\mathbb I-\Pi_L^o)\mathcal T \\
            & +\Pi_L^r\mathcal W\Pi_L^o\mathcal T
        (\mathbb I-\Pi_L^r),
    \end{aligned}
    \label{eq:appendix_b_coupling_error_split}
\end{equation}
the first two terms vanish. For the last, write
$\mathcal W\Pi_L^o\mathcal T(\mathbb I-\Pi_L^r)
    =\mathcal A(\mathbb I-\Pi_L^r)
    -\mathcal W(\mathbb I-\Pi_L^o)\mathcal T(\mathbb I-\Pi_L^r)$.
Both terms vanish in norm. Hence $\delta_L\to0$.

If $\|\mathcal R\|\delta_L<1$, then
\begin{equation}
    \mathbb I-\mathcal A_L
    =
    (\mathbb I-\mathcal A)
    \left[\mathbb I-\mathcal R(\mathcal A_L-\mathcal A)\right]
    \label{eq:appendix_b_resolvent_factorization}
\end{equation}
is invertible by the Neumann lemma \cite{kato1995}, giving the first bound in \eqref{eq:collective_resolvent_error_bound}. The resolvent identity
\begin{equation}
    \mathcal R_L^\sharp-\mathcal R
    =
    \mathcal R_L^\sharp(\mathcal A_L-\mathcal A)\mathcal R
    \label{eq:appendix_b_resolvent_identity}
\end{equation}
gives the second. The range of $\mathcal A_L$ lies in $\operatorname{ran}\Pi_L^r$, and $\mathcal A_L$ vanishes on its orthogonal complement. The analysis--synthesis identities $Q_L^sP_L^s=\mathbf I$ and $P_L^sQ_L^s=\Pi_L^s$ give
\begin{equation}
    Q_L^r\mathcal A_LP_L^r
    =
    Q_L^r\mathcal W P_L^oQ_L^o\mathcal T P_L^r
    =
    \mathbf W_L\mathbf T_L.
    \label{eq:appendix_b_retained_interaction_coordinates}
\end{equation}
Thus the retained coordinate matrix is exactly $\mathbf W_L\mathbf T_L$, and the full-space inverse is
\begin{equation}
    \mathcal R_L^\sharp
    =
    \mathbb I-\Pi_L^r
    +P_L^{\mathrm{reg}}\mathbf R_LQ_L^{\mathrm{reg}},
    \quad
    P_L^{\mathrm{reg}}\mathbf R_LQ_L^{\mathrm{reg}}
    =\mathcal R_L^\sharp\Pi_L^r.
    \label{eq:appendix_b_finite_full_resolvent_relation}
\end{equation}
Equation~\eqref{eq:appendix_b_finite_full_resolvent_relation} gives $\widehat{\mathcal K}_L=\Pi_L^o\mathcal T\mathcal R_L^\sharp\Pi_L^r$. Norm convergence of $\mathcal R_L^\sharp$, strong convergence of the projections, and boundedness of $\mathcal T$ imply $\widehat{\mathcal K}_L\to\mathcal K$ strongly and $\sup_L\|\widehat{\mathcal K}_L\|<\infty$ for sufficiently large $L$.

The same compact-projection argument gives $\epsilon_{S,L},\epsilon_{O,L}\to0$. Subtracting $\mathcal O\mathcal T\mathcal R\mathcal S$ from $\mathcal O\Pi_L^o\mathcal T\mathcal R_L^\sharp\Pi_L^r\mathcal S$ yields the bound $E_L$ in \eqref{eq:green_operator_convergence_bound}, proving Theorem~\ref{thm:modal_green_operator_convergence}.

\bibliographystyle{IEEEtran}
\bibliography{ref}

\end{document}

% --- supplement: supplementary_material.tex ---

\bstctlcite{IEEEtranBSTCTL:NoDash}

\title{Supplementary Material:\\ Finite-Modal Realization and Operator-Norm Convergence of a Source-to-Observation Electromagnetic Scattering Green Operator}

\author{
    Zhukang Wang,~\IEEEmembership{Graduate Student Member,~IEEE,}
    Da Li,~\IEEEmembership{Member,~IEEE,}
    Ruifeng Li,
    Jinyan Ma,
    Jiarun Hu,
    Jiahui Wang,
    Anqi Xia,
    Tengjiao Wang,~\IEEEmembership{Member,~IEEE,}
    Said Mikki,
    and~Er-Ping Li,~\IEEEmembership{Fellow,~IEEE}
}

\maketitle

\begin{table*}[!b]
    \caption{Guide to the supplementary notes and their roles in the main paper}
    \label{tab:supplementary_note_guide}
    \centering
    \footnotesize
    \setlength{\tabcolsep}{3.5pt}
    \renewcommand{\arraystretch}{1.04}
    \begin{tabularx}{\textwidth}{C{0.045\textwidth}L{0.205\textwidth}L{0.305\textwidth}Y}
        \toprule
        \textbf{No.}
         & \textbf{Title}
         & \textbf{Main content}
         & \textbf{Location(s) served in the main paper}                                                                                                                                         \\
        \midrule
        \hyperref[supp:trace_spaces]{\ref*{supp:trace_spaces}}
         & Maxwell Cauchy-Trace Spaces on an Enclosing Sphere
         & Ambient Maxwell trace topology, continuity of the Cauchy trace, closed Calder\'on ranges, and uniqueness of regular and outgoing representatives.
         & Section II-A, the two paragraphs defining and characterizing the regular and outgoing trace ranges after the Cauchy-trace definition.                                                 \\
        \specialrule{0.3pt}{0.1pt}{0.1pt}
        \hyperref[supp:separated_compact_maps]{\ref*{supp:separated_compact_maps}}
         & Compact Maxwell Maps Across Separated Supports
         & Separated-support smoothing and compact embeddings for the source, observation, and interbody continuation maps.
         & Section II-A, the paragraph following the source and observation maps; Section II-B, the paragraph following the interbody continuation operator.                                     \\
        \specialrule{0.3pt}{0.1pt}{0.1pt}
        \hyperref[supp:fredholm_exactness]{\ref*{supp:fredholm_exactness}}
         & Fredholm Solvability and Exactness of the Trace Assembly
         & Bodywise outgoing uniqueness, Fredholm invertibility of the collective trace system, and exact reconstruction of the physical multiple-scattering field.
         & Section II-B, the paragraph following the collective incident-trace equation, and the proof of Proposition 1.                                                                         \\
        \specialrule{0.3pt}{0.1pt}{0.1pt}
        \hyperref[supp:vswf_trace_metric]{\ref*{supp:vswf_trace_metric}}
         & Fixed VSH Trace Metric and VSWF Trace Projections
         & Fixed VSH Hilbert metric, VSWF trace Gram entries, modal projections, contractivity, and strong convergence.
         & Section III-B, from the fixed trace metric through the modal-projection stability statement.                                                                                          \\
        \specialrule{0.3pt}{0.1pt}{0.1pt}
        \hyperref[supp:surface_interface_grams]{\ref*{supp:surface_interface_grams}}
         & Surface Interface Gramians and Channel Recovery
         & Hilbert adjoints of the surface interfaces, source and observation Gramians, and explicit recovery of source and field channels.
         & Section III-E, the surface-interface Gram and channel-recovery formulas; used again for the reconstructions in Section IV-C.                                                          \\
        \specialrule{0.3pt}{0.1pt}{0.1pt}
        \hyperref[supp:nonspherical_tmatrix_extraction]{\ref*{supp:nonspherical_tmatrix_extraction}}
         & External Extraction of Nonspherical Local T-Matrices
         & Discrete geometries and materials, batched SCUFF-EM solves, outgoing-wave fitting, convention conversion, and replay diagnostics.
         & Section IV-A, the paragraph introducing the three nonspherical T-matrices, together with Table I.                                                                                     \\
        \specialrule{0.3pt}{0.1pt}{0.1pt}
        \hyperref[supp:comsol_fullwave_protocol]{\ref*{supp:comsol_fullwave_protocol}}
         & COMSOL Full-Wave Reference and Benchmark Protocol
         & Computational domain, PML, mesh, solver, phasor mapping, comparison norms, prescribed sources, nested scenes, and timing boundaries.
         & Section IV-A, the paragraphs specifying the COMSOL setup, the three sources, field-error interpretation, and runtime measurements, together with Table II.                            \\
        \specialrule{0.3pt}{0.1pt}{0.1pt}
        \hyperref[supp:finite_reference_diagnostics]{\ref*{supp:finite_reference_diagnostics}}
         & Finite-Reference Operator Diagnostics
         & Semianalytic external-sphere Gramians; nested-reference errors; source and observation tails; trace-normalized resolvent and angular-subspace diagnostics.
         & Section IV-B, the definitions and interpretation of the finite-reference errors, tail ratios, and resolvent amplification, plus the paragraph diagnosing the dominant $l=7$ subspace. \\
        \specialrule{0.3pt}{0.1pt}{0.1pt}
        \hyperref[supp:accessible_channel_diagnostics]{\ref*{supp:accessible_channel_diagnostics}}
         & Finite-Modal Realization and Diagnostics of Accessible Channels
         & QR realization of the metric core, doublet and azimuthal-sector diagnostics, basis-invariant angular fractions, and the independent forward check.
         & Section IV-C, the paragraphs reporting the channel spectra, reconstructed representative and forward check, and angular trace fractions.                                              \\
        \bottomrule
    \end{tabularx}

    \vspace{3em}
    \caption{Core notation used throughout the supplementary material}
    \label{tab:supplementary_core_notation}
    \centering
    \footnotesize
    \setlength{\tabcolsep}{3.5pt}
    \renewcommand{\arraystretch}{1.08}
    \begin{tabularx}{\textwidth}{C{0.145\textwidth}L{0.315\textwidth}C{0.145\textwidth}Y}
        \toprule
        \textbf{Symbol} & \textbf{Meaning}                                                                                  & \textbf{Symbol} & \textbf{Meaning} \\
        \midrule
        $B_i,\ \Gamma_i$
                        & Enclosing ball and trace sphere associated with body $i$.
                        & $\mathcal X,\ \mathcal Y$
                        & Prescribed source and observation Hilbert spaces.                                                                                      \\
        $\mathcal H^{\mathrm{reg}},\ \mathcal H^{\mathrm{out}}$
                        & Product spaces of regular and outgoing Maxwell Cauchy traces.
                        & $\mathcal S,\ \mathcal T,\ \mathcal W,\ \mathcal O$
                        & Source, local-transition, interbody-continuation, and observation maps.                                                                \\
        $\mathcal R,\ \mathcal K$
                        & Collective resolvent and cluster transition operator.
                        & $P_L^s,\ Q_L^s,\ \Pi_L^s$
                        & Modal synthesis, analysis, and orthogonal projection for $s\in\{\mathrm{reg},\mathrm{out}\}$.                                          \\
        $\mathbf T_L,\ \mathbf W_L,\ \mathbf R_L,\ \mathbf K_L$
                        & Retained coordinate matrices for local response, continuation, resolvent, and cluster transition.
                        & $\mathcal G^{\mathrm{sca}},\ \mathcal G_L^{\mathrm{sca}}$
                        & Exact and finite-modal scattering Green operators.                                                                                     \\
        \bottomrule
    \end{tabularx}
\end{table*}

\clearpage

\section{Maxwell Cauchy-Trace Spaces on an Enclosing Sphere}
\label{supp:trace_spaces}

This note supplies the trace-space details used in Section~II of the main paper. It concerns only source-free solutions of the homogeneous background Maxwell equation on either side of an artificial enclosing sphere. In particular, a field regular in the enclosing ball is a background incident-field representative; it is not the physical field inside the material support.

For the fixed background wavenumber $k>0$, let $B_D=B_R(\mathbf c)$ be an open ball with boundary $\Gamma_D=\partial B_D$ and outward unit normal $\mathbf n_D$. Set
\begin{equation}
    \mathbf X_D
    :=
    H^{-1/2}(\operatorname{div}_{\Gamma_D},\Gamma_D)
    \times
    H^{-1/2}(\operatorname{div}_{\Gamma_D},\Gamma_D).
    \label{eq:supp_ambient_trace_space}
\end{equation}
The two factors in \eqref{eq:supp_ambient_trace_space} carry their standard Maxwell graph norms, and $\mathbf X_D$ carries the associated product Hilbert structure. Any other fixed Hilbert inner product inducing the same product topology gives the same notions of closedness, boundedness, and compactness used in the main paper.

For a Lipschitz domain $\Omega$, the tangential trace $\mathbf u\mapsto\mathbf n\times\mathbf u$ extends continuously from $H(\operatorname{curl},\Omega)$ to $H^{-1/2}(\operatorname{div}_{\partial\Omega},\partial\Omega)$ \cite{buffa2002traces,nedelec2001}. Since $\mathbf E\in H(\operatorname{curl}^2,\Omega)$ means that both $\mathbf E$ and $\nabla\times\mathbf E$ belong to $H(\operatorname{curl},\Omega)$, both possess continuous tangential traces. On $\Gamma_D$, define the Cauchy trace
\begin{equation}
    \gamma_D^{\mathrm C}\mathbf E
    :=
    \left(
    \mathbf n_D\times\mathbf E,
    \mathbf n_D\times k^{-1}\nabla\times\mathbf E
    \right)\big|_{\Gamma_D}
    \label{eq:supp_cauchy_trace}
\end{equation}
The trace pair in \eqref{eq:supp_cauchy_trace} therefore belongs to $\mathbf X_D$, and $\gamma_D^{\mathrm C}$ is continuous on the corresponding $H(\operatorname{curl}^2)$ field class. The factor $k^{-1}$ changes only the physical normalization of the second component and does not change its trace regularity.

With $\mathcal M_k$ defined in the main paper, the trace ranges are
\begin{align}
    \mathcal H^{\mathrm{reg}}(D)
     & :=
    \gamma_D^{\mathrm C}\mathcal M_k(B_D),
    \label{eq:supp_regular_trace_range} \\
    \mathcal H^{\mathrm{out}}(D)
     & :=
    \gamma_D^{\mathrm C}
    \left\{
    \mathbf E\in\mathcal M_k(\mathbb R^3\setminus\overline B_D)
    :
    \mathbf E \,\,\text{outgoing}
    \right\}.
    \label{eq:supp_outgoing_trace_range}
\end{align}
The label $D$ is retained only to associate the auxiliary sphere with its enclosed body. The spaces in \eqref{eq:supp_regular_trace_range}--\eqref{eq:supp_outgoing_trace_range} themselves depend on the fixed background Maxwell operator, the wavenumber $k$, the sphere $\Gamma_D$, and the regular or outgoing branch, but not on the material support $\Omega_D$ or the parameters $\Theta_D$ inside the sphere.
Here the exterior instance of $\mathcal M_k$ has $H(\operatorname{curl}^2)$ regularity up to $\Gamma_D$ and the corresponding local regularity at infinity, as stipulated in the main paper. Outgoing means the Silver--M\"uller radiation condition \cite{chew1995,colton2013}. Closedness follows from the Calder\'on characterization below.

\begin{proposition}[Closed trace ranges and unique field representatives]
    \label{prop:supp_trace_range_characterization}
    The spaces $\mathcal H^{\mathrm{reg}}(D)$ and $\mathcal H^{\mathrm{out}}(D)$ in \eqref{eq:supp_regular_trace_range}--\eqref{eq:supp_outgoing_trace_range} are closed subspaces of $\mathbf X_D$ and hence are Hilbert spaces with the inherited inner products. Moreover, the Cauchy trace is injective on $\mathcal M_k(B_D)$ and on the outgoing subclass of $\mathcal M_k(\mathbb R^3\setminus\overline B_D)$. Consequently, every element of $\mathcal H^{\mathrm{reg}}(D)$ or $\mathcal H^{\mathrm{out}}(D)$ has a unique regular or outgoing field representative, respectively.
\end{proposition}

\begin{proof}
    Let $\gamma_{D,-}^{\mathrm C}$ and $\gamma_{D,+}^{\mathrm C}$ denote the one-sided Cauchy traces taken from $B_D$ and from $\mathbb R^3\setminus\overline B_D$, respectively, with both traces evaluated using the normal $\mathbf n_D$ fixed above. The main paper suppresses the side marker because the domain of each field makes the relevant trace unambiguous. Any difference between this convention and a standard Maxwell Calder\'on convention in the one-sided normal sign, field ordering, or the fixed factor $k^{-1}$ is represented by a fixed boundedly invertible coordinate map on $\mathbf X_D$. Conjugation by this map preserves boundedness and idempotence and maps the standard Cauchy-data ranges onto those used below.

    The Maxwell representation formula associates with a pair $\mathbf f\in\mathbf X_D$ an interior background field $\mathcal P_D^{\mathrm{reg}}\mathbf f$ and a radiating exterior background field $\mathcal P_D^{\mathrm{out}}\mathbf f$. The layer-potential mapping properties make these representation maps continuous into the corresponding local $H(\operatorname{curl}^2)$ solution spaces, and the tangential jump relations define bounded Calder\'on operators on $\mathbf X_D$ \cite{nedelec2001,buffa2003maxwellbem,claeys2012multitrace,kristensson2020}. With the signs of the representation potentials fixed by the common normal convention, set
    \begin{equation}
        \mathcal C_D^{\mathrm{reg}}
        :=
        \gamma_{D,-}^{\mathrm C}\mathcal P_D^{\mathrm{reg}},
        \quad
        \mathcal C_D^{\mathrm{out}}
        :=
        \gamma_{D,+}^{\mathrm C}\mathcal P_D^{\mathrm{out}}.
        \label{eq:supp_calderon_operators}
    \end{equation}
    The representation and jump formulas give
    \begin{align}
        (\mathcal C_D^{\mathrm{reg}})^2
         & =
        \mathcal C_D^{\mathrm{reg}},
         &
        \operatorname{ran}\mathcal C_D^{\mathrm{reg}}
         & =
        \mathcal H^{\mathrm{reg}}(D),
        \label{eq:supp_regular_calderon_range} \\
        (\mathcal C_D^{\mathrm{out}})^2
         & =
        \mathcal C_D^{\mathrm{out}},
         &
        \operatorname{ran}\mathcal C_D^{\mathrm{out}}
         & =
        \mathcal H^{\mathrm{out}}(D).
        \label{eq:supp_outgoing_calderon_range}
    \end{align}
    Indeed, applying a representation potential to arbitrary boundary data produces a field of the indicated class, so the corresponding Calder\'on range is contained in the appropriate trace range. Conversely, the representation formula reproduces every regular interior solution and every radiating exterior solution from its Cauchy data; its trace is therefore a fixed point of the corresponding Calder\'on operator. This proves the range identities in \eqref{eq:supp_regular_calderon_range} and \eqref{eq:supp_outgoing_calderon_range}.

    A bounded projector has closed range because
    \begin{equation}
        \operatorname{ran}\mathcal C_D^{s}
        =
        \ker(\mathbb I-\mathcal C_D^{s}),
        \label{eq:supp_projector_closed_range}
    \end{equation}
    and the kernel of a bounded operator is closed. Here $s\in\{\mathrm{reg},\mathrm{out}\}$ and $\mathbb I$ is the identity on $\mathbf X_D$. Equations~\eqref{eq:supp_regular_calderon_range} and \eqref{eq:supp_outgoing_calderon_range} therefore show that both spaces in \eqref{eq:supp_regular_trace_range}--\eqref{eq:supp_outgoing_trace_range} are closed in $\mathbf X_D$. They are consequently Hilbert spaces with the inherited inner products.

    It remains to prove uniqueness of the representatives. If two fields in $\mathcal M_k(B_D)$ have the same Cauchy trace, their difference $\mathbf U$ has zero Cauchy data. The interior representation formula gives
    \begin{equation}
        \mathbf U
        =
        \mathcal P_D^{\mathrm{reg}}
        \gamma_{D,-}^{\mathrm C}\mathbf U
        =
        \mathbf 0
        \quad\text{in }B_D.
        \label{eq:supp_interior_trace_injectivity}
    \end{equation}
    This is uniqueness for complete Maxwell Cauchy data and does not require excluding an interior eigenfrequency. Likewise, if two radiating exterior fields have the same Cauchy trace, their difference is radiating and has zero Cauchy data. The exterior representation formula, equivalently exterior radiating uniqueness together with the Maxwell Rellich lemma, gives
    \begin{equation}
        \mathbf U
        =
        \mathcal P_D^{\mathrm{out}}
        \gamma_{D,+}^{\mathrm C}\mathbf U
        =
        \mathbf 0
        \quad\text{in }\mathbb R^3\setminus\overline B_D
        \label{eq:supp_exterior_trace_injectivity}
    \end{equation}
    \cite{nedelec2001,colton2013}. Thus both restricted trace maps are injective. Their surjectivity onto the two spaces is part of the range definitions in \eqref{eq:supp_regular_trace_range}--\eqref{eq:supp_outgoing_trace_range}, which completes the proof.
\end{proof}

Proposition~\ref{prop:supp_trace_range_characterization} justifies treating the regular and outgoing Cauchy traces as Hilbert interface states in the main paper. It does not assert well-posedness of a material scattering problem: boundedness of the local transition operator remains the separate isolated-body assumption stated there.

\section{Compact Maxwell Maps Across Separated Supports}
\label{supp:separated_compact_maps}

This note supplies the separated-support regularity and compactness arguments used in Section~II of the main paper. It treats the continuous source, continuation, and observation maps before any modal coordinates are introduced.

We first record that the prescribed source and observation weights do not alter the relevant topology. Boundedness and uniform positive definiteness of $\mathbf W_X$ and $\mathbf W_Y$ give constants $0<c_X\leq C_X<\infty$ and $0<c_Y\leq C_Y<\infty$ such that
\begin{align}
    c_X\|\mathbf q\|_{L^2(\Xi_{\mathrm{src}})}^2
     & \leq
    \|\mathbf q\|_{\mathcal X}^2
    \leq
    C_X\|\mathbf q\|_{L^2(\Xi_{\mathrm{src}})}^2,
    \label{eq:supp_source_weight_equivalence} \\
    c_Y\|\mathbf E\|_{L^2(\Xi_{\mathrm{obs}})}^2
     & \leq
    \|\mathbf E\|_{\mathcal Y}^2
    \leq
    C_Y\|\mathbf E\|_{L^2(\Xi_{\mathrm{obs}})}^2.
    \label{eq:supp_observation_weight_equivalence}
\end{align}
Here and below, the unweighted norms use the volume or surface measure and the finite component dimension specified in the main paper. Consequently, boundedness and compactness in the weighted spaces are equivalent to those in the corresponding unweighted $L^2$ spaces.

\begin{lemma}[Smoothing across a positive separation]
    \label{lem:supp_separated_smoothing}
    Let $U\subset\mathbb R^3$ be a bounded Lipschitz domain, and let $m\geq0$ be an integer.
    \begin{enumerate}
        \item Let $\Xi$ be either a bounded finite-measure volume support or a compact Lipschitz surface support, and suppose that $\operatorname{dist}(\overline U,\Xi)>0$. The free-space radiation map formed from electric and equivalent magnetic currents as in the main paper satisfies
              \begin{equation}
                  \mathcal G^0:
                  L^2(\Xi;\mathbb C^{d_X})
                  \to
                  H^m(U;\mathbb C^3)
                  \quad\text{compactly}.
                  \label{eq:supp_source_smoothing}
              \end{equation}
        \item Let $\Sigma$ be a compact Lipschitz surface, set
              \begin{equation}
                  \mathbf X_\Sigma
                  :=
                  H^{-1/2}(\operatorname{div}_{\Sigma},\Sigma)^2,
                  \label{eq:supp_generic_trace_space}
              \end{equation}
              and suppose that $\operatorname{dist}(\overline U,\Sigma)>0$. The radiating Maxwell representation potential satisfies
              \begin{equation}
                  \mathcal P_{\Sigma}^{\mathrm{out}}:
                  \mathbf X_\Sigma
                  \to
                  H^m(U;\mathbb C^3)
                  \quad\text{compactly}.
                  \label{eq:supp_trace_potential_smoothing}
              \end{equation}
    \end{enumerate}
\end{lemma}

\begin{proof}
    On a product set separated from the diagonal, the free-space dyadic $\overline{\overline{\mathbf G}}^0$ and all of its derivatives in either variable are smooth and uniformly bounded. The same is true of the differentiated kernel $\nabla_{\mathbf r}\times\overline{\overline{\mathbf G}}^0$ multiplying the magnetic current. Since $\Xi$ has finite volume or surface measure, differentiation under the integral and the Cauchy--Schwarz inequality give
    \begin{equation}
        \|\mathcal G^0\mathbf q\|_{H^{m+1}(U)}
        \leq
        C_{m,U,\Xi}\|\mathbf q\|_{L^2(\Xi)}.
        \label{eq:supp_source_smoothing_estimate}
    \end{equation}
    For a surface source, its finite tangential-component representation is first inserted into the ambient electric and magnetic current vectors; this is a bounded pointwise operation and does not change the estimate. The compact embedding $H^{m+1}(U)\Subset H^m(U)$ proves \eqref{eq:supp_source_smoothing}.

    For the representation potential, choose a bounded Lipschitz domain $V$ such that $U\Subset V$ and $\operatorname{dist}(\overline V,\Sigma)>0$. The standard mapping properties of the Maxwell layer potentials give the continuous map
    \begin{equation}
        \mathcal P_{\Sigma}^{\mathrm{out}}:
        \mathbf X_\Sigma
        \to
        H(\operatorname{curl}^2,V)
        \label{eq:supp_trace_potential_base_mapping}
    \end{equation}
    \cite{nedelec2001,buffa2003maxwellbem,claeys2012multitrace}. Set $\mathbf u:=\mathcal P_{\Sigma}^{\mathrm{out}}\mathbf f$. Away from $\Sigma$, the potential satisfies
    \begin{equation}
        \nabla\times\nabla\times\mathbf u-k^2\mathbf u=\mathbf0
        \quad\text{in }V.
        \label{eq:supp_trace_potential_local_maxwell}
    \end{equation}
    Taking the distributional divergence of \eqref{eq:supp_trace_potential_local_maxwell} gives $-k^2\nabla\cdot\mathbf u=0$. Since $k>0$, $\nabla\cdot\mathbf u=0$ in $V$ and therefore $\mathbf u\in H(\operatorname{div}0,V)$, where $H(\operatorname{div}0,V):=\{\mathbf v\in H(\operatorname{div},V):\nabla\cdot\mathbf v=0\}$. Hence the vector identity $\nabla\times\nabla\times=\nabla\nabla\cdot-\Delta$ shows that every Cartesian component of $\mathbf u$ satisfies the scalar Helmholtz equation $(\Delta+k^2)u_a=0$ in $V$. Interior elliptic regularity on $U\Subset V$, followed by the continuity in \eqref{eq:supp_trace_potential_base_mapping}, gives
    \begin{equation}
        \|\mathcal P_{\Sigma}^{\mathrm{out}}\mathbf f\|_{H^{m+1}(U)}
        \leq
        C_{m,U,V}\|\mathbf u\|_{L^2(V)}
        \leq
        C_{m,U,V,\Sigma}\|\mathbf f\|_{\mathbf X_\Sigma}.
        \label{eq:supp_trace_potential_smoothing_estimate}
    \end{equation}
    Composing this estimate with $H^{m+1}(U)\Subset H^m(U)$ proves \eqref{eq:supp_trace_potential_smoothing}.
\end{proof}

For later use, let $\mathbf X_i$ denote the product Maxwell trace space on $\Gamma_i$. If $U_i$ is a neighborhood of $\Gamma_i$ and $m\geq3$, the ordinary Sobolev trace theorem applied to a smooth field and its curl gives
\begin{equation}
    \|\gamma_i^{\mathrm C}\mathbf E\|_{\mathbf X_i}
    \leq
    C_{i,m}\|\mathbf E\|_{H^m(U_i)}.
    \label{eq:supp_smooth_cauchy_trace_bound}
\end{equation}
Indeed, for any $s>1/2$, smooth tangential fields satisfy
\begin{equation}
    H^s_{\mathrm{tan}}(\Gamma_i)
    \hookrightarrow
    H^{-1/2}(\operatorname{div}_{\Gamma_i},\Gamma_i),
    \label{eq:supp_smooth_to_maxwell_trace_embedding}
\end{equation}
because $H^s(\Gamma_i)\hookrightarrow H^{-1/2}(\Gamma_i)$ and the surface divergence maps $H^s_{\mathrm{tan}}(\Gamma_i)$ continuously into $H^{s-1}(\Gamma_i)\hookrightarrow H^{-1/2}(\Gamma_i)$. These two embeddings control the two components of the Maxwell graph norm. Choosing an intermediate exponent $1/2<t<s$ also shows that this embedding is compact, by $H^s_{\mathrm{tan}}(\Gamma_i)\Subset H^t_{\mathrm{tan}}(\Gamma_i)$ followed by the continuous embedding in \eqref{eq:supp_smooth_to_maxwell_trace_embedding} with $t$ in place of $s$.

\begin{proposition}[Compact source, continuation, and observation maps]
    \label{prop:supp_compact_external_maps}
    Under the support-separation and pairwise-disjoint enclosing-ball assumptions of Section~II in the main paper, the maps
    \begin{align}
        \mathcal S_i
         & :\mathcal X\to\mathcal H_i^{\mathrm{reg}},                  \\
        \mathcal W_{ij}
         & :\mathcal H_j^{\mathrm{out}}\to\mathcal H_i^{\mathrm{reg}},
        \quad i\ne j,                                                  \\
        \mathcal O_i
         & :\mathcal H_i^{\mathrm{out}}\to\mathcal Y
    \end{align}
    are well defined and compact. Here $\mathcal O_i$ restricts the unique outgoing representative of its argument to the selected observation component, so that the cluster observation map is $\mathcal O(\mathbf f_i)_i=\sum_i\mathcal O_i\mathbf f_i$. Consequently, the stacked source map $\mathcal S$, the finite block continuation map $\mathcal W$, and the observation map $\mathcal O$ are compact.
\end{proposition}

\begin{proof}
    Fix $i$. Positive separation of $\Xi_{\mathrm{src}}$ from $\overline B_i$ permits a bounded Lipschitz neighborhood $U_i\supset\overline B_i$ whose closure remains separated from the source support. Lemma~\ref{lem:supp_separated_smoothing} shows that $\mathbf q\mapsto\mathbf E^0=\mathcal G^0\mathbf q$ is compact into $H^m(U_i)$ for any chosen $m\geq3$. Since the distributional sources are supported outside $U_i$, $\mathcal L_k\mathbf E^0=\mathbf0$ in $U_i$; hence its restriction to $B_i$ belongs to $\mathcal M_k(B_i)$. The continuous Cauchy trace in \eqref{eq:supp_smooth_cauchy_trace_bound} therefore gives a compact map
    \begin{equation}
        \mathcal S_i\mathbf q
        =
        \gamma_i^{\mathrm C}\mathcal G^0\mathbf q
        \in
        \mathcal H_i^{\mathrm{reg}}.
        \label{eq:supp_source_map_compactness}
    \end{equation}
    Compactness first holds in the ambient space $\mathbf X_i$. Its range lies in the closed subspace $\mathcal H_i^{\mathrm{reg}}$ by \hyperref[supp:trace_spaces]{Supplementary Note~\ref*{supp:trace_spaces}}, so it is also compact with that subspace as its codomain.

    Now let $i\ne j$. Pairwise disjointness of the closed balls permits a neighborhood $U_i\supset\overline B_i$ whose closure is disjoint from $\overline B_j$ and hence has positive distance from $\Gamma_j$. For $\mathbf f_j\in\mathcal H_j^{\mathrm{out}}$, \hyperref[supp:trace_spaces]{Supplementary Note~\ref*{supp:trace_spaces}} gives a unique outgoing representative, and the exterior representation formula identifies it on $U_i$ with $\mathcal P_{\Gamma_j}^{\mathrm{out}}\mathbf f_j$. Lemma~\ref{lem:supp_separated_smoothing} and \eqref{eq:supp_smooth_cauchy_trace_bound} then show that
    \begin{equation}
        \mathcal W_{ij}\mathbf f_j
        =
        \gamma_i^{\mathrm C}
        \mathcal P_{\Gamma_j}^{\mathrm{out}}\mathbf f_j
        \label{eq:supp_continuation_map_representation}
    \end{equation}
    is compact into $\mathbf X_i$. The represented field satisfies the homogeneous background Maxwell equation throughout $B_i$, so the range lies in the closed subspace $\mathcal H_i^{\mathrm{reg}}$. This proves both well-definedness and compactness of $\mathcal W_{ij}$.

    Finally, choose a bounded Lipschitz neighborhood $U_{\mathrm{obs}}$ of $\Xi_{\mathrm{obs}}$ whose closure is disjoint from every $\overline B_i$. For a volume observation support, restriction from $H^m(U_{\mathrm{obs}})$ to $L^2(\Xi_{\mathrm{obs}})$ is bounded. For a Lipschitz surface observation support, the Sobolev trace theorem gives the same conclusion for $m>1/2$; selecting full or tangential components is also bounded in $L^2$. Lemma~\ref{lem:supp_separated_smoothing}, followed by this restriction, proves that
    \begin{equation}
        \mathcal O_i\mathbf f_i
        =
        \left.
        \mathcal P_{\Gamma_i}^{\mathrm{out}}\mathbf f_i
        \right|_{\Xi_{\mathrm{obs}}}
        \label{eq:supp_observation_map_representation}
    \end{equation}
    is compact from $\mathcal H_i^{\mathrm{out}}$ to the unweighted observation space and hence, by \eqref{eq:supp_observation_weight_equivalence}, to $\mathcal Y$.

    A map into a finite Hilbert direct sum is compact when each component is compact, a finite block operator is compact when each block is compact, and a finite sum of compact maps is compact. These prove the final statement for $\mathcal S$, $\mathcal W$, and $\mathcal O$.
\end{proof}

Because the local transition operator $\mathcal T$ is bounded by the isolated-body assumption, Proposition~\ref{prop:supp_compact_external_maps} implies that $\mathcal W\mathcal T$ is compact on $\mathcal H^{\mathrm{reg}}$. It also proves directly that the single-body map $\mathcal O_D\mathcal T_D\mathcal S_D$ is compact. Once the collective inverse in Section~II of the main paper exists, the same proposition and boundedness of the collective transition yield compactness of $\mathcal G^{\mathrm{sca}}=\mathcal O\mathcal T(\mathbb I-\mathcal W\mathcal T)^{-1}\mathcal S$.

\section{Fredholm Solvability and Exactness of the Trace Assembly}
\label{supp:fredholm_exactness}

This note supplies the Fredholm and field-reconstruction details used twice in Section~II of the main paper: first to establish the collective inverse, and then to identify the trace assembly with the physical multiple-scattering solution. Only the continuous trace spaces and operators are used. Let $B_i$, $1\leq i\leq N$, be the finite family of enclosing balls, with pairwise disjoint closures, and set
\begin{equation}
    \Omega_B
    :=
    \mathbb R^3
    \setminus
    \bigcup_{i=1}^{N}\overline B_i.
    \label{eq:supp_common_ball_exterior}
\end{equation}
\hyperref[supp:trace_spaces]{Supplementary Note~\ref*{supp:trace_spaces}} gives unique regular and outgoing field representatives of the trace states, while \hyperref[supp:separated_compact_maps]{Supplementary Note~\ref*{supp:separated_compact_maps}} proves that
\begin{equation}
    \mathcal W:
    \mathcal H^{\mathrm{out}}
    \to
    \mathcal H^{\mathrm{reg}}
    \quad\text{is compact}.
    \label{eq:supp_compact_collective_continuation}
\end{equation}
We first isolate the uniqueness property of the bodywise outgoing representation. The result concerns only the artificial balls and the homogeneous background equation; it does not invoke any material field inside a scatterer.

\begin{lemma}[Bodywise outgoing uniqueness]
    \label{lem:supp_bodywise_outgoing_uniqueness}
    For each $i$, let
    $\mathbf E_i\in\mathcal M_k(\mathbb R^3\setminus\overline B_i)$ be outgoing. If
    \begin{equation}
        \sum_{i=1}^{N}\mathbf E_i
        =
        \mathbf0
        \quad\text{in }\Omega_B,
        \label{eq:supp_zero_bodywise_sum}
    \end{equation}
    then $\mathbf E_i=\mathbf0$ in
    $\mathbb R^3\setminus\overline B_i$ for every $i$.
\end{lemma}

\begin{proof}
    Fix $i$. Pairwise disjointness and finiteness of the closed balls permit a bounded open neighborhood $V_i\supset\overline B_i$ such that $V_i\cap\overline B_j=\varnothing$ for every $j\ne i$. Hence
    \begin{equation}
        \mathbf F_i
        :=
        -\sum_{j\ne i}\mathbf E_j
        \label{eq:supp_bodywise_extension_field}
    \end{equation}
    is a homogeneous background Maxwell field throughout $V_i$. On the nonempty open overlap
    $V_i\setminus\overline B_i\subset\Omega_B$, equation~\eqref{eq:supp_zero_bodywise_sum} gives
    $\mathbf F_i=\mathbf E_i$. The two fields therefore glue to a field
    $\widetilde{\mathbf E}_i\in H_{\mathrm{loc}}(\operatorname{curl}^2,\mathbb R^3)$ satisfying
    \begin{equation}
        \nabla\times\nabla\times\widetilde{\mathbf E}_i
        -k^2\widetilde{\mathbf E}_i
        =
        \mathbf0
        \quad\text{in }\mathbb R^3.
        \label{eq:supp_entire_maxwell_extension}
    \end{equation}
    Indeed, use $\mathbf F_i$ on $V_i$ and $\mathbf E_i$ on
    $\mathbb R^3\setminus\overline B_i$; equality on their open overlap makes the definition and its distributional curls consistent. The glued field is outgoing because it equals $\mathbf E_i$ outside the bounded set $V_i$. The Maxwell Rellich lemma and unique continuation then give
    $\widetilde{\mathbf E}_i=\mathbf0$ in $\mathbb R^3$ \cite{nedelec2001,colton2013}. Its restriction to
    $\mathbb R^3\setminus\overline B_i$ is $\mathbf E_i$, so $\mathbf E_i=\mathbf0$. Repeating the argument for every $i$ proves the claim.
\end{proof}

The next proposition turns physical uniqueness into invertibility of the collective trace equation. Isolated-body well-posedness has the precise role stated in the main paper: it makes each local transition operator $\mathcal T_i$ a bounded map and ensures that its output is the outgoing response satisfying the local material or boundary conditions.

\begin{proposition}[Bounded invertibility of the collective trace operator]
    \label{prop:supp_collective_trace_invertibility}
    Suppose the local transition operator
    $\mathcal T:\mathcal H^{\mathrm{reg}}\to\mathcal H^{\mathrm{out}}$ is bounded and the homogeneous radiating multiple-scattering problem is unique at the frequency considered. Then
    \begin{equation}
        \mathbb I-\mathcal W\mathcal T:
        \mathcal H^{\mathrm{reg}}
        \to
        \mathcal H^{\mathrm{reg}}
        \label{eq:supp_collective_trace_operator}
    \end{equation}
    is boundedly invertible. Writing $\mathcal B(U,V)$ for the bounded linear operators from $U$ to $V$, and $\mathcal B(U):=\mathcal B(U,U)$, one consequently has
    \begin{equation}
        \begin{aligned}
            \mathcal R
             & :=
            (\mathbb I-\mathcal W\mathcal T)^{-1}
            \in
            \mathcal B(\mathcal H^{\mathrm{reg}}), \\
            \mathcal K
             & :=
            \mathcal T\mathcal R
            \in
            \mathcal B(\mathcal H^{\mathrm{reg}},\mathcal H^{\mathrm{out}}).
        \end{aligned}
        \label{eq:supp_bounded_collective_resolvent}
    \end{equation}
\end{proposition}

\begin{proof}
    By \eqref{eq:supp_compact_collective_continuation} and boundedness of $\mathcal T$, the operator
    $\mathcal A:=\mathcal W\mathcal T$ is compact on
    $\mathcal H^{\mathrm{reg}}$. Thus $\mathbb I-\mathcal A$ is Fredholm of index zero \cite{kato1995}. It remains to show that its kernel is trivial. Let
    $\mathbf f^{\mathrm{inc}}\in\ker(\mathbb I-\mathcal W\mathcal T)$ and set
    \begin{equation}
        \mathbf f^{\mathrm{sca}}
        :=
        \mathcal T\mathbf f^{\mathrm{inc}}.
        \label{eq:supp_homogeneous_scattered_trace}
    \end{equation}
    For each body, \hyperref[supp:trace_spaces]{Supplementary Note~\ref*{supp:trace_spaces}} reconstructs from
    $\mathbf f_i^{\mathrm{inc}}$ a unique field
    $\mathbf E_i^{\mathrm{inc}}\in\mathcal M_k(B_i)$ and from
    $\mathbf f_i^{\mathrm{sca}}$ a unique outgoing field
    $\mathbf E_i^{\mathrm{sca}}\in
        \mathcal M_k(\mathbb R^3\setminus\overline B_i)$. The $i$th component of the kernel equation is
    \begin{equation}
        \mathbf f_i^{\mathrm{inc}}
        =
        \sum_{j\ne i}\mathcal W_{ij}\mathbf f_j^{\mathrm{sca}}
        =
        \gamma_i^{\mathrm C}
        \left(\sum_{j\ne i}\mathbf E_j^{\mathrm{sca}}\right).
        \label{eq:supp_homogeneous_foldy_lax_trace}
    \end{equation}
    Every field in the sum on the right is regular in $B_i$. The injectivity of the regular Cauchy-trace correspondence in Proposition~\ref{prop:supp_trace_range_characterization} therefore upgrades the trace identity to
    \begin{equation}
        \mathbf E_i^{\mathrm{inc}}
        =
        \sum_{j\ne i}\mathbf E_j^{\mathrm{sca}}
        \quad\text{in }B_i.
        \label{eq:supp_homogeneous_incident_field_balance}
    \end{equation}

    By the definition of $\mathcal T_i$ and isolated-body well-posedness, there is an outgoing physical response
    $\widehat{\mathbf E}_i^{\mathrm{sca}}$ in the background exterior of the $i$th material support such that
    $\gamma_i^{\mathrm C}\widehat{\mathbf E}_i^{\mathrm{sca}}=\mathbf f_i^{\mathrm{sca}}$ and
    $\mathbf E_i^{\mathrm{inc}}+\widehat{\mathbf E}_i^{\mathrm{sca}}$ satisfies the conditions encoded by $\Theta_i$. On
    $\mathbb R^3\setminus\overline B_i$, this response and the reconstructed field
    $\mathbf E_i^{\mathrm{sca}}$ are outgoing background solutions with the same complete Cauchy trace. Exterior trace injectivity in Proposition~\ref{prop:supp_trace_range_characterization} makes them equal there. We may therefore use the same symbol
    $\mathbf E_i^{\mathrm{sca}}$ for the physical exterior response continued from the material boundary through $\Gamma_i$ to infinity; no field inside the material support is used. Equation~\eqref{eq:supp_homogeneous_incident_field_balance} now shows that
    \begin{equation}
        \mathbf E^{h}
        :=
        \sum_{i=1}^{N}\mathbf E_i^{\mathrm{sca}}
        \label{eq:supp_assembled_homogeneous_field}
    \end{equation}
    satisfies the local conditions at every body and the radiation condition, with no external excitation. It is thus a homogeneous physical multiple-scattering solution. By the assumed collective uniqueness, $\mathbf E^{h}=\mathbf0$ in the exterior background and, in particular, in $\Omega_B$. Lemma~\ref{lem:supp_bodywise_outgoing_uniqueness} then gives
    $\mathbf E_i^{\mathrm{sca}}=\mathbf0$ for every $i$, hence
    $\mathbf f^{\mathrm{sca}}=\mathbf0$. Finally, the homogeneous Foldy--Lax balance gives
    \begin{equation}
        \mathbf f^{\mathrm{inc}}
        =
        \mathcal W\mathbf f^{\mathrm{sca}}
        =
        \mathbf0.
        \label{eq:supp_trivial_collective_kernel}
    \end{equation}
    This last step is important: no injectivity assumption on any $\mathcal T_i$ is used.

    Hence $\mathbb I-\mathcal A$ is an injective Fredholm operator of index zero and is therefore surjective. Its inverse is bounded by the bounded inverse theorem, which proves \eqref{eq:supp_bounded_collective_resolvent}; boundedness of $\mathcal K$ follows by composition.
\end{proof}

We finally spell out the exactness assertion in Proposition~1 of the main paper.

\begin{proposition}[Exactness of the continuous trace assembly]
    \label{prop:supp_exact_trace_assembly}
    Under the assumptions of Proposition~\ref{prop:supp_collective_trace_invertibility}, for every source
    $\mathbf q\in\mathcal X$, the trace solution
    \begin{equation}
        \mathbf f^{\mathrm{inc}}
        =
        \mathcal R\mathcal S\mathbf q,
        \quad
        \mathbf f^{\mathrm{sca}}
        =
        \mathcal T\mathbf f^{\mathrm{inc}}
        \label{eq:supp_solved_trace_states}
    \end{equation}
    reconstructs the physical multiple-scattering solution. Moreover,
    $\mathcal O\mathbf f^{\mathrm{sca}}$ is precisely its selected scattered-field component on $\Xi_{\mathrm{obs}}$.
\end{proposition}

\begin{proof}
    Let $\mathbf E^0=\mathcal G^0\mathbf q$. For each $i$, reconstruct the unique regular representative
    $\mathbf E_i^{\mathrm{inc}}$ of $\mathbf f_i^{\mathrm{inc}}$ and the unique outgoing representative
    $\mathbf E_i^{\mathrm{sca}}$ of $\mathbf f_i^{\mathrm{sca}}$. Since \eqref{eq:supp_solved_trace_states} solves the collective trace equation, its $i$th component gives
    \begin{equation}
        \gamma_i^{\mathrm C}\mathbf E_i^{\mathrm{inc}}
        =
        \gamma_i^{\mathrm C}
        \left(
        \mathbf E^0
        +
        \sum_{j\ne i}\mathbf E_j^{\mathrm{sca}}
        \right).
        \label{eq:supp_driven_foldy_lax_trace}
    \end{equation}
    Both sides of \eqref{eq:supp_driven_foldy_lax_trace} are traces of regular background fields in $B_i$. Proposition~\ref{prop:supp_trace_range_characterization} therefore gives the field identity
    \begin{equation}
        \mathbf E_i^{\mathrm{inc}}
        =
        \mathbf E^0
        +
        \sum_{j\ne i}\mathbf E_j^{\mathrm{sca}}
        \quad\text{in }B_i.
        \label{eq:supp_driven_incident_field_balance}
    \end{equation}
    As in the proof of Proposition~\ref{prop:supp_collective_trace_invertibility}, the local relation
    $\mathbf f_i^{\mathrm{sca}}=\mathcal T_i\mathbf f_i^{\mathrm{inc}}$ identifies the reconstructed outgoing representative with the isolated-body exterior response, for which
    $\mathbf E_i^{\mathrm{inc}}+\mathbf E_i^{\mathrm{sca}}$ satisfies the local conditions. Thus \eqref{eq:supp_driven_incident_field_balance} shows that the assembled fields
    \begin{equation}
        \mathbf E^{\mathrm{sca}}
        :=
        \sum_{i=1}^{N}\mathbf E_i^{\mathrm{sca}},
        \quad
        \mathbf E^{\mathrm{tot}}
        :=
        \mathbf E^0+\mathbf E^{\mathrm{sca}}
        \label{eq:supp_assembled_driven_fields}
    \end{equation}
    satisfy the prescribed local conditions at every body. The scattered field is outgoing because it is a finite sum of outgoing fields. Consequently, \eqref{eq:supp_assembled_driven_fields} is a physical solution of the driven multiple-scattering problem, and collective uniqueness identifies it with the physical solution.

    The trace family is unique because
    $\mathbb I-\mathcal W\mathcal T$ is invertible. Its bodywise outgoing fields are also unique: the difference of two admissible bodywise representations has zero sum in $\Omega_B$, so Lemma~\ref{lem:supp_bodywise_outgoing_uniqueness} applies. This establishes the ``exactly'' statement in Proposition~1 without assuming that any $\mathcal T_i$ is injective and without expanding a multiple-reflection Neumann series. Finally, the definition of the observation map gives
    \begin{equation}
        \mathcal O\mathbf f^{\mathrm{sca}}
        =
        \left.
        \sum_{i=1}^{N}\mathbf E_i^{\mathrm{sca}}
        \right|_{\Xi_{\mathrm{obs}}}
        =
        \left.\mathbf E^{\mathrm{sca}}\right|_{\Xi_{\mathrm{obs}}},
        \label{eq:supp_exact_observed_field}
    \end{equation}
    with the component selection prescribed by $\mathcal Y$.
\end{proof}

The argument above uses material information only through the isolated-body response encoded by the bounded operators $\mathcal T_i$ and through collective uniqueness. The trace-level assembly and its proof do not require the interior material field to be retained as part of the global state, although isolated-body well-posedness remains an input through $\mathcal T_i$. They also require no inverse construction from an abstract transition operator to material geometry. Maxwell representation and boundary-decomposition formulations for standard disjoint scatterers provide the corresponding bodywise exterior representation \cite{nedelec2001,buffa2003maxwellbem,balabane2004,claeys2012multitrace}.

\section{Fixed VSH Trace Metric and VSWF Trace Projections}
\label{supp:vswf_trace_metric}

This note supplies the details used in Section~III-B of the main paper. It first specifies, on each spherical interface, the fixed Hilbert metric that was left abstract up to norm equivalence in Section~II. It then derives the VSWF trace Gram matrix from that metric and proves the projection and strong-convergence statements. The underlying trace ranges are exactly those characterized in \hyperref[supp:trace_spaces]{Supplementary Note~\ref*{supp:trace_spaces}}; no volume field inside a material support is introduced.

Fix one enclosing sphere $\Gamma_i=\partial B_{R_i}(\mathbf c_i)$ and put $x_i:=kR_i>0$. Let
$S^2:=\{\widehat{\mathbf r}\in\mathbb R^3:|\widehat{\mathbf r}|=1\}$ denote the unit sphere, and identify each $\mathbf r\in\Gamma_i$ with its local direction
$\widehat{\mathbf r}_i:=(\mathbf r-\mathbf c_i)/R_i\in S^2$. For $l\geq1$ and $-l\leq m\leq l$, let
\begin{equation}
    Y_{lm}(\theta,\phi)
    :=
    \widehat P_l^{|m|}(\cos\theta)\mathrm e^{\mathrm jm\phi},
    \label{eq:supp_scalar_spherical_harmonic}
\end{equation}
with the normalization used in the main paper, and define the tangential VSHs on $S^2$ by
\begin{equation}
    \mathbf A_{lm}
    :=
    -\frac{\widehat{\mathbf r}_i\times\nabla_{S^2}Y_{lm}}
    {\sqrt{2l(l+1)}},
    \quad
    \mathbf B_{lm}
    :=
    \widehat{\mathbf r}_i\times\mathbf A_{lm}.
    \label{eq:supp_vsh_definition}
\end{equation}
Here $\nabla_{S^2}$ is the surface gradient on the unit sphere. Thus $\mathbf A_{lm}$ is surface-divergence-free and $\mathbf B_{lm}$ is of surface-gradient type. Orthogonality of the scalar spherical harmonics and integration by parts on $S^2$ give
\begin{align}
    \int_{S^2}\mathbf A_{lm}^{*}\mathbin{\cdot}\mathbf A_{l'm'}\,\mathrm d\Omega
     & =
    \pi\delta_{ll'}\delta_{mm'}, \\
    \int_{S^2}\mathbf B_{lm}^{*}\mathbin{\cdot}\mathbf B_{l'm'}\,\mathrm d\Omega
     & =
    \pi\delta_{ll'}\delta_{mm'}, \\
    \int_{S^2}\mathbf A_{lm}^{*}\mathbin{\cdot}\mathbf B_{l'm'}\,\mathrm d\Omega
     & =0.
    \label{eq:supp_vsh_orthogonality}
\end{align}
The same fields are used on $\Gamma_i$ by angular pullback; hence surface integration on $\Gamma_i$ contributes the factor $R_i^2$.

Every tangential distribution on the sphere has a VSH expansion. For a Cauchy pair $\mathbf f=(\mathbf f_X,\mathbf f_Y)$, let $X$ and $Y$ label its two trace slots. These are local slot labels in this note and are unrelated to the global source and observation spaces $\mathcal X$ and $\mathcal Y$. Write formally
\begin{align}
    \mathbf f_X
     & =
    \sum_{l=1}^{\infty}\sum_{m=-l}^{l}
    \left(
    a_{X,lm}\mathbf A_{lm}
    +b_{X,lm}\mathbf B_{lm}
    \right),
    \label{eq:supp_x_trace_vsh_expansion} \\
    \mathbf f_Y
     & =
    \sum_{l=1}^{\infty}\sum_{m=-l}^{l}
    \left(
    a_{Y,lm}\mathbf A_{lm}
    +b_{Y,lm}\mathbf B_{lm}
    \right).
    \label{eq:supp_y_trace_vsh_expansion}
\end{align}
Collect the two Cauchy-slot coefficients as
\begin{equation}
    \mathbf a_{lm}
    :=
    \begin{pmatrix}a_{X,lm}\\a_{Y,lm}\end{pmatrix},
    \quad
    \mathbf b_{lm}
    :=
    \begin{pmatrix}b_{X,lm}\\b_{Y,lm}\end{pmatrix}.
    \label{eq:supp_cauchy_slot_coefficient_vectors}
\end{equation}
Set $\lambda_l:=l(l+1)$ and
\begin{equation}
    w_{l,i}^{\pm}
    :=
    \left(1+\frac{\lambda_l}{x_i^2}\right)^{\pm1/2}.
    \label{eq:supp_vsh_weights}
\end{equation}
For finite VSH sums, define the product inner product, conjugate-linear in its first argument, by
\begin{equation}
    \langle\mathbf f,\mathbf g\rangle_{i,\mathrm{VSH}}
    :=
    \frac{\pi R_i^2}{2\eta}
    \sum_{l=1}^{\infty}\sum_{m=-l}^{l}
    \bigl(
    w_{l,i}^{-}\mathbf a_{lm}^{\dagger}\widetilde{\mathbf a}_{lm}
    +
    w_{l,i}^{+}\mathbf b_{lm}^{\dagger}\widetilde{\mathbf b}_{lm}
    \bigr),
    \label{eq:supp_fixed_vsh_product_metric}
\end{equation}
Here $\widetilde{\mathbf a}_{lm}$ and $\widetilde{\mathbf b}_{lm}$ are the corresponding coefficients of $\mathbf g$. The factor $1/(2\eta)$ is the fixed physical normalization selected in the main paper; it is not needed for norm equivalence.

\begin{lemma}[Equivalence of the fixed VSH metric]
    \label{lem:supp_vsh_metric_equivalence}
    The completion under \eqref{eq:supp_fixed_vsh_product_metric} is
    \begin{equation}
        \mathbf X_i
        =
        H^{-1/2}(\operatorname{div}_{\Gamma_i},\Gamma_i)^2,
        \label{eq:supp_vsh_metric_completion}
    \end{equation}
    with an equivalent Hilbert norm. The equivalence constants may depend on the fixed quantities $k$, $R_i$, and $\eta$, but are independent of $l$ and of every truncation order $L$. Consequently, restricting this metric to $\mathcal H_i^{\mathrm{reg}}$ or $\mathcal H_i^{\mathrm{out}}$ gives the same trace-space topology used in Section~II of the main paper.
\end{lemma}

\begin{proof}
    On a sphere, the standard spectral characterization of
    $H^{-1/2}(\operatorname{div}_{\Gamma_i},\Gamma_i)$ assigns, up to constants depending only on the fixed sphere, weights equivalent to
    $(1+\lambda_l)^{-1/2}$ to the surface-divergence-free coefficients and
    $(1+\lambda_l)^{1/2}$ to the surface-gradient coefficients
    \cite{buffa2002traces,kristensson2020}. For fixed $x_i>0$,
    \begin{equation}
        \min\{1,x_i^{-2}\}
        \leq
        \frac{1+\lambda_l/x_i^2}{1+\lambda_l}
        \leq
        \max\{1,x_i^{-2}\}
        \quad(l\geq1).
        \label{eq:supp_uniform_weight_comparison}
    \end{equation}
    Taking the positive and negative square roots of
    \eqref{eq:supp_uniform_weight_comparison} shows that
    $w_{l,i}^{-}$ and $w_{l,i}^{+}$ are uniformly equivalent, respectively, to the two standard spectral weights. Multiplication by the fixed positive factor $\pi R_i^2/(2\eta)$ and taking two Hilbert-product copies do not change the topology. This proves \eqref{eq:supp_vsh_metric_completion}. The trace ranges are closed by Proposition~\ref{prop:supp_trace_range_characterization}, so the restricted spaces remain Hilbert spaces under this equivalent metric.
\end{proof}

We next compute the traces of the raw VSWFs. For $s\in\{\mathrm{reg},\mathrm{out}\}$, define
\begin{equation}
    d_l^s(x)
    :=
    \frac{[xz_l^s(x)]'}{x}.
    \label{eq:supp_radial_derivative_factor}
\end{equation}
The definitions in Section~III-A of the main paper and the VSH convention \eqref{eq:supp_vsh_definition} give, on $\Gamma_i$,
\begin{equation}
    \left.\mathbf U_{lm}^{s}\right|_{\Gamma_i}
    =
    z_l^s(x_i)\mathbf A_{lm},
    \quad
    \left.(\mathbf V_{lm}^{s})_t\right|_{\Gamma_i}
    =
    d_l^s(x_i)\mathbf B_{lm}.
    \label{eq:supp_vswf_tangential_components}
\end{equation}
Moreover, $k^{-1}\nabla\times\mathbf U_{lm}^s=\mathbf V_{lm}^s$ by definition, while the source-free Maxwell equation gives
$k^{-1}\nabla\times\mathbf V_{lm}^s=\mathbf U_{lm}^s$. Since
$\widehat{\mathbf r}_i\times\mathbf A_{lm}=\mathbf B_{lm}$ and
$\widehat{\mathbf r}_i\times\mathbf B_{lm}=-\mathbf A_{lm}$, the Cauchy traces of the two polarizations are therefore
\begin{align}
    \bm\phi_{i,(1,l,m)}^s
     & =
    \begin{pmatrix}
        z_l^s(x_i)\mathbf B_{lm} \\
        -d_l^s(x_i)\mathbf A_{lm}
    \end{pmatrix}, \\
    \bm\phi_{i,(2,l,m)}^s
     & =
    \begin{pmatrix}
        -d_l^s(x_i)\mathbf A_{lm} \\
        z_l^s(x_i)\mathbf B_{lm}
    \end{pmatrix}.
    \label{eq:supp_vswf_cauchy_traces}
\end{align}
Only tangential components enter \eqref{eq:supp_vswf_cauchy_traces}; the radial component of $\mathbf V_{lm}^s$ is annihilated by the cross product in the Cauchy trace.

For $\alpha=(\tau,l,m)$ and $\beta=(\tau',l',m')$, substituting \eqref{eq:supp_vswf_cauchy_traces} into the fixed metric and using \eqref{eq:supp_vsh_orthogonality} yields
\begin{equation}
    \left\langle\bm\phi_{i,\alpha}^s,
    \bm\phi_{i,\beta}^s\right\rangle_{i,\mathrm{VSH}}
    =
    \delta_{\alpha\beta}g_{l,i}^s,
    \label{eq:supp_vswf_trace_inner_products}
\end{equation}
where
\begin{equation}
    g_{l,i}^s
    :=
    \frac{\pi R_i^2}{2\eta}
    \left[
        w_{l,i}^{+}|z_l^s(x_i)|^2
        +
        w_{l,i}^{-}|d_l^s(x_i)|^2
        \right].
    \label{eq:supp_vswf_trace_weight}
\end{equation}
The two polarizations have the same weight because they exchange the two VSH types between the two components of the Cauchy pair. Also, $g_{l,i}^s>0$: if both $z_l^s(x_i)$ and $d_l^s(x_i)$ vanished, then $z_l^s(x_i)=[z_l^s]'(x_i)=0$, contradicting uniqueness for the second-order spherical Bessel equation. Thus no nonzero raw VSWF trace has zero norm.

\begin{proposition}[Complete trace basis and stable modal projections]
    \label{prop:supp_vswf_trace_projections}
    For each $s\in\{\mathrm{reg},\mathrm{out}\}$, the family
    $\{\bm\phi_{i,\alpha}^s:\alpha\in\Lambda\}$ is a complete orthogonal family in $\mathcal H_i^s$. Let $P_{i,L}^s$ be the synthesis map defined in Section~III-B of the main paper and let
    \begin{equation}
        \mathbf G_{i,L}^s
        :=
        (P_{i,L}^s)^\dagger P_{i,L}^s,
        \quad
        Q_{i,L}^s
        :=
        (\mathbf G_{i,L}^s)^{-1}(P_{i,L}^s)^\dagger,
        \label{eq:supp_modal_analysis_definition}
    \end{equation}
    where $\dagger$ denotes the adjoint induced by the fixed trace inner product on $\mathcal H_i^s$ and the Euclidean inner product on $\mathbb C^{U_L}$.
    Then $\mathbf G_{i,L}^s$ is Hermitian positive definite and has the diagonal entries $g_{l,i}^s$ defined in \eqref{eq:supp_vswf_trace_weight}. Furthermore,
    \begin{equation}
        Q_{i,L}^sP_{i,L}^s=\mathbf I,
        \quad
        \Pi_{i,L}^s
        :=
        P_{i,L}^sQ_{i,L}^s
        \label{eq:supp_local_projection_identities}
    \end{equation}
    is the orthogonal projection onto $\operatorname{ran}P_{i,L}^s$. The ranges are nested, and
    \begin{equation}
        \|\Pi_{i,L}^s\|=1,
        \quad
        \Pi_{i,L}^s
        \xrightarrow[L\to\infty]{\mathrm{strong}}
        \mathbb I_{\mathcal H_i^s}.
        \label{eq:supp_local_projection_strong_convergence}
    \end{equation}
    The same conclusions hold for the finite direct sums over all bodies.
\end{proposition}

\begin{proof}
    Separation of variables for source-free Maxwell fields on a sphere gives a regular VSWF expansion in $B_i$ and a radiating outgoing VSWF expansion in
    $\mathbb R^3\setminus\overline B_i$, respectively
    \cite{chew1995,doicu2006,kristensson2020}. At the trace level, the two range conditions select precisely the radial branch $j_l$ or $h_l^{(1)}$ in each angular and polarization block. Equations~\eqref{eq:supp_vswf_cauchy_traces}, \eqref{eq:supp_vswf_trace_inner_products}, and \eqref{eq:supp_vswf_trace_weight} therefore identify each trace range with the weighted coefficient space generated by the corresponding raw traces. Finite angular partial sums converge in the metric \eqref{eq:supp_fixed_vsh_product_metric}; hence their union is dense. Equation~\eqref{eq:supp_vswf_trace_inner_products} gives orthogonality, while \eqref{eq:supp_vswf_trace_weight} and the preceding radial-ODE argument give strict positivity. This establishes completeness and linear independence.

    The Gram identity in \eqref{eq:supp_modal_analysis_definition} and \eqref{eq:supp_vswf_trace_inner_products} gives
    \begin{equation}
        [\mathbf G_{i,L}^s]_{u,u'}
        =
        \delta_{uu'}g_{l,i}^s,
        \quad
        u=\xi_L(\tau,l,m),
        \label{eq:supp_finite_trace_gram_entries}
    \end{equation}
    so the matrix is Hermitian positive definite. It follows immediately that
    $Q_{i,L}^sP_{i,L}^s=\mathbf I$. Moreover,
    \begin{equation}
        \Pi_{i,L}^s
        =
        P_{i,L}^s(\mathbf G_{i,L}^s)^{-1}(P_{i,L}^s)^\dagger
        \label{eq:supp_explicit_orthogonal_projection}
    \end{equation}
    is self-adjoint and idempotent, and its range is
    $\operatorname{ran}P_{i,L}^s$; it is therefore the asserted orthogonal projection.

    If
    $\mathbf f=\sum_{\alpha\in\Lambda}a_\alpha\bm\phi_{i,\alpha}^s$
    in $\mathcal H_i^s$, orthogonality gives
    \begin{equation}
        \left\|
        \mathbf f-\Pi_{i,L}^s\mathbf f
        \right\|_{i,\mathrm{VSH}}^2
        =
        \sum_{\substack{l>L\\-l\leq m\leq l}}
        \sum_{\tau=1}^{2}
        |a_{\tau lm}|^2g_{l,i}^s
        \to0.
        \label{eq:supp_projection_tail_convergence}
    \end{equation}
    This proves strong convergence, while every nonzero orthogonal projection has norm one. The sets $\Lambda_L$ are nested, so the projection ranges are nested. Finally, for finitely many bodies, the norm squared of a direct-sum error is the finite sum of the bodywise squared errors. Thus the direct-sum projections also converge strongly and have norm one.

    Finally, if $\mathbf f\in\mathcal H_i^s$ has raw expansion coefficients
    $\{a_\alpha\}_{\alpha\in\Lambda}$, the same calculation gives
    \begin{equation}
        [Q_{i,L}^s\mathbf f]_u
        =
        a_{\xi_L^{-1}(u)},
        \quad u=1,\ldots,U_L,
        \label{eq:supp_raw_coefficient_extraction}
    \end{equation}
    which is the coefficient-extraction statement used in the main paper.
\end{proof}

\section{Surface Interface Gramians and Channel Recovery}
\label{supp:surface_interface_grams}

This note supplies the Hilbert-adjoint calculations behind the source and observation interface Gramians and the explicit surface-channel recovery formulas in Section~III-E of the main paper. It uses only the tangential surface-to-surface specialization already fixed there. Throughout, Hilbert inner products are conjugate-linear in their first argument. For operators, $\dagger$ denotes the adjoint induced by the stated Hilbert metrics; for coordinate vectors and fields, it denotes conjugate transpose.

Recall that
\begin{equation}
    \mathcal S_L:\mathcal X\to\mathbb C^{N_L},
    \quad
    \mathcal O_L:\mathbb C^{N_L}\to\mathcal Y,
    \label{eq:supp_surface_interface_map_directions}
\end{equation}
and that
\begin{equation}
    \mathbf C_{S,L}
    =
    \mathcal S_L\mathcal S_L^\dagger,
    \quad
    \mathbf C_{O,L}
    =
    \mathcal O_L^\dagger\mathcal O_L.
    \label{eq:supp_surface_interface_gram_definitions}
\end{equation}
Let $\mathbf e_p$ be the $p$th Euclidean coordinate vector in $\mathbb C^{N_L}$. With
$\mathbf q=(\mathbf J_t^{\mathsf T},\mathbf M_t^{\mathsf T})^{\mathsf T}$ on $\Sigma_{\mathrm{src}}$ and tangential electric fields on $\Sigma_{\mathrm{obs}}$, the fixed surface metrics in the main paper are
\begin{align}
    \langle\mathbf q_1,\mathbf q_2\rangle_{\mathcal X}
     & =
    \frac12
    \int_{\Sigma_{\mathrm{src}}}
    \left(
    \eta\mathbf J_{1,t}^{\dagger}\mathbf J_{2,t}
    +
    \eta^{-1}\mathbf M_{1,t}^{\dagger}\mathbf M_{2,t}
    \right)\,\mathrm dS,
    \label{eq:supp_surface_source_metric} \\
    \langle\mathbf E_1,\mathbf E_2\rangle_{\mathcal Y}
     & =
    \frac{1}{2\eta}
    \int_{\Sigma_{\mathrm{obs}}}
    \mathbf E_{1,t}^{\dagger}\mathbf E_{2,t}\,\mathrm dS.
    \label{eq:supp_surface_observation_metric}
\end{align}

For the cluster label $p=\zeta_L(i,u)$, use the notation of the main paper
$\bm\Psi_p(\mathbf r)=\bm\Phi_u^{\mathrm{out}}(\mathbf r-\mathbf c_i)$ and let $\bm\Psi_{p,t}$ denote its tangential component on the relevant external surface. Split the retained source map according to the two currents as
\begin{equation}
    \mathcal S_L\mathbf q
    =
    \mathcal S_{J,L}\mathbf J_t
    +
    \mathcal S_{M,L}\mathbf M_t.
    \label{eq:supp_source_map_split}
\end{equation}
Specializing the source-pairing formula of Section~III-C in the main paper and using $\omega\mu=k\eta$ gives
\begin{align}
    [\mathcal S_{J,L}\mathbf J_t]_p
     & =
    -\frac{\eta k^2}{\pi}
    \int_{\Sigma_{\mathrm{src}}}
    \bm\Psi_{\overline p,t}^{\mathsf T}(\mathbf r')
    \mathbf J_t(\mathbf r')\,\mathrm dS',
    \label{eq:supp_surface_electric_source_pairing} \\
    [\mathcal S_{M,L}\mathbf M_t]_p
     & =
    -\frac{\mathrm j k^2}{\pi}
    \int_{\Sigma_{\mathrm{src}}}
    \bm\Psi_{\overline{p^\star},t}^{\mathsf T}(\mathbf r')
    \mathbf M_t(\mathbf r')\,\mathrm dS'.
    \label{eq:supp_surface_magnetic_source_pairing}
\end{align}
The transpose in \eqref{eq:supp_surface_electric_source_pairing}--\eqref{eq:supp_surface_magnetic_source_pairing} is essential: the source expansion contains the bilinear Euclidean dot product of the outgoing VSWF with the impressed current, not a Hermitian field inner product.

For $\mathbf h\in\mathbb C^{N_L}$, write
$\mathcal S_L^\dagger\mathbf h=(\mathbf J_{\mathbf h,t}^{\mathsf T},\mathbf M_{\mathbf h,t}^{\mathsf T})^{\mathsf T}$. The adjoint identity is
\begin{equation}
    \left\langle
    \mathcal S_L^\dagger\mathbf h,\mathbf q
    \right\rangle_{\mathcal X}
    =
    \left\langle
    \mathbf h,\mathcal S_L\mathbf q
    \right\rangle_{\mathbb C^{N_L}}.
    \label{eq:supp_source_adjoint_identity}
\end{equation}
Substitution of \eqref{eq:supp_surface_source_metric} and
\eqref{eq:supp_surface_electric_source_pairing}--\eqref{eq:supp_surface_magnetic_source_pairing} into \eqref{eq:supp_source_adjoint_identity} gives the two Riesz representatives
\begin{align}
    \mathbf J_{\mathbf h,t}(\mathbf r')
     & =
    -\frac{2k^2}{\pi}
    \sum_{p=1}^{N_L}h_p
    \bm\Psi_{\overline p,t}^{*}(\mathbf r'),
    \label{eq:supp_source_adjoint_electric_component} \\
    \mathbf M_{\mathbf h,t}(\mathbf r')
     & =
    \frac{2\mathrm j\eta k^2}{\pi}
    \sum_{p=1}^{N_L}h_p
    \bm\Psi_{\overline{p^\star},t}^{*}(\mathbf r').
    \label{eq:supp_source_adjoint_magnetic_component}
\end{align}
Indeed, the conjugate transpose of \eqref{eq:supp_source_adjoint_electric_component}, multiplied by the source-metric weight $\eta/2$, produces the coefficient $-\eta k^2/\pi$ in \eqref{eq:supp_surface_electric_source_pairing}. For the magnetic block, conjugating the factor $\mathrm j$ in \eqref{eq:supp_source_adjoint_magnetic_component} changes it to $-\mathrm j$, and multiplication by the metric weight $1/(2\eta)$ produces the coefficient $-\mathrm j k^2/\pi$ in \eqref{eq:supp_surface_magnetic_source_pairing}. This verifies the adjoint identity for every $\mathbf q$ and fixes both the complex conjugations and the magnetic-current phase.

The retained observation synthesis is
\begin{equation}
    (\mathcal O_L\mathbf a)(\mathbf r)
    =
    \sum_{p=1}^{N_L}a_p\bm\Psi_{p,t}(\mathbf r),
    \quad \mathbf r\in\Sigma_{\mathrm{obs}}.
    \label{eq:supp_surface_observation_synthesis}
\end{equation}
Using \eqref{eq:supp_surface_observation_metric} in the observation-side adjoint identity gives
\begin{equation}
    [\mathcal O_L^\dagger\mathbf E]_p
    =
    \frac{1}{2\eta}
    \int_{\Sigma_{\mathrm{obs}}}
    \bm\Psi_{p,t}^{\dagger}(\mathbf r)
    \mathbf E_t(\mathbf r)\,\mathrm dS.
    \label{eq:supp_surface_observation_adjoint}
\end{equation}

We can now evaluate both interface Gramians without any coordinate-space ambiguity. Their entries first satisfy
\begin{align}
    [\mathbf C_{S,L}]_{p,q}
     & =
    \left\langle
    \mathcal S_L^\dagger\mathbf e_p,
    \mathcal S_L^\dagger\mathbf e_q
    \right\rangle_{\mathcal X},
    \label{eq:supp_source_gram_adjoint_identity} \\
    [\mathbf C_{O,L}]_{p,q}
     & =
    \left\langle
    \mathcal O_L\mathbf e_p,
    \mathcal O_L\mathbf e_q
    \right\rangle_{\mathcal Y}.
    \label{eq:supp_observation_gram_adjoint_identity}
\end{align}
Substituting \eqref{eq:supp_source_adjoint_electric_component}--\eqref{eq:supp_source_adjoint_magnetic_component} and \eqref{eq:supp_surface_observation_synthesis} into these identities yields
\begin{align}
    [\mathbf C_{S,L}]_{p,q}
     & =
    \frac{2\eta k^4}{\pi^2}
    \int_{\Sigma_{\mathrm{src}}}
    \left(
    \bm\Psi_{\overline p,t}^{\mathsf T}
    \bm\Psi_{\overline q,t}^{*}
    +
    \bm\Psi_{\overline{p^\star},t}^{\mathsf T}
    \bm\Psi_{\overline{q^\star},t}^{*}
    \right)\,\mathrm dS,
    \label{eq:supp_surface_source_gram_entries} \\
    [\mathbf C_{O,L}]_{p,q}
     & =
    \frac{1}{2\eta}
    \int_{\Sigma_{\mathrm{obs}}}
    \bm\Psi_{p,t}^{\dagger}
    \bm\Psi_{q,t}\,\mathrm dS.
    \label{eq:supp_surface_observation_gram_entries}
\end{align}
These are exactly the two interface formulas stated in the main paper. In particular,
\begin{equation}
    \mathbf a^\dagger\mathbf C_{S,L}\mathbf a
    =
    \|\mathcal S_L^\dagger\mathbf a\|_{\mathcal X}^2,
    \quad
    \mathbf a^\dagger\mathbf C_{O,L}\mathbf a
    =
    \|\mathcal O_L\mathbf a\|_{\mathcal Y}^2,
    \label{eq:supp_interface_gram_positivity}
\end{equation}
so both matrices are Hermitian positive semidefinite; neither is assumed invertible.

It remains to connect the adjoint formulas to channel recovery. For a nonzero singular triplet $(\sigma_n,\mathbf u_n,\mathbf v_n)$ of the metric core in the main paper, define
\begin{align}
    \mathbf h_{n,S}
     & :=
    (\mathbf C_{S,L}^{1/2})^+\mathbf v_n,
    \label{eq:supp_source_unwhitened_coordinates} \\
    \mathbf h_{n,O}
     & :=
    (\mathbf C_{O,L}^{1/2})^+\mathbf u_n.
    \label{eq:supp_observation_unwhitened_coordinates}
\end{align}
The abstract recovery is
$\mathbf q_n=\mathcal S_L^\dagger\mathbf h_{n,S}$ and
$\widehat{\mathbf E}_{n,t}^{\mathrm{sca}}=\mathcal O_L\mathbf h_{n,O}$. Applying \eqref{eq:supp_source_adjoint_electric_component}--\eqref{eq:supp_source_adjoint_magnetic_component} and \eqref{eq:supp_surface_observation_synthesis} gives
\begin{align}
    \mathbf J_{n,t}(\mathbf r')
     & =
    -\frac{2k^2}{\pi}
    \sum_{p=1}^{N_L}
    [\mathbf h_{n,S}]_p
    \bm\Psi_{\overline p,t}^{*}(\mathbf r'),
    \label{eq:supp_recovered_electric_current} \\
    \mathbf M_{n,t}(\mathbf r')
     & =
    \frac{2\mathrm j\eta k^2}{\pi}
    \sum_{p=1}^{N_L}
    [\mathbf h_{n,S}]_p
    \bm\Psi_{\overline{p^\star},t}^{*}(\mathbf r'),
    \label{eq:supp_recovered_magnetic_current} \\
    \widehat{\mathbf E}_{n,t}^{\mathrm{sca}}(\mathbf r)
     & =
    \sum_{p=1}^{N_L}
    [\mathbf h_{n,O}]_p
    \bm\Psi_{p,t}(\mathbf r).
    \label{eq:supp_recovered_observation_mode}
\end{align}

For completeness, the pseudoinverses in \eqref{eq:supp_source_unwhitened_coordinates}--\eqref{eq:supp_observation_unwhitened_coordinates} act only on the accessible supports. Since $\sigma_n>0$, one has
$\mathbf v_n\in\operatorname{ran}\mathbf C_{S,L}^{1/2}$ and
$\mathbf u_n\in\operatorname{ran}\mathbf C_{O,L}^{1/2}$. Consequently,
\begin{align}
    \|\mathbf q_n\|_{\mathcal X}^2
     & =
    \mathbf h_{n,S}^\dagger
    \mathbf C_{S,L}\mathbf h_{n,S}
    =1,
    \label{eq:supp_recovered_source_normalization} \\
    \|\widehat{\mathbf E}_{n,t}^{\mathrm{sca}}\|_{\mathcal Y}^2
     & =
    \mathbf h_{n,O}^\dagger
    \mathbf C_{O,L}\mathbf h_{n,O}
    =1.
    \label{eq:supp_recovered_observation_normalization}
\end{align}
The same calculation with two distinct singular vectors gives the orthonormality assertions in the main paper.

Finally,
\begin{equation}
    \mathcal S_L\mathbf q_n
    =
    \mathbf C_{S,L}\mathbf h_{n,S}
    =
    \mathbf C_{S,L}^{1/2}\mathbf v_n.
    \label{eq:supp_recovered_source_modal_action}
\end{equation}
Let $\mathbf P_{O,L}$ be the orthogonal projector onto
$\operatorname{ran}\mathbf C_{O,L}^{1/2}$. The singular-vector equation
$\mathbf C_{O,L}^{1/2}\mathbf K_L\mathbf C_{S,L}^{1/2}\mathbf v_n
    =\sigma_n\mathbf u_n$ implies
\begin{equation}
    \mathbf P_{O,L}\mathbf K_L
    \mathbf C_{S,L}^{1/2}\mathbf v_n
    =
    \sigma_n\mathbf h_{n,O}.
    \label{eq:supp_projected_metric_core_action}
\end{equation}
Because $\ker\mathbf C_{O,L}=\ker\mathcal O_L$, one has
$\mathcal O_L=\mathcal O_L\mathbf P_{O,L}$. Combining this fact with
\eqref{eq:supp_recovered_source_modal_action}--\eqref{eq:supp_projected_metric_core_action} gives
\begin{equation}
    \mathcal G_L^{\mathrm{sca}}\mathbf q_n
    =
    \sigma_n\widehat{\mathbf E}_{n,t}^{\mathrm{sca}}.
    \label{eq:supp_recovered_channel_action}
\end{equation}
This proves the asserted channel action and completes the recovery argument.

\section{External Extraction of Nonspherical Local T-Matrices}
\label{supp:nonspherical_tmatrix_extraction}

This note supplies the reproducibility details for the three nonspherical local T-matrices used in Section~IV-A of the main paper. It concerns only their offline extraction at the fixed order $L=3$. The procedure provides finite matrices for the subsequent implementation tests; it neither establishes SCUFF-EM mesh convergence nor enters the operator-norm convergence proof of the main paper.

\subsection{Discrete bodies and fixed modal order}

All three bodies are embedded in vacuum at $f=1\,\mathrm{GHz}$, with
\begin{align}
    k_0
     & =
    20.958450219516816\,\mathrm{m}^{-1},
    \label{eq:supp_tmatrix_wavenumber} \\
    a_0:=k_0^{-1}
     & =
    0.04771345159236943\,\mathrm m.
    \label{eq:supp_tmatrix_physical_scale}
\end{align}
Their relative permeabilities are one. Under the $\exp(-\mathrm j\omega t)$ convention of the main paper, the relative permittivities of the cube, ellipsoid, and torus are $4$, $10$, and $6+0.5\mathrm j$, respectively. Each SCUFF-EM geometry uses its corresponding mesh with this constant material assignment; the exterior medium is vacuum. Every mesh is centered at the local VSWF expansion origin. The cube has nominal side length $2a_0/\sqrt3$, and the ellipsoid has semiaxes $a_0(1,0.65,0.45)$. The torus is obtained from a generating torus of major radius $1$ and minor radius $0.35$ by centering its discrete nodal bounding box and scaling all coordinates by $0.738a_0$. The exact meshes used in this work are available from the public GitHub repository at \url{https://github.com/wangzhukang/scattering-green-operator-meshes}. The repository contains the Gmsh 2.2 ASCII files \texttt{cube.msh}, \texttt{ellipsoid.msh}, and \texttt{torus.msh}. These files define the discrete bodies and their enclosing radii used below.

For each body $D$, let $R_D^{\mathrm{enc}}$ be the maximum distance from its mesh nodes to the local origin. The retained labels are
\begin{align}
    \Lambda_3
     & =
    \{(\tau,l,m):\tau=1,2,\ 1\leq l\leq3,\ -l\leq m\leq l\},
    \label{eq:supp_tmatrix_retained_labels}
    \\
    U_3
     & =
    30.
    \label{eq:supp_tmatrix_retained_dimension}
\end{align}
The regular and outgoing VSWFs $\bm\Phi_u^{\mathrm{reg}}$ and $\bm\Phi_u^{\mathrm{out}}$ use exactly the normalization and ordering fixed in Section~III-A of the main paper. Thus the extracted matrix is intended to satisfy
\begin{equation}
    \mathbf a_D^{\mathrm{sca}}
    =
    \mathbf T_{D,3}\mathbf a_D^{\mathrm{inc}},
    \quad
    \mathbf T_{D,3}\in\mathbb C^{30\times30},
    \label{eq:supp_extracted_tmatrix_action}
\end{equation}
where both coefficient vectors are in the main-paper VSWF convention.

\subsection{Batched SCUFF-EM field solves}

SCUFF-EM discretizes the isolated penetrable-body surface integral equation with RWG unknowns on the supplied triangular mesh \cite{reid2015scuff}. At the fixed body, material, and frequency, let $\mathbf Z_D$ be the assembled BEM matrix. A regular SCUFF spherical wave is applied for every retained $(l,m,P)$, where $P\in\{0,1\}$ is the SCUFF polarization label. The zero-based SCUFF column index is
\begin{equation}
    \iota_{\mathrm S}(l,m,P)
    =
    2\bigl[l(l+1)+m-1\bigr]+P.
    \label{eq:supp_scuff_modal_index}
\end{equation}
All $30$ right-hand sides are assembled into one matrix $\mathbf B_D^{\mathrm{BEM}}$, after which a single LU factorization is reused:
\begin{equation}
    \mathbf Z_D\mathbf U_D^{\mathrm{BEM}}
    =
    \mathbf B_D^{\mathrm{BEM}}.
    \label{eq:supp_batched_scuff_solve}
\end{equation}
The resulting surface-current columns are then evaluated at the same exterior points. This organization changes only the cost of the extraction: it is algebraically the collection of the $30$ isolated-body scattering solves.

The fitting points form a deterministic Fibonacci grid on the sphere of radius $3R_D^{\mathrm{enc}}$. With $M=120$ and $j=0,\ldots,M-1$, set
\begin{align}
    z_j
     & =
    1-\frac{2(j+1/2)}{M},
    \quad
    \varphi_j
    =
    \pi(1+\sqrt5)j,
    \label{eq:supp_fibonacci_parameters} \\
    \mathbf r_j
     & =
    3R_D^{\mathrm{enc}}
    \left(
    \sqrt{1-z_j^2}\cos\varphi_j,
    \sqrt{1-z_j^2}\sin\varphi_j,
    z_j
    \right).
    \label{eq:supp_fibonacci_points}
\end{align}
Because every $\mathbf r_j$ lies strictly outside the enclosing sphere, each scattered field has a convergent single-center outgoing expansion there, including for the multiply connected torus. Only the sampled electric field enters the fit below; the independently sampled magnetic field is reserved for the replay test.

\subsection{Outgoing-field fit and convention conversion}

Define the outgoing electric-field design matrix $\mathbf V_D\in\mathbb C^{3M\times U_3}$ and the SCUFF sampled-field matrix $\mathbf F_D\in\mathbb C^{3M\times U_3}$ componentwise by
\begin{align}
    [\mathbf V_D]_{3j+c,u}
     & =
    [\bm\Phi_u^{\mathrm{out}}(\mathbf r_j)]_c,
    \label{eq:supp_outgoing_fit_design_matrix} \\
    [\mathbf F_D]_{3j+c,h+1}
     & =
    [\mathbf E_{D,h}^{\mathrm{sca,SCUFF}}(\mathbf r_j)]_c,
    \label{eq:supp_scuff_sampled_field_matrix}
\end{align}
where $j=0,\ldots,M-1$, $c=1,2,3$, $u=1,\ldots,U_3$, and $h=0,\ldots,U_3-1$. All SCUFF excitation columns are fitted simultaneously against the common design matrix:
\begin{equation}
    \widetilde{\mathbf T}_D
    =
    \underset{\mathbf X\in\mathbb C^{U_3\times U_3}}{\operatorname{argmin}}
    \|\mathbf V_D\mathbf X-\mathbf F_D\|_{\mathrm F}.
    \label{eq:supp_joint_outgoing_field_fit}
\end{equation}
For each body, the $360\times30$ design matrix has full column rank; its condition number is reported in Table~\ref{tab:supp_tmatrix_fit_diagnostics}.

It remains to convert the incident-column convention. Direct source-level comparison of the SCUFF and main-paper regular and outgoing Maxwell waves gives
\begin{equation}
    \bm\Phi_{\iota_{\mathrm S}(l,m,\tau-1)}^{s,\mathrm{SCUFF}}
    =
    \gamma_{\tau m}\bm\Phi_{\xi_3(\tau,l,m)}^{s},
    \quad
    s\in\{\mathrm{reg},\mathrm{out}\},
    \label{eq:supp_scuff_msss_vswf_conversion}
\end{equation}
where the polarization labels are identified by $P=\tau-1$, and
\begin{equation}
    \gamma_{\tau m}
    =
    s_m
    \begin{cases}
        \mathrm j/\sqrt\pi, & \tau=1, \\
        -1/\sqrt\pi,        & \tau=2,
    \end{cases}.
    \label{eq:supp_scuff_msss_mode_factor}
\end{equation}
Here
\begin{equation}
    s_m
    =
    \begin{cases}
        1,      & m\leq0, \\
        (-1)^m, & m>0.
    \end{cases}.
    \label{eq:supp_scuff_positive_m_sign}
\end{equation}
The rows of $\widetilde{\mathbf T}_D$ are already coefficients of the main-paper outgoing basis because that basis was used in \eqref{eq:supp_outgoing_fit_design_matrix}. Only the incident columns require rescaling. Hence, for every retained $(\tau,l,m)$,
\begin{equation}
    [\mathbf T_{D,3}]_{:,\xi_3(\tau,l,m)}
    =
    \gamma_{\tau m}^{-1}
    [\widetilde{\mathbf T}_D]_{:,\iota_{\mathrm S}(l,m,\tau-1)+1}.
    \label{eq:supp_extracted_tmatrix_column_conversion}
\end{equation}
Equation~\eqref{eq:supp_extracted_tmatrix_column_conversion} fixes the ordering, polarization identification, and phase normalization of the matrices used by the multiple-scattering solver.

\subsection{Fit diagnostics and held-out replay}

The overall and columnwise electric-field fit residuals are
\begin{align}
    r_{\mathrm F,D}
     & :=
    \frac{\|\mathbf V_D\widetilde{\mathbf T}_D-\mathbf F_D\|_{\mathrm F}}
    {\|\mathbf F_D\|_{\mathrm F}},
    \label{eq:supp_tmatrix_global_fit_residual} \\
    r_{D,h}
     & :=
    \frac{\|[\mathbf V_D\widetilde{\mathbf T}_D-\mathbf F_D]_{:,h+1}\|_2}
    {\|[\mathbf F_D]_{:,h+1}\|_2},
    \quad
    h=0,\ldots,U_3-1.
    \label{eq:supp_tmatrix_column_fit_residual}
\end{align}
Table~\ref{tab:supp_tmatrix_fit_diagnostics} reports $r_{\mathrm F,D}$ together with the arithmetic mean and maximum of $r_{D,h}$. The global residual weights columns by their scattered-field magnitudes, whereas the mean and maximum normalize every column separately and are therefore more sensitive to weak-response columns.

\begin{table}[!t]
    \caption{Electric-field fit and reciprocity diagnostics}
    \label{tab:supp_tmatrix_fit_diagnostics}
    \centering
    \setlength{\tabcolsep}{2.8pt}
    \begin{tabular}{c c c c c c c}
        \hline
        Body      & Rank & $\kappa_2(\mathbf V_D)$ & $r_{\mathrm F,D}$ & Mean $r_{D,h}$ & Max. $r_{D,h}$ & $\epsilon_{\mathrm{rec},D}$ \\
                  &      &                         & (\%)              & (\%)           & (\%)           &                             \\
        \hline
        Cube      & 30   & 1.925                   & 0.124             & 1.288          & 5.279          & $9.74\times10^{-6}$         \\
        Ellipsoid & 30   & 1.925                   & 0.187             & 1.246          & 4.145          & $1.34\times10^{-5}$         \\
        Torus     & 30   & 1.926                   & 0.452             & 5.163          & 42.190         & $3.65\times10^{-5}$         \\
        \hline
    \end{tabular}
\end{table}

Two algebraic checks are applied after the convention conversion. Let $\mathbf R_3$ reverse $m$ within each fixed $(\tau,l)$ block. The twisted-reciprocity residual in the last column of Table~\ref{tab:supp_tmatrix_fit_diagnostics} is
\begin{equation}
    \epsilon_{\mathrm{rec},D}
    :=
    \frac{\|\mathbf T_{D,3}-\mathbf R_3\mathbf T_{D,3}^{\mathsf T}\mathbf R_3\|_{\mathrm F}}
    {\|\mathbf T_{D,3}\|_{\mathrm F}}.
    \label{eq:supp_tmatrix_reciprocity_residual}
\end{equation}
Passivity is checked through the Hermitian matrix
\begin{equation}
    \mathbf P_D^{\mathrm{pass}}
    :=
    -\frac12
    \left(\mathbf T_{D,3}+\mathbf T_{D,3}^{\dagger}\right)
    -
    \mathbf T_{D,3}^{\dagger}\mathbf T_{D,3}.
    \label{eq:supp_tmatrix_passivity_matrix}
\end{equation}
For an exact lossless full T-operator, the corresponding passive form vanishes. For its retained principal block, coupling from retained incident modes into omitted outgoing modes instead contributes a positive semidefinite remainder. The cube and ellipsoid eigenvalue ranges are $[-6.94\times10^{-7},5.09\times10^{-7}]$ and $[-1.27\times10^{-6},1.21\times10^{-6}]$, respectively; their small negative excursions are attributable to the BEM and fitting errors, while the positive scale is also compatible with modal omission. For the lossy torus, the range is $[6.25\times10^{-8},9.15\times10^{-3}]$, consistent with a positive semidefinite passive form at the reported precision.

The replay uses a TE plane wave with polar angles $(\theta,\phi)=(0.75,0.4)$ and unit electric amplitude. In the implemented convention, its propagation and electric-polarization unit vectors are
\begin{equation}
    \widehat{\mathbf k}
    =
    (\sin\theta\cos\phi,\sin\theta\sin\phi,\cos\theta),
    \quad
    \widehat{\mathbf e}
    =
    (-\sin\phi,\cos\phi,0).
    \label{eq:supp_replay_plane_wave_vectors}
\end{equation}
The incident electric field supplied to SCUFF-EM is
$\mathbf E^{\mathrm{inc}}(\mathbf r)
    =\widehat{\mathbf e}\exp(\mathrm j k_0\widehat{\mathbf k}\cdot\mathbf r)$.
This excitation is not one of the spherical-wave fitting columns, and its six evaluation points are distinct from the Fibonacci grid. Define the set of unnormalized direction vectors by
\begin{equation}
    \mathcal D^{\mathrm{rep}}
    =
    \left\{
    \begin{array}{@{}c@{\;}c@{}}
        (+0.91,+0.20,+0.36), & (-0.82,+0.28,+0.50), \\
        (+0.24,-0.91,+0.34), & (+0.18,+0.35,+0.92), \\
        (-0.62,-0.58,+0.53), & (+0.50,-0.43,-0.75)
    \end{array}
    \right\},
    \label{eq:supp_replay_directions}
\end{equation}
With $(\rho_1,\ldots,\rho_6)=(1.55,1.68,1.82,1.95,1.74,1.88)$, the replay points are
\begin{equation}
    \mathbf r_q^{\mathrm{rep}}
    =
    \rho_q R_D^{\mathrm{enc}}
    \frac{\mathbf v_q}{\|\mathbf v_q\|_2},
    \quad
    \mathbf v_q\in\mathcal D^{\mathrm{rep}}.
    \label{eq:supp_replay_points}
\end{equation}
SCUFF-EM directly solves this plane-wave problem on the same body mesh. Separately, the analytic incident-coefficient formula is truncated at order $3$ and multiplied by $\mathbf T_{D,3}$. If the result is denoted by $\mathbf a_D^{\mathrm{sca}}$, the replay fields are synthesized as
\begin{align}
    \mathbf E_{D,3}^{\mathrm{sca}}(\mathbf r)
     & =
    \sum_{u=1}^{U_3}
    [\mathbf a_D^{\mathrm{sca}}]_u
    \bm\Phi_u^{\mathrm{out}}(\mathbf r),
    \label{eq:supp_replay_electric_synthesis} \\
    \mathbf H_{D,3}^{\mathrm{sca}}(\mathbf r)
     & =
    -\frac{\mathrm j}{\eta_0}
    \sum_{u=1}^{U_3}
    [\mathbf a_D^{\mathrm{sca}}]_u
    \bm\Phi_{u^\star}^{\mathrm{out}}(\mathbf r).
    \label{eq:supp_replay_magnetic_synthesis}
\end{align}
For $F\in\{E,H\}$, the reported replay error is
\begin{equation}
    \epsilon_{F,D}^{\mathrm{rep}}
    :=
    \frac{
    \left(
    \sum_{q=1}^{6}
    \|\mathbf F_{D,3}^{\mathrm{sca}}(\mathbf r_q^{\mathrm{rep}})
    -\mathbf F_D^{\mathrm{sca,SCUFF}}(\mathbf r_q^{\mathrm{rep}})\|_2^2
    \right)^{1/2}}
    {
    \left(
    \sum_{q=1}^{6}
    \|\mathbf F_D^{\mathrm{sca,SCUFF}}(\mathbf r_q^{\mathrm{rep}})\|_2^2
    \right)^{1/2}}.
    \label{eq:supp_plane_wave_replay_error}
\end{equation}
The resulting electric- and magnetic-field replay errors are those reported in Table~I of the main paper. In particular, the magnetic field is excluded from \eqref{eq:supp_joint_outgoing_field_fit}; its replay error therefore checks, outside the fit, the outgoing-field curl relation used in \eqref{eq:supp_replay_magnetic_synthesis}.

The single-run wall times in Table~I were measured on an Intel Xeon Platinum 8268 CPU at $2.90\,\mathrm{GHz}$ with approximately $90\,\mathrm{GiB}$ of memory. They include the batched spherical-wave BEM solve, exterior-field evaluation, least-squares fit, and plane-wave replay. No GPU acceleration was used.

The residuals above have a deliberately limited interpretation. The fitting residual tests how well the SCUFF BEM fields for the retained spherical-wave excitations lie in the retained outgoing VSWF space. The replay additionally tests the incident-column conversion and the action of the extracted matrix on a new excitation. Because both tests use the same supplied body mesh, they do not constitute a BEM mesh-refinement study. The replay also remains a prescribed-excitation test, not a supremum over the continuous source unit ball. Accordingly, the extracted matrices carry fixed BEM, fitting, and order-$3$ errors into the full-wave implementation tests, while the operator-norm convergence result of the main paper concerns the exact transition operators and the limit $L\to\infty$.

\section{COMSOL Full-Wave Reference and Benchmark Protocol}
\label{supp:comsol_fullwave_protocol}

This note specifies the full-wave reference and benchmark protocol underlying Section~IV-A of the main paper. The offline extraction of the nonspherical T-matrices is given in \hyperref[supp:nonspherical_tmatrix_extraction]{Supplementary Note~\ref*{supp:nonspherical_tmatrix_extraction}}.

\subsection{Coordinates, computational domain, mesh, and field sampling}

The manuscript coordinates are denoted by $(x,y,z)$. The implementation uses the fixed coordinate relabeling
\begin{equation}
    \mathbf r_{\mathrm c}
    =
    \begin{bmatrix}
        1 & 0 & 0  \\
        0 & 0 & -1 \\
        0 & 1 & 0
    \end{bmatrix}
    \mathbf r,
    \label{eq:supp_fullwave_coordinate_map}
\end{equation}
so manuscript broadside $+y$ is computational $+z$. All dimensions below are stated in manuscript coordinates. At $f=1\,\mathrm{GHz}$, the non-PML air domain is
\begin{equation}
    [-1.75,1.75]\lambda
    \times
    [-0.45,4.30]\lambda
    \times
    [-1.75,1.75]\lambda.
    \label{eq:supp_fullwave_physical_domain}
\end{equation}
A Cartesian PML of thickness $0.5\lambda$ surrounds all six faces. Its stretching is polynomial, and the typical wavelength supplied to the PML is $\lambda$.

The production models use the Electromagnetic Waves, Frequency Domain interface in COMSOL Multiphysics 6.4.0.293. The physical domain is meshed with free tetrahedra at physics-controlled mesh level 4, corresponding to \emph{Fine}; the PML is meshed by four swept layers. Third-order electric-field edge elements are used. The point and Gaussian models use this mesh without an additional local size constraint. The isolated physical patch-array solve adds $h_{\max}=\lambda/18.7$ on the antenna boundaries and $h_{\max}=\lambda/152$ on the four lumped-port boundaries. The frozen-Huygens COMSOL models instead impose $h_{\max}=\lambda/12$ on the source-box boundary. Each frequency-domain problem is fully coupled and uses the default direct solver selected by COMSOL.

The observation plane is not inserted as a conformal geometric boundary. Both solvers are sampled at the same $21\times21$ Cartesian grid
\begin{align}
    \mathbf r_{uv}
     & =
    (x_u,4.05\lambda,z_v),
    \label{eq:supp_fullwave_observation_grid} \\
    x_u
     & =
    -1.5\lambda+0.15u\lambda,
    \quad
    u=0,\ldots,20.
    \label{eq:supp_fullwave_observation_x}    \\
    z_v
     & =
    -1.5\lambda+0.15v\lambda,
    \quad
    v=0,\ldots,20.
    \label{eq:supp_fullwave_observation_z}
\end{align}
COMSOL's \texttt{mphinterp} evaluates the three complex components of both $\mathbf E$ and $\mathbf H$ at these 441 points. The solver outputs total fields. For every source, the scattered field used in the comparison is formed by subtracting a source-matched empty-scene solve from the loaded-scene solve. The amplitude maps in Fig.~3(b) of the main paper interpolate the original $21\times21$ vector samples to a $41\times41$ display grid, whereas all reported errors use the original samples.

\subsection{Time convention and discrete comparison metrics}

The main paper uses $\exp(-\mathrm j\omega t)$, whereas the COMSOL frequency-domain interface uses $\exp(+\mathrm j\omega t)$. If $\widetilde{\mathbf F}_{\mathrm C}$ denotes a raw COMSOL phasor, the fixed field mapping used in all comparisons is
\begin{equation}
    \mathbf F_{\mathrm C}
    :=
    \overline{\widetilde{\mathbf F}_{\mathrm C}},
    \quad
    \mathbf F\in\{\mathbf E,\mathbf H\}.
    \label{eq:supp_comsol_field_conjugation}
\end{equation}
Source phasors passed from the manuscript convention to the point and Gaussian COMSOL models are conjugated before the solve. For the frozen-Huygens comparison, the currents reimposed in COMSOL remain in the raw COMSOL convention, while the currents passed to the proposed solver are obtained from the conjugated empty-array fields. This single convention map is fixed by the corresponding empty-scene comparison and is applied unchanged to every loaded scene.

The same convention conversion is applied to constitutive parameters. If $\varepsilon_{\mathrm{P}}$ and $\mu_{\mathrm{P}}$ denote the relative parameters in the manuscript convention, COMSOL receives
\begin{equation}
    \varepsilon_{\mathrm{C}}
    :=
    \overline{\varepsilon_{\mathrm{P}}},
    \quad
    \mu_{\mathrm{C}}
    :=
    \overline{\mu_{\mathrm{P}}}.
    \label{eq:supp_comsol_constitutive_conjugation}
\end{equation}
Thus the torus value $6+0.5\mathrm j$ in the manuscript convention is represented by $6-0.5\mathrm j$ in COMSOL; the real-valued cube, ellipsoid, vacuum, and PEC parameters are invariant under this conversion.

For a sampled vector field $\mathbf F$, define
\begin{equation}
    \|\mathbf F\|_{2,\mathrm{samp}}
    :=
    \left(
    \sum_{u=0}^{20}\sum_{v=0}^{20}
    \sum_{c\in\{x,y,z\}}
    |F_c(\mathbf r_{uv})|^2
    \right)^{1/2}.
    \label{eq:supp_fullwave_discrete_field_norm}
\end{equation}
This is the ordinary Euclidean norm of the stacked complex samples. No observation-plane quadrature weights are applied. Because the grid is common and uniform, a common constant area factor would cancel from the relative quantities below, but the reported values are, strictly, discrete sample norms rather than continuous surface norms.

Let the subscripts $\mathrm P$ and $\mathrm C$ denote the proposed and convention-mapped COMSOL fields, respectively, and let superscripts $0$, $\mathrm{tot}$, and $\mathrm{sca}$ denote empty, loaded total, and loaded-minus-empty fields. Thus
\begin{align}
    \mathbf F_{\mathrm P}^{\mathrm{sca}}
     & :=
    \mathbf F_{\mathrm P}^{\mathrm{tot}}-\mathbf F_{\mathrm P}^{0},
    \label{eq:supp_fullwave_proposed_scattered_field} \\
    \mathbf F_{\mathrm C}^{\mathrm{sca}}
     & :=
    \mathbf F_{\mathrm C}^{\mathrm{tot}}-\mathbf F_{\mathrm C}^{0}.
    \label{eq:supp_fullwave_comsol_scattered_field}
\end{align}
The two error curves and the scattering-strength curve are
\begin{align}
    \epsilon_F^{\mathrm{tot}}
     & :=
    \frac{\|\mathbf F_{\mathrm P}^{\mathrm{tot}}-\mathbf F_{\mathrm C}^{\mathrm{tot}}\|_{2,\mathrm{samp}}}
    {\|\mathbf F_{\mathrm C}^{\mathrm{tot}}\|_{2,\mathrm{samp}}},
    \label{eq:supp_fullwave_total_error}     \\
    \epsilon_F^{\mathrm{sca}}
     & :=
    \frac{\|\mathbf F_{\mathrm P}^{\mathrm{sca}}-\mathbf F_{\mathrm C}^{\mathrm{sca}}\|_{2,\mathrm{samp}}}
    {\|\mathbf F_{\mathrm C}^{\mathrm{sca}}\|_{2,\mathrm{samp}}},
    \label{eq:supp_fullwave_scattered_error} \\
    s_F
     & :=
    \frac{\|\mathbf F_{\mathrm P}^{\mathrm{sca}}\|_{2,\mathrm{samp}}}
    {\|\mathbf F_{\mathrm P}^{0}\|_{2,\mathrm{samp}}}.
    \label{eq:supp_fullwave_scattered_strength}
\end{align}
Fig.~3(a) of the main paper uses $F=E$. The empty-scene electric errors are $2.2814\%$, $1.5127\%$, and $1.5303\%$ for the point, Gaussian, and frozen-Huygens sources, respectively. Across the loaded nested scenes, the corresponding maxima of $\epsilon_E^{\mathrm{tot}}$ are $2.2715\%$, $1.6232\%$, and $1.5271\%$. Their offsets from the source-specific empty baselines are $-0.0099$, $+0.1105$, and $-0.0032$ percentage points, respectively; thus only the Gaussian maximum exceeds its empty-scene baseline.

\subsection{Prescribed sources and the frozen-source protocol}

The point-current case uses
\begin{equation}
    \mathbf J_p(\mathbf r)
    =
    I\ell\,\hat{\mathbf x}\,\delta(\mathbf r),
    \quad
    I\ell=10^{-3}\,\mathrm{A\,m},
    \quad
    \mathbf M_p=\mathbf0.
    \label{eq:supp_fullwave_point_source}
\end{equation}
COMSOL realizes it as an Electric Point Dipole at the origin; the proposed solver represents the same distribution by one electric-current entry of moment $10^{-3}\,\mathrm{A\,m}$.

On the square $y=0$, $|x|,|z|\leq0.5\lambda$, the Gaussian source is
\begin{align}
    \mathbf J_G(x,0,z)
     & =
    J_0\exp\!\left[-\frac{x^2+z^2}{w^2}\right]\hat{\mathbf x},
    \label{eq:supp_fullwave_gaussian_electric_current} \\
    \mathbf M_G(x,0,z)
     & =
    \eta_0J_0\exp(\mathrm j\pi/3)
    \exp\!\left[-\frac{x^2+z^2}{w^2}\right]\hat{\mathbf z},
    \label{eq:supp_fullwave_gaussian_magnetic_current}
\end{align}
where $J_0=1\,\mathrm{A/m}$ and $w=0.25\lambda$. COMSOL applies these analytic densities as one-sided electric and magnetic surface-current conditions on an internal work plane. The proposed solver evaluates the same densities with an $18\times18$ tensor-product Gauss--Legendre rule on the square, giving 324 co-located electric and magnetic quadrature samples. Thus the continuous source is common, whereas its discretization is solver-specific; the Gaussian empty-scene error includes this interface difference.

The Huygens source begins with the COMSOL 6.4 RF Module Application Library model \nolinkurl{microstrip_patch_antenna_inset.mph} from the Antenna Arrays collection. Its substrate has $\varepsilon_r=3.38$ and is lossless. All antenna dimensions in that model are multiplied by $1.575$: the substrate thickness is based on $60\,\mathrm{mil}$, and the unscaled line width, patch width, patch length, stub width, stub length, and square-substrate side are $3.2$, $53$, $52$, $7$, $15.5$, and $100\,\mathrm{mm}$, respectively. Four copies are placed on a $2\times2$ square lattice of pitch $0.64\lambda$. Each lumped port is driven by $1\,\mathrm V$ with equal phase, giving broadside radiation along manuscript $+y$.

One empty-environment physical-array solve is sampled on the boundary of
\begin{equation}
    \Omega_{\mathrm{src}}
    =
    [-0.7,0.7]\lambda
    \times
    [-0.2,0.2]\lambda
    \times
    [-0.7,0.7]\lambda.
    \label{eq:supp_fullwave_huygens_box}
\end{equation}
Each face uses a $24\times24$ tensor-product Gauss--Legendre rule, for $6(24)^2=3456$ trace samples. After applying \eqref{eq:supp_comsol_field_conjugation}, the proposed-solver currents are
\begin{equation}
    \mathbf J_H
    =
    \widehat{\mathbf n}\times\mathbf H^0,
    \quad
    \mathbf M_H
    =
    -\widehat{\mathbf n}\times\mathbf E^0.
    \label{eq:supp_fullwave_huygens_currents}
\end{equation}
The proposed solver multiplies these current samples by their face-quadrature weights. For the independent COMSOL comparison, the raw-COMSOL-convention values of the same currents are tabulated separately on each face and reimposed through two-coordinate linear interpolation. The same current tables and weighted point data are reused without modification for every $N$. A separate frozen-Huygens empty solve supplies $\mathbf F_{\mathrm C}^0$, and each loaded solve changes only the scatterer set. The physical patch, its lumped ports, and any load-induced change of antenna current are absent from all loaded cases plotted in the main paper.

\subsection{Nested scene construction, clearance, and sample selection}

The candidate geometry is generated once with NumPy random seed $20260713$. In the cube centered at $(0,2.4\lambda,0)$, the local coordinate levels are $(-0.975,-0.325,0.325,0.975)\lambda$. All eight corner sites and 42 sites drawn without replacement from the remaining 56 lattice sites are retained. Each coordinate then receives an independent uniform jitter in $[-0.055,0.055]\lambda$. A shuffled label pool assigns 13 spheres, 13 cubes, 12 ellipsoids, and 12 tori to these 50 candidate IDs. Nonspherical orientations use active ZYZ Euler angles with $\alpha$ and $\gamma$ uniform on $[0,2\pi)$ and $\cos\beta$ uniform on $[-1,1]$.

The published scenes are hand-selected nested subsets of this fixed candidate list. Table~\ref{tab:supp_fullwave_nested_scenes} gives the IDs added at each step, the resulting shape counts, and the measured minimum enclosing-sphere gap. The five-body row gives the initial IDs; each subsequent row lists only the five new IDs. Hence every smaller scene is a strict subset of every larger scene.

\begin{table}[!t]
    \caption{Deterministic nested-scene construction and enclosing-sphere clearance}
    \label{tab:supp_fullwave_nested_scenes}
    \centering
    \footnotesize
    \setlength{\tabcolsep}{2pt}
    \begin{tabular}{c c c c c c c}
        \hline
        $N$ & \shortstack{Initial or added                                \\candidate IDs} & Sphere & Cube & Ellipsoid & Torus & $g_N/\lambda$ \\
        \hline
        5   & 01, 02, 04, 05, 07           & 2  & 1  & 1  & 1  & 1.624791 \\
        10  & 14, 18, 23, 37, 43           & 3  & 3  & 2  & 2  & 0.268077 \\
        15  & 06, 08, 09, 11, 12           & 4  & 4  & 3  & 4  & 0.246676 \\
        20  & 10, 13, 16, 19, 21           & 5  & 5  & 5  & 5  & 0.246676 \\
        25  & 15, 22, 26, 27, 28           & 7  & 6  & 6  & 6  & 0.246676 \\
        30  & 24, 32, 34, 35, 40           & 8  & 8  & 7  & 7  & 0.246676 \\
        35  & 17, 20, 25, 29, 31           & 9  & 9  & 9  & 8  & 0.246676 \\
        40  & 30, 33, 38, 41, 46           & 10 & 10 & 10 & 10 & 0.246676 \\
        \hline
    \end{tabular}
\end{table}

For a selected $N$-body set, the enforced quantity is
\begin{equation}
    g_N
    :=
    \min_{1\leq i<j\leq N}
    \left(
    \|\mathbf c_i-\mathbf c_j\|_2
    -R_i^{\mathrm{enc}}-R_j^{\mathrm{enc}}
    \right).
    \label{eq:supp_fullwave_minimum_enclosing_gap}
\end{equation}
Here $R_i^{\mathrm{enc}}=\lambda/(2\pi)$ for a PEC sphere and is the maximum nodal radius of the corresponding local mesh for a nonspherical body. The scene builder terminates without writing a case if $g_N<0.2\lambda$. This check is applied after the actual body IDs have been selected, rather than inferred from the nominal lattice spacing.

Torus candidate ID 3 could not be tetrahedrally meshed in COMSOL and was replaced by candidate ID 46 before the nested production sequence was fixed. This is a meshability exclusion rather than a field-based sample rejection; the resulting sequence is one deterministic benchmark and is not used to estimate ensemble statistics.

\subsection{Timing boundaries and execution environment}

The proposed solver's wall-clock timer excludes input loading and construction of the scatterer and excitation descriptors. It includes incident-field evaluation on the observation grid, translation-table preparation, rotated local-T-block preparation, source projection, the collective solve, and total-field reconstruction. The Table~II(b) groups are
\begin{align}
    t_{\mathrm{set}}
     & :=
    t_{\mathrm{translation}}+t_{\mathrm{T\mbox{-}block}},
    \label{eq:supp_fullwave_setup_time}       \\
    t_{\mathcal O}
     & :=
    t_{\mathrm{incident}}+t_{\mathrm{field}},
    \label{eq:supp_fullwave_observation_time} \\
    t_{\mathrm{tot}}
     & :=
    t_{\mathrm{set}}+t_{\mathcal S}+t_{\mathrm{sol}}+t_{\mathcal O}.
    \label{eq:supp_fullwave_total_online_time}
\end{align}
Here $t_{\mathcal S}$ is the projection of the prescribed electric and magnetic sources onto all local regular modes, and $t_{\mathrm{sol}}$ is the collective GMRES solve.

The COMSOL timer begins immediately before the single call to the frequency-domain study and ends when that call returns. It therefore includes finite-element assembly and the direct solution, but excludes model creation, STL import, geometry construction, mesh generation, solver-sequence construction, field interpolation, file export, and MPH-file saving. In contrast, the proposed timer includes its observation-field reconstruction. The comparison is consequently a repeated-query online benchmark with each solver's geometry/source data already prepared, not a comparison of complete first-use workflows.

The local SCUFF-EM T-matrix extractions reported in Table~I and the isolated physical-array solve used to obtain \eqref{eq:supp_fullwave_huygens_currents} are offline costs and are excluded from every speedup in Table~II. The frozen-Huygens empty COMSOL solve is likewise excluded from the per-loaded-scene COMSOL time. The Table~I extraction hardware is documented in \hyperref[supp:nonspherical_tmatrix_extraction]{Supplementary Note~\ref*{supp:nonspherical_tmatrix_extraction}} and differs from the workstation used here.

Every Table~II entry is one wall-clock measurement; no repetitions, averaging, best-of-run selection, or warm-up correction is applied. The software environment was COMSOL Multiphysics 6.4.0.293 through LiveLink for MATLAB R2025b Update 5, together with Python 3.12.9, NumPy 2.2.6, and SciPy 1.15.3 on 64-bit Windows. The workstation has an Intel Core i7-14700 processor with 20 physical cores, 28 logical processors, and approximately $31.8\,\mathrm{GiB}$ of memory.

\section{Finite-Reference Operator Diagnostics}
\label{supp:finite_reference_diagnostics}

This note derives the finite-matrix diagnostics used in Section~IV-B of the main paper. It first specifies the semianalytic Gramians for the concentric closed source and observation spheres, then derives the finite-reference error, the source and observation tails, and the trace-normalized resolvent amplification.

\subsection{Semianalytic Gramians on the external spheres}

Let $L_\star$ be the retained reference order and let the cluster origin be the common center of
\begin{equation}
    \Sigma_S=\partial B_{R_S}(\mathbf0),
    \quad
    \Sigma_O=\partial B_{R_O}(\mathbf0).
    \label{eq:supp_diagnostic_external_spheres}
\end{equation}
Both spheres enclose every local trace sphere. The source carries tangential electric and magnetic currents with the metric in \eqref{eq:supp_surface_source_metric}, and the observation carries the tangential electric field with the metric in \eqref{eq:supp_surface_observation_metric}.

For a reference cluster mode $p=\zeta_{L_\star}(i,u)$, recall $\bm\Psi_p(\mathbf r)=\bm\Phi_u^{\mathrm{out}}(\mathbf r-\mathbf c_i)$. Translate every such local outgoing wave to outgoing VSWFs about the cluster origin and truncate the common-center expansion at $L_c$:
\begin{equation}
    \bm\Psi_p(\mathbf r)
    =
    \sum_{\nu\in\Lambda_{L_c}}
    [\mathbf A_c]_{\nu p}
    \bm\Phi_\nu^{\mathrm{out}}(\mathbf r)
    +
    \bm\epsilon_{p,L_c}(\mathbf r).
    \label{eq:supp_common_center_outgoing_translation}
\end{equation}
The same translation matrix $\mathbf A_c\in\mathbb C^{U_{L_c}\times N_{L_\star}}$ applies on both external spheres; only their radial weights differ. Define
\begin{equation}
    b_{\tau l}(R)
    :=
    \pi R^2
    \begin{cases}
        |z_l^{\mathrm{out}}(kR)|^2, & \tau=1, \\
        |d_l^{\mathrm{out}}(kR)|^2, & \tau=2,
    \end{cases}
    \label{eq:supp_common_center_tangential_weights}
\end{equation}
where $d_l^{\mathrm{out}}$ is defined in \eqref{eq:supp_radial_derivative_factor}, and let $\mathbf D_c(R)$ be diagonal with entry $b_{\tau l}(R)$ for every $\nu=(\tau,l,m)\in\Lambda_{L_c}$. The VSH orthogonality established in \hyperref[supp:vswf_trace_metric]{Supplementary Note~\ref*{supp:vswf_trace_metric}} gives
\begin{equation}
    \int_{\partial B_R}
    (\bm\Phi_{\nu,t}^{\mathrm{out}})^{\dagger}
    \bm\Phi_{\nu',t}^{\mathrm{out}}\,\mathrm dS
    =
    \delta_{\nu\nu'}b_{\tau l}(R).
    \label{eq:supp_common_center_tangential_orthogonality}
\end{equation}

Let $\mathbf A_J$ and $\mathbf A_M$ collect the translated columns required by the two source pairings:
\begin{equation}
    [\mathbf A_J]_{:,p}
    :=
    [\mathbf A_c]_{:,\overline p},
    \quad
    [\mathbf A_M]_{:,p}
    :=
    [\mathbf A_c]_{:,\overline{p^\star}}.
    \label{eq:supp_common_center_source_column_maps}
\end{equation}
Substitution into the surface formulas derived in \hyperref[supp:surface_interface_grams]{Supplementary Note~\ref*{supp:surface_interface_grams}} yields
\begin{align}
    \mathbf C_{O,L_\star}^{(L_c)}
     & =
    \frac{1}{2\eta}
    \mathbf A_c^{\dagger}
    \mathbf D_c(R_O)
    \mathbf A_c,
    \label{eq:supp_semianalytic_observation_gram} \\
    \mathbf C_{S,L_\star}^{(L_c)}
     & =
    \frac{2\eta k^4}{\pi^2}
    \left[
    \mathbf A_J^{\mathsf T}
    \mathbf D_c(R_S)
    \mathbf A_J^{*}
    +
    \mathbf A_M^{\mathsf T}
    \mathbf D_c(R_S)
    \mathbf A_M^{*}
    \right].
    \label{eq:supp_semianalytic_source_gram}
\end{align}
Thus the continuous surface integrals are reduced to translation coefficients and diagonal radial factors; no surface quadrature degrees of freedom enter the operator-norm calculation. All Section~IV-B quantities involving the external-surface Gramians use $L_c=44$. In the fixed-surface $L_\star=18$ closure check, increasing $L_c$ from $44$ to $48$ changed the reported reference norm and relative errors by at most $1.3\times10^{-8}$ relatively.

\subsection{Nested finite-reference error}

For $L\leq L_\star$, define the isometric coordinate injection $\mathbf Z_{L\mid L_\star}\in\mathbb C^{N_{L_\star}\times N_L}$ by
\begin{equation}
    [\mathbf Z_{L\mid L_\star}]_{(i,\alpha),(j,\beta)}
    :=
    \delta_{ij}\delta_{\alpha\beta},
    \quad
    \alpha\in\Lambda_{L_\star},
    \quad
    \beta\in\Lambda_L.
    \label{eq:supp_nested_coordinate_injection}
\end{equation}
It satisfies
\begin{equation}
    \mathbf Z_{L\mid L_\star}^{\dagger}
    \mathbf Z_{L\mid L_\star}
    =
    \mathbf I,
    \quad
    \mathbf E_{L\mid L_\star}
    :=
    \mathbf Z_{L\mid L_\star}
    \mathbf Z_{L\mid L_\star}^{\dagger},
    \label{eq:supp_nested_coordinate_projector}
\end{equation}
where $\mathbf E_{L\mid L_\star}$ selects the modes with angular order at most $L$ in the reference coordinates. Because the raw modal functions and their fixed trace weights do not depend on the ambient truncation order,
\begin{equation}
    \mathcal S_L
    =
    \mathbf Z_{L\mid L_\star}^{\dagger}\mathcal S_{L_\star},
    \quad
    \mathcal O_L
    =
    \mathcal O_{L_\star}\mathbf Z_{L\mid L_\star}.
    \label{eq:supp_nested_external_maps}
\end{equation}
Consequently, with
\begin{equation}
    \mathbf D_{L\mid L_\star}
    :=
    \mathbf K_{L_\star}
    -
    \mathbf Z_{L\mid L_\star}
    \mathbf K_L
    \mathbf Z_{L\mid L_\star}^{\dagger},
    \label{eq:supp_nested_core_difference}
\end{equation}
the two retained operators satisfy
\begin{equation}
    \mathcal G_{L_\star}^{\mathrm{sca}}
    -
    \mathcal G_L^{\mathrm{sca}}
    =
    \mathcal O_{L_\star}
    \mathbf D_{L\mid L_\star}
    \mathcal S_{L_\star}.
    \label{eq:supp_nested_operator_difference}
\end{equation}
Applying the finite metric realization from the main paper gives
\begin{equation}
    \left\|
    \mathcal G_{L_\star}^{\mathrm{sca}}
    -
    \mathcal G_L^{\mathrm{sca}}
    \right\|_{\mathcal X\to\mathcal Y}
    =
    \sigma_1
    \left(
    \mathbf C_{O,L_\star}^{1/2}
    \mathbf D_{L\mid L_\star}
    \mathbf C_{S,L_\star}^{1/2}
    \right).
    \label{eq:supp_nested_difference_metric_core}
\end{equation}
The same result with $\mathbf D_{L\mid L_\star}$ replaced by $\mathbf K_{L_\star}$ gives the reference norm. Therefore
\begin{equation}
    e_{L\mid L_\star}
    =
    \frac{
        \sigma_1
        \left(
        \mathbf C_{O,L_\star}^{1/2}
        \mathbf D_{L\mid L_\star}
        \mathbf C_{S,L_\star}^{1/2}
        \right)}{
        \sigma_1
        \left(
        \mathbf C_{O,L_\star}^{1/2}
        \mathbf K_{L_\star}
        \mathbf C_{S,L_\star}^{1/2}
        \right)},
    \label{eq:supp_finite_reference_error_formula}
\end{equation}
provided the denominator is nonzero. Setting $L_\star=L+1$ gives the adjacent-order indicator $\widehat e_L$.

\subsection{Source and observation tail ratios}

Define the complementary reference-coordinate selector
\begin{equation}
    \mathbf E_{L\mid L_\star}^{\perp}
    :=
    \mathbf I-\mathbf E_{L\mid L_\star}.
    \label{eq:supp_high_order_reference_selector}
\end{equation}
Let
\begin{equation}
    \mathbf G_{r,L_\star}
    :=
    \bigoplus_i\mathbf G_{i,L_\star}^{\mathrm{reg}},
    \quad
    \mathbf G_{o,L_\star}
    :=
    \bigoplus_i\mathbf G_{i,L_\star}^{\mathrm{out}}
    \label{eq:supp_reference_trace_grams}
\end{equation}
be the positive diagonal trace Gramians from \eqref{eq:supp_finite_trace_gram_entries}. They share the nested modal ordering with $\mathbf E_{L\mid L_\star}^{\perp}$ and therefore commute with it. Introduce
\begin{align}
    \widetilde{\mathbf C}_{S,L_\star}
     & :=
    \mathbf G_{r,L_\star}^{1/2}
    \mathbf C_{S,L_\star}
    \mathbf G_{r,L_\star}^{1/2},
    \label{eq:supp_trace_normalized_source_gram} \\
    \widetilde{\mathbf C}_{O,L_\star}
     & :=
    \mathbf G_{o,L_\star}^{-1/2}
    \mathbf C_{O,L_\star}
    \mathbf G_{o,L_\star}^{-1/2}.
    \label{eq:supp_trace_normalized_observation_gram}
\end{align}

For the source side, synthesis by $P_{L_\star}^{\mathrm{reg}}$ gives
\begin{equation}
    \left\|
    P_{L_\star}^{\mathrm{reg}}\mathbf a
    \right\|_{\mathcal H^{\mathrm{reg}}}^2
    =
    \mathbf a^{\dagger}
    \mathbf G_{r,L_\star}
    \mathbf a.
    \label{eq:supp_regular_coordinate_trace_norm}
\end{equation}
Using $\mathbf C_{S,L_\star}=\mathcal S_{L_\star}\mathcal S_{L_\star}^{\dagger}$, the squared numerator of the source-tail ratio is
\begin{equation}
    \left\|
    (\Pi_{L_\star}^{\mathrm{reg}}-\Pi_L^{\mathrm{reg}})
    \mathcal S
    \right\|^2
    =
    \lambda_{\max}
    \left(
    \mathbf E_{L\mid L_\star}^{\perp}
    \widetilde{\mathbf C}_{S,L_\star}
    \mathbf E_{L\mid L_\star}^{\perp}
    \right).
    \label{eq:supp_source_tail_numerator}
\end{equation}
The positive square roots on the source side appear because the source map produces raw regular-trace coefficients whose range norm is weighted by $\mathbf G_{r,L_\star}$.

For the observation side, an outgoing coefficient vector represents a unit trace after the whitening $\mathbf a=\mathbf G_{o,L_\star}^{-1/2}\mathbf x$ with $\|\mathbf x\|_2=1$. Since $\mathbf C_{O,L_\star}=\mathcal O_{L_\star}^{\dagger}\mathcal O_{L_\star}$,
\begin{equation}
    \left\|
    \mathcal O
    (\Pi_{L_\star}^{\mathrm{out}}-\Pi_L^{\mathrm{out}})
    \right\|^2
    =
    \lambda_{\max}
    \left(
    \mathbf E_{L\mid L_\star}^{\perp}
    \widetilde{\mathbf C}_{O,L_\star}
    \mathbf E_{L\mid L_\star}^{\perp}
    \right).
    \label{eq:supp_observation_tail_numerator}
\end{equation}
This explains the inverse square roots on the observation side: its coefficient vector belongs to the domain rather than the range of the map being measured.

Removing the two selectors in \eqref{eq:supp_source_tail_numerator} and \eqref{eq:supp_observation_tail_numerator} gives the corresponding squared denominators. Hence, with $\bullet$ standing for either $S$ or $O$,
\begin{equation}
    \kappa_{\bullet,L\mid L_\star}
    =
    \left[
        \frac{
            \lambda_{\max}
            \left(
            \mathbf E_{L\mid L_\star}^{\perp}
            \widetilde{\mathbf C}_{\bullet,L_\star}
            \mathbf E_{L\mid L_\star}^{\perp}
            \right)}{
            \lambda_{\max}
            \left(
            \widetilde{\mathbf C}_{\bullet,L_\star}
            \right)}
        \right]^{1/2}.
    \label{eq:supp_finite_reference_tail_formula}
\end{equation}
When the denominator is nonzero, positivity and compression by the orthogonal selector imply $0\leq\kappa_{\bullet,L\mid L_\star}\leq1$. The compact expression in \eqref{eq:supp_finite_reference_tail_formula} relies on the commutation of the selector and trace Gramian; it is not valid for an arbitrary non-diagonal metric without retaining the factors in their original order.

\subsection{Trace-normalized resolvent and angular content}

Let
\begin{equation}
    \mathbf R_L
    =
    (\mathbf I-\mathbf W_L\mathbf T_L)^{-1}.
    \label{eq:supp_retained_resolvent}
\end{equation}
For a regular-trace coefficient vector $\mathbf a$, the fixed trace norm is $\|\mathbf a\|_{r,L}^2=\mathbf a^{\dagger}\mathbf G_{r,L}\mathbf a$. Its induced resolvent norm is therefore
\begin{equation}
    \widetilde{\mathbf R}_L
    :=
    \mathbf G_{r,L}^{1/2}
    \mathbf R_L
    \mathbf G_{r,L}^{-1/2},
    \quad
    \sup_{\mathbf a\ne\mathbf0}
    \frac{\|\mathbf R_L\mathbf a\|_{r,L}}
    {\|\mathbf a\|_{r,L}}
    =
    \|\widetilde{\mathbf R}_L\|_2
    =
    \rho_L.
    \label{eq:supp_resolvent_induced_trace_norm}
\end{equation}
Define the metric-normalized collective matrix
\begin{equation}
    \widetilde{\mathbf M}_L
    :=
    \mathbf G_{r,L}^{1/2}
    (\mathbf I-\mathbf W_L\mathbf T_L)
    \mathbf G_{r,L}^{-1/2}.
    \label{eq:supp_metric_normalized_collective_matrix}
\end{equation}
Similarity of the inverse gives
\begin{equation}
    \mathbf G_{r,L}^{1/2}
    \mathbf R_L
    \mathbf G_{r,L}^{-1/2}
    =
    \widetilde{\mathbf M}_L^{-1},
    \quad
    \rho_L
    =
    \sigma_{\min}^{-1}(\widetilde{\mathbf M}_L).
    \label{eq:supp_resolvent_smallest_singular_value}
\end{equation}
The lifted resolvent on the full regular-trace space has the orthogonal block form
\begin{equation}
    \mathcal R_L^\sharp
    =
    \mathbb I-\Pi_L^{\mathrm{reg}}
    +
    P_L^{\mathrm{reg}}
    \mathbf R_L
    Q_L^{\mathrm{reg}}.
    \label{eq:supp_lifted_resolvent_block_form}
\end{equation}
The first block is the identity on the omitted orthogonal complement and the second is the retained resolvent. Consequently,
\begin{equation}
    \|\mathcal R_L^\sharp\|
    =
    \max\{1,\rho_L\}.
    \label{eq:supp_lifted_resolvent_norm}
\end{equation}

Finally, let
\begin{equation}
    \widetilde{\mathbf M}_L
    =
    \mathbf U_L^{\mathrm c}
    \mathbf\Sigma_L^{\mathrm c}
    (\mathbf V_L^{\mathrm c})^{\dagger}
    \label{eq:supp_collective_matrix_svd}
\end{equation}
be an SVD. Order its singular values as
$\varsigma_{1,L}\geq\cdots\geq\varsigma_{N_L,L}>0$, write
$\mathbf U_L^{\mathrm c}=[\mathbf u_{1,L}^{\mathrm c},\ldots,\mathbf u_{N_L,L}^{\mathrm c}]$, and let $d_L^{\min}$ be the multiplicity of the smallest singular value, or numerically the dimension of the isolated cluster representing it. The superscript ``$\mathrm c$'' distinguishes these collective-matrix SVD factors from the accessible-channel SVD used in \hyperref[supp:accessible_channel_diagnostics]{Supplementary Note~\ref*{supp:accessible_channel_diagnostics}}. Define
\begin{align}
    \mathbf U_{\min,L}^{\mathrm c}
     & :=
    [\mathbf u_{N_L-d_L^{\min}+1,L}^{\mathrm c},\ldots,
    \mathbf u_{N_L,L}^{\mathrm c}],
    \label{eq:supp_resolvent_maximizing_input_subspace} \\
    \mathbf P_{\min,L}^{\mathrm{in}}
     & :=
    \mathbf U_{\min,L}^{\mathrm c}
    (\mathbf U_{\min,L}^{\mathrm c})^{\dagger}.
    \label{eq:supp_resolvent_maximizing_input_projector}
\end{align}
Since $\widetilde{\mathbf M}_L^{-1}=\mathbf V_L^{\mathrm c}(\mathbf\Sigma_L^{\mathrm c})^{-1}(\mathbf U_L^{\mathrm c})^{\dagger}$, $\operatorname{ran}\mathbf U_{\min,L}^{\mathrm c}$ is the right singular subspace of $\widetilde{\mathbf R}_L$ associated with its largest singular value $\rho_L$. Although the orthonormal basis $\mathbf U_{\min,L}^{\mathrm c}$ is nonunique when $d_L^{\min}>1$, the projector $\mathbf P_{\min,L}^{\mathrm{in}}$ is basis invariant. Let $\mathcal P_l$ collect the cluster indices of angular order $l$:
\begin{equation}
    \mathcal P_l
    =
    \{\zeta_L(i,\xi_L(\tau,l,m))\mid 1\leq i\leq N, \tau =1,2, -l\leq m \leq l\}.
    \label{eq:supp_resolvent_angular_index_set}
\end{equation}
The basis-invariant trace-normalized fraction of this input subspace at angular order $l$ is
\begin{equation}
    w_{l,L}^{\mathrm{in}}
    :=
    \frac{1}{d_L^{\min}}
    \sum_{p\in\mathcal P_l}
    [\mathbf P_{\min,L}^{\mathrm{in}}]_{pp}.
    \label{eq:supp_resolvent_input_angular_fraction}
\end{equation}
For each $L$, these fractions sum to one because $\operatorname{tr}\mathbf P_{\min,L}^{\mathrm{in}}=d_L^{\min}$. In the reported near-resonant systems, the two smallest singular values agree relatively within $6\times10^{-13}$ at $L=7$ and $4\times10^{-12}$ at $L=14$, whereas the third smallest is larger than the minimum by factors $15.97$ and $95.96$, respectively. Thus the dominant input singular subspace is numerically isolated and two-dimensional at both orders. Its $l=7$ fractions are
$w_{7,7}^{\mathrm{in}}=0.994$ and $w_{7,14}^{\mathrm{in}}=0.993$, which are the two percentages reported in Section~IV-B of the main paper.

\section{Finite-Modal Realization and Diagnostics of Accessible Channels}
\label{supp:accessible_channel_diagnostics}

This note gives the numerical realization and the subspace diagnostics used in Section~IV-C of the main paper. The Hilbert-adjoint and channel-recovery formulas themselves are proved in \hyperref[supp:surface_interface_grams]{Supplementary Note~\ref*{supp:surface_interface_grams}}. Here we explain how those formulas are evaluated without first forming the Gram matrices and their square roots, how the leading doublets and their angular content are identified, and how the reconstructed channels are checked by an independent finite-order forward solution.

\subsection{QR realization of the metric core}

Use the semianalytic external-sphere construction of \hyperref[supp:finite_reference_diagnostics]{Supplementary Note~\ref*{supp:finite_reference_diagnostics}} at $L_c=44$. The metric-normalized source coordinate vector has two blocks, one for each of $\mathbf J_t$ and $\mathbf M_t$, and therefore belongs to $\mathbb C^{2U_{L_c}}$; the metric-normalized tangential-electric-field coordinate vector belongs to $\mathbb C^{U_{L_c}}$. The observation rows are indexed by the common-center mode $\nu=(\tau,l,m)\in\Lambda_{L_c}$. Each source block uses the conjugated Riesz basis fixed by \eqref{eq:supp_source_adjoint_electric_component} and \eqref{eq:supp_source_adjoint_magnetic_component}; its rows are labelled by the physical azimuthal order of the resulting current basis function. Equivalently, this physical label is the negative of the un-conjugated VSWF index before conjugation. This sign reversal leaves the symmetric $\pm m$ sectors used below unchanged.

In these orthonormal coordinates, let the coordinate-lift matrices $\mathbf L_{S,L}$ and $\mathbf L_{O,L}$ represent $\mathcal S_L^{\dagger}$ and $\mathcal O_L$, respectively. Thus $\mathbf L_{S,L}\mathbf h$ is the coordinate vector of $\mathcal S_L^{\dagger}\mathbf h$, while $\mathbf L_{O,L}\mathbf h$ is the coordinate vector of $\mathcal O_L\mathbf h$. Their columns are the coordinate representatives generated by the retained cluster modes, so
\begin{equation}
    \mathbf C_{S,L}=\mathbf L_{S,L}^{\dagger}\mathbf L_{S,L},
    \quad
    \mathbf C_{O,L}=\mathbf L_{O,L}^{\dagger}\mathbf L_{O,L}.
    \label{eq:supp_channel_factor_grams}
\end{equation}
Reduced QR factorizations give
\begin{equation}
    \mathbf L_{S,L}=\mathbf Q_{S,L}\mathbf R_{S,L},
    \quad
    \mathbf L_{O,L}=\mathbf Q_{O,L}\mathbf R_{O,L},
    \label{eq:supp_channel_factor_qr}
\end{equation}
where the columns of $\mathbf Q_{S,L}$ and $\mathbf Q_{O,L}$ are orthonormal. The external-coordinate matrix of the retained scattering operator then factors as
\begin{equation}
    \mathbf L_{O,L}\mathbf K_L\mathbf L_{S,L}^{\dagger}
    =
    \mathbf Q_{O,L}\widehat{\mathbf B}_L\mathbf Q_{S,L}^{\dagger},
    \quad
    \widehat{\mathbf B}_L
    :=
    \mathbf R_{O,L}\mathbf K_L\mathbf R_{S,L}^{\dagger}.
    \label{eq:supp_channel_reduced_core}
\end{equation}
For the reported external surfaces, both factors have full retained column rank at every order used below. Since $\mathbf R_{S,L}^{\dagger}\mathbf R_{S,L}=\mathbf C_{S,L}$ and $\mathbf R_{O,L}^{\dagger}\mathbf R_{O,L}=\mathbf C_{O,L}$, their polar decompositions give unitary matrices $\mathbf Z_{S,L}$ and $\mathbf Z_{O,L}$ such that
\begin{equation}
    \mathbf R_{S,L}
    =
    \mathbf Z_{S,L}\mathbf C_{S,L}^{1/2},
    \quad
    \mathbf R_{O,L}
    =
    \mathbf Z_{O,L}\mathbf C_{O,L}^{1/2},
    \label{eq:supp_channel_factor_polar_forms}
\end{equation}
and therefore
\begin{equation}
    \widehat{\mathbf B}_L
    =
    \mathbf Z_{O,L}\mathbf B_L\mathbf Z_{S,L}^{\dagger}.
    \label{eq:supp_channel_core_unitary_equivalence}
\end{equation}
Thus $\widehat{\mathbf B}_L$ has the same nonzero singular values as the metric core. This QR form avoids forming normal equations while preserving the Hilbert metrics exactly within the $L_c$ realization.

Let
\begin{equation}
    \widehat{\mathbf B}_L
    =
    \mathbf U_L\mathbf\Sigma_L\mathbf V_L^{\dagger}.
    \label{eq:supp_channel_reduced_core_svd}
\end{equation}
The accessible-channel corollary in Section~III-E of the main paper defines the source $\mathbf q_n$ and the unit-normalized observation field $\widehat{\mathbf E}_{n}^{\mathrm{sca}}$. Their external-coordinate representatives are
\begin{equation}
    \mathbf q_{n,L}=\mathbf Q_{S,L}\mathbf v_{n,L},
    \quad
    \widehat{\mathbf e}_{n,L}^{\mathrm{sca}}
    =\mathbf Q_{O,L}\mathbf u_{n,L}.
    \label{eq:supp_channel_qr_recovery}
\end{equation}
Here the lowercase $\widehat{\mathbf e}_{n,L}^{\mathrm{sca}}\in\mathbb C^{U_{L_c}}$ is the finite external-coordinate vector representing the uppercase field $\widehat{\mathbf E}_{n}^{\mathrm{sca}}\in\mathcal Y$; it is not a second observation field. Likewise, $\mathbf q_{n,L}\in\mathbb C^{2U_{L_c}}$ is the coordinate representation of the corresponding source $\mathbf q_n\in\mathcal X$. Within the $L_c$ realization, both coordinate identifications are isometric. Equations~\eqref{eq:supp_channel_reduced_core} and \eqref{eq:supp_channel_reduced_core_svd} therefore give unit norms, mutual orthogonality, and
\begin{equation}
    \mathbf L_{O,L}\mathbf K_L\mathbf L_{S,L}^{\dagger}
    \mathbf q_{n,L}
    =
    \sigma_{n,L}\widehat{\mathbf e}_{n,L}^{\mathrm{sca}}.
    \label{eq:supp_channel_qr_action}
\end{equation}
At $L=14$, the largest source- and observation-orthonormality residuals over the four retained diagnostic channels are $2.2\times10^{-15}$ and $1.7\times10^{-15}$, respectively. The relative Frobenius residual of \eqref{eq:supp_channel_qr_action} is $2.3\times10^{-15}$ near resonance and $1.2\times10^{-15}$ off resonance.

\subsection{Leading doublets, azimuthal sectors, and order stability}

Two identifications are needed before interpreting the leading channels. First, we must determine whether the largest singular values form a numerical doublet. Second, because singular vectors within a degenerate subspace are not unique, the azimuthal label must be assigned to the two-dimensional subspace rather than to a particular SVD basis.

For the first step, define the numerical leading cluster consecutively from the top of the spectrum. Its dimension $d_L$ is the largest integer for which
\begin{equation}
    \frac{|\sigma_{j,L}-\sigma_{1,L}|}{\sigma_{1,L}}
    \leq
    10^{-5},
    \quad 1\leq j\leq d_L,
    \label{eq:supp_channel_degeneracy_criterion}
\end{equation}
and the first singular value violating \eqref{eq:supp_channel_degeneracy_criterion} ends the cluster. At $L=14$, the relative splits between $\sigma_1$ and $\sigma_2$ are $1.23\times10^{-9}$ near resonance and $4.92\times10^{-12}$ off resonance, whereas $\sigma_3$ lies outside the cluster in both cases. Hence $d_{14}=2$ for both material states. The ratios $\sigma_2/\sigma_3=19.976$ near resonance and $1.022$ off resonance further show that only the near-resonant doublet is strongly separated from the remaining spectrum.

For the second step, let the orthogonal projectors onto the recovered leading source and observation subspaces be
\begin{equation}
    \mathbf P_{\mathrm{ch},L}^{X}
    :=
    \sum_{n=1}^{2}\mathbf q_{n,L}\mathbf q_{n,L}^{\dagger},
    \quad
    \mathbf P_{\mathrm{ch},L}^{Y}
    :=
    \sum_{n=1}^{2}
    \widehat{\mathbf e}_{n,L}^{\mathrm{sca}}
    (\widehat{\mathbf e}_{n,L}^{\mathrm{sca}})^{\dagger}.
    \label{eq:supp_channel_leading_subspace_projectors}
\end{equation}
These projectors are unchanged by any unitary mixing of the two singular vectors and therefore represent the doublet itself rather than the particular basis returned by the SVD.

To define the azimuthal selectors explicitly, let
\begin{equation}
    \Lambda_{L_c}(\mathcal M)
    :=
    \{(\tau,l,m)\in\Lambda_{L_c}:m\in\mathcal M\}.
    \label{eq:supp_channel_azimuthal_index_set}
\end{equation}
Let $\boldsymbol\delta_{b,\nu}^{X}$ be the standard unit vector of the source coordinate space at current block $b\in\{J,M\}$ and mode $\nu\in\Lambda_{L_c}$, and let $\boldsymbol\delta_{\nu}^{Y}$ be the corresponding unit vector of the observation coordinate space. The orthogonal coordinate projectors onto the selected azimuthal sectors are
\begin{align}
    \mathbf P_{\mathcal M}^{X}
     & :=
    \sum_{b\in\{J,M\}}
    \sum_{\nu\in\Lambda_{L_c}(\mathcal M)}
    \boldsymbol\delta_{b,\nu}^{X}
    (\boldsymbol\delta_{b,\nu}^{X})^{\dagger},
    \label{eq:supp_channel_source_azimuthal_projector} \\
    \mathbf P_{\mathcal M}^{Y}
     & :=
    \sum_{\nu\in\Lambda_{L_c}(\mathcal M)}
    \boldsymbol\delta_{\nu}^{Y}
    (\boldsymbol\delta_{\nu}^{Y})^{\dagger}.
    \label{eq:supp_channel_observation_azimuthal_projector}
\end{align}
Thus $\mathbf P_{\mathcal M}^{X}$ retains the same $(\tau,l,m)$ sectors in both current blocks, whereas $\mathbf P_{\mathcal M}^{Y}$ acts on the single tangential-electric-field block. For $s\in\{X,Y\}$, define the fraction of the leading doublet contained in those sectors by
\begin{equation}
    \chi_{\mathcal M,L}^{s}
    :=
    \frac12
    \operatorname{tr}
    \left(
    \mathbf P_{\mathcal M}^{s}
    \mathbf P_{\mathrm{ch},L}^{s}
    \right).
    \label{eq:supp_channel_azimuthal_fraction}
\end{equation}
The normalization by two is the dimension of the doublet, so $0\leq\chi_{\mathcal M,L}^{s}\leq1$; equality to one means that the entire two-dimensional subspace lies in the selected sectors. At $L=14$, choosing $\mathcal M=\{-2,+2\}$ in the near-resonant state and $\mathcal M=\{-1,+1\}$ in the off-resonant state gives $\chi_{\mathcal M,14}^{X}=\chi_{\mathcal M,14}^{Y}=1$ to within $8\times10^{-16}$. The leading doublets are therefore confined to $m=\pm2$ and $m=\pm1$, respectively. This establishes the azimuthal labels used in the accessible-channel spectra of the main paper without assigning physical significance to an arbitrary basis within either doublet.

We also compare the $L=14$ doublets with $L=18$ in the common $L_c=44$ external coordinates. For two $d$-dimensional subspaces with semiunitary basis matrices $\mathbf N_1$ and $\mathbf N_2$, the reported minimum principal cosine is
\begin{equation}
    c_{\min}(\mathbf N_1,\mathbf N_2)
    :=
    \sigma_{\min}(\mathbf N_1^{\dagger}\mathbf N_2).
    \label{eq:supp_channel_minimum_principal_cosine}
\end{equation}
Near resonance, the relative change of $\sigma_1$ from $L=14$ to $18$ is $5.18\times10^{-3}$, while the source and observation minimum principal cosines are $0.9999999998$ and $0.9999999994$. Off resonance, the corresponding singular-value change is $4.63\times10^{-9}$ and both cosines exceed $0.9999999999$. Thus the channel subspaces used for the $L=14$ interpretation are already stable even though the near-resonant gain retains the small finite-order change quantified in Section~IV-B.

\subsection{Basis-invariant angular trace fractions}

Fix a retained order $L$; the main-paper results use $L=14$. As in Section~IV-C of the main paper, set $\Delta\Pi_l^s:=\Pi_l^s-\Pi_{l-1}^s$ for $s\in\{\mathrm{reg},\mathrm{out}\}$, with $\Pi_0^s=0$. For each member of the leading doublet, set
\begin{align}
    \mathbf a_n^{\mathrm{exc}}
     & :=
    \mathcal S_L\mathbf q_n,
     &
    \mathbf f_n^{\mathrm{exc}}
     & :=
    P_L^{\mathrm{reg}}\mathbf a_n^{\mathrm{exc}},
    \label{eq:supp_channel_excited_trace_coordinates} \\
    \mathbf a_n^{\mathrm{sca}}
     & :=
    \mathbf K_L\mathbf a_n^{\mathrm{exc}},
     &
    \mathbf f_n^{\mathrm{sca}}
     & :=
    P_L^{\mathrm{out}}\mathbf a_n^{\mathrm{sca}}.
    \label{eq:supp_channel_scattered_trace_coordinates}
\end{align}
The two fractions defined in Section~IV-C of the main paper are therefore
\begin{align}
    w_l^{\mathrm{exc}}
     & :=
    \frac{
    \sum_{n=1}^{2}
    \|\Delta\Pi_l^{\mathrm{reg}}\mathbf f_n^{\mathrm{exc}}\|_{\mathrm{reg}}^2}
    {
    \sum_{n=1}^{2}
    \|\mathbf f_n^{\mathrm{exc}}\|_{\mathrm{reg}}^2},
    \label{eq:supp_channel_excitation_fraction_definition} \\
    w_l^{\mathrm{sca}}
     & :=
    \frac{
    \sum_{n=1}^{2}
    \|\Delta\Pi_l^{\mathrm{out}}\mathbf f_n^{\mathrm{sca}}\|_{\mathrm{out}}^2}
    {
    \sum_{n=1}^{2}
    \|\mathbf f_n^{\mathrm{sca}}\|_{\mathrm{out}}^2}.
    \label{eq:supp_channel_scattered_fraction_definition}
\end{align}

We now reduce these trace-space definitions to finite-coordinate expressions. Stack the two coefficient vectors as
\begin{equation}
    \mathbf A_L^{\mathrm{exc}}
    :=
    [\mathbf a_1^{\mathrm{exc}},\mathbf a_2^{\mathrm{exc}}],
    \quad
    \mathbf A_L^{\mathrm{sca}}
    :=
    [\mathbf a_1^{\mathrm{sca}},\mathbf a_2^{\mathrm{sca}}]
    =
    \mathbf K_L\mathbf A_L^{\mathrm{exc}}.
    \label{eq:supp_channel_internal_trace_matrices}
\end{equation}
As in \hyperref[supp:finite_reference_diagnostics]{Supplementary Note~\ref*{supp:finite_reference_diagnostics}}, the cluster indices of angular order $l$ are
\begin{equation}
    \mathcal P_l
    =
    \{\zeta_L(i,\xi_L(\tau,l,m))\mid 1\leq i\leq N, \tau=1,2, -l\leq m\leq l\}.
    \label{eq:supp_channel_trace_index_set}
\end{equation}
Define the exact-order coordinate selector $\mathbf E_{l,L}\in\mathbb C^{N_L\times N_L}$ explicitly by
\begin{equation}
    [\mathbf E_{l,L}]_{pq}
    :=
    \begin{cases}
        1, & p=q\in\mathcal P_l, \\
        0, & \text{otherwise},
    \end{cases}
    \quad 1\leq p,q\leq N_L.
    \label{eq:supp_channel_trace_order_selector}
\end{equation}
The sets $\mathcal P_l$, $1\leq l\leq L$, partition the retained cluster coordinates, and hence
\begin{equation}
    \mathbf E_{l,L}^{\dagger}
    =
    \mathbf E_{l,L},
    \quad
    \mathbf E_{l,L}\mathbf E_{l',L}
    =
    \delta_{ll'}\mathbf E_{l,L},
    \quad
    \sum_{l=1}^{L}\mathbf E_{l,L}
    =
    \mathbf I_{N_L}.
    \label{eq:supp_channel_trace_order_selector_properties}
\end{equation}
Also let
\begin{equation}
    \mathbf G_{r,L}
    :=
    (P_L^{\mathrm{reg}})^{\dagger}P_L^{\mathrm{reg}},
    \quad
    \mathbf G_{o,L}
    :=
    (P_L^{\mathrm{out}})^{\dagger}P_L^{\mathrm{out}}
    \label{eq:supp_channel_stacked_trace_grams}
\end{equation}
be the stacked fixed-trace Gramians. Because the VSWF traces are orthogonal in the fixed metric, applying the exact-order projection to a retained synthesis selects precisely the coefficient rows in $\mathcal P_l$:
\begin{equation}
    \Delta\Pi_l^{s}P_L^{s}
    =
    P_L^{s}\mathbf E_{l,L},
    \quad
    s\in\{\mathrm{reg},\mathrm{out}\}.
    \label{eq:supp_channel_projection_selector_identity}
\end{equation}
The Gramians are diagonal in $(\tau,l,m)$ and hence commute with $\mathbf E_{l,L}$. For either branch, \eqref{eq:supp_channel_projection_selector_identity} gives
\begin{equation}
    \|\Delta\Pi_l^{s}P_L^{s}\mathbf a\|_{s}^{2}
    =
    \mathbf a^{\dagger}\mathbf E_{l,L}\mathbf G_{s,L}
    \mathbf E_{l,L}\mathbf a
    =
    \|\mathbf E_{l,L}\mathbf G_{s,L}^{1/2}\mathbf a\|_2^2,
    \label{eq:supp_channel_projected_trace_norm}
\end{equation}
where $\mathbf G_{\mathrm{reg},L}:=\mathbf G_{r,L}$ and $\mathbf G_{\mathrm{out},L}:=\mathbf G_{o,L}$. Also set $\mathbf A_L^{\mathrm{reg}}:=\mathbf A_L^{\mathrm{exc}}$, $\mathbf A_L^{\mathrm{out}}:=\mathbf A_L^{\mathrm{sca}}$, $\mathbf f_n^{\mathrm{reg}}:=\mathbf f_n^{\mathrm{exc}}$, and $\mathbf f_n^{\mathrm{out}}:=\mathbf f_n^{\mathrm{sca}}$. Summing \eqref{eq:supp_channel_projected_trace_norm} over the two columns gives, for $s\in\{\mathrm{reg},\mathrm{out}\}$,
\begin{align}
    \sum_{n=1}^{2}
    \|\Delta\Pi_l^{s}\mathbf f_n^{s}\|_{s}^{2}
     & =
    \|\mathbf E_{l,L}\mathbf G_{s,L}^{1/2}
    \mathbf A_L^{s}\|_{\mathrm F}^{2},
    \label{eq:supp_channel_projected_trace_frobenius} \\
    \sum_{n=1}^{2}
    \|\mathbf f_n^{s}\|_{s}^{2}
     & =
    \|\mathbf G_{s,L}^{1/2}\mathbf A_L^{s}\|_{\mathrm F}^{2}.
    \label{eq:supp_channel_total_trace_frobenius}
\end{align}
Substitution of these two identities into \eqref{eq:supp_channel_excitation_fraction_definition} and \eqref{eq:supp_channel_scattered_fraction_definition} yields the computable formulas
\begin{align}
    w_l^{\mathrm{exc}}
     & =
    \frac{
    \|\mathbf E_{l,L}\mathbf G_{r,L}^{1/2}
    \mathbf A_L^{\mathrm{exc}}\|_{\mathrm F}^{2}}
    {\|\mathbf G_{r,L}^{1/2}\mathbf A_L^{\mathrm{exc}}\|_{\mathrm F}^{2}},
    \label{eq:supp_channel_excitation_angular_fraction} \\
    w_l^{\mathrm{sca}}
     & =
    \frac{
    \|\mathbf E_{l,L}\mathbf G_{o,L}^{1/2}
    \mathbf A_L^{\mathrm{sca}}\|_{\mathrm F}^{2}}
    {\|\mathbf G_{o,L}^{1/2}\mathbf A_L^{\mathrm{sca}}\|_{\mathrm F}^{2}}.
    \label{eq:supp_channel_scattered_angular_fraction}
\end{align}
Thus the finite formulas are exactly the coordinate evaluations of the two trace-space definitions in the main paper. Since the mutually orthogonal selectors $\mathbf E_{l,L}$ resolve the retained identity, each family of fractions sums to one.

If the orthonormal basis of the leading doublet is changed by a unitary matrix $\mathbf U$, then both internal matrices in \eqref{eq:supp_channel_internal_trace_matrices} are right-multiplied by $\mathbf U$. The identity $\|\mathbf A\mathbf U\|_{\mathrm F}=\|\mathbf A\|_{\mathrm F}$ proves the claimed basis invariance of both the numerator and denominator. The reported fractions are therefore properties of the two-dimensional accessible subspace, not of an arbitrary singular-vector basis. They are marginal trace contents on the two sides of the coherent map $\mathbf K_L$; no incoherent transfer of energy from one angular order to another is asserted.

At $L=14$, summing \eqref{eq:supp_channel_excitation_angular_fraction} over $l=5,\ldots,8$ gives $0.7960$ near resonance, while \eqref{eq:supp_channel_scattered_angular_fraction} gives $w_7^{\mathrm{sca}}=0.9925$. Off resonance, $w_1^{\mathrm{exc}}=0.9635$ and $w_1^{\mathrm{sca}}=0.9531$. These values are the percentages reported in Section~IV-C of the main paper.

\subsection{Displayed representative and independent forward check}

Within each numerically degenerate doublet, the phase of each recovered source is first fixed by making its largest-magnitude external source coordinate real and positive. Suppressing the fixed order $L=14$ in this paragraph, the displayed standing-wave source is then
\begin{equation}
    \mathbf q_{\mathrm{dom}}
    =
    \frac{\mathbf q_1+\mathbf q_2}{\sqrt2}.
    \label{eq:supp_channel_display_source}
\end{equation}
The same phases and linear combination are applied to the singular-value-weighted outputs. Thus the output paired with \eqref{eq:supp_channel_display_source} is
\begin{equation}
    \mathbf e_{\mathrm{dom}}^{\mathrm{sca}}
    =
    \frac{
    \sigma_1\widehat{\mathbf e}_1^{\mathrm{sca}}
    +
    \sigma_2\widehat{\mathbf e}_2^{\mathrm{sca}}}
    {\sqrt2},
    \quad
    \|\mathbf e_{\mathrm{dom}}^{\mathrm{sca}}\|_2
    =
    \left(\frac{\sigma_1^2+\sigma_2^2}{2}\right)^{1/2}.
    \label{eq:supp_channel_display_output}
\end{equation}
After forming \eqref{eq:supp_channel_display_source} and \eqref{eq:supp_channel_display_output}, the same final unit-modulus phase is applied to both so that the largest-magnitude coordinate of $\mathbf q_{\mathrm{dom}}$ is real and positive. This common phase leaves the norm in \eqref{eq:supp_channel_display_output} and every subspace-level conclusion unchanged. The construction selects one standing-wave representative for visualization; it is not an additional basis-invariant channel assertion.

Under the isometric coordinate identification in \hyperref[supp:accessible_channel_diagnostics]{Supplementary Note~\ref*{supp:accessible_channel_diagnostics}-A}, the same symbols $\mathbf q_n$ and $\mathbf q_{\mathrm{dom}}$ denote a source in $\mathcal X$ and its external coordinate vector; the meaning is fixed by the operator or matrix acting on it. This convention lets the following coefficient-level check be compared directly with the recovered continuous channel.

For the independent forward check, the stored action of $\mathbf K_{14}$ is replaced by the explicit retained multiple-scattering path
\begin{align}
    \mathbf a^{\mathrm{ext}}
     & =
    \mathbf L_{S,14}^{\dagger}\mathbf q_{\mathrm{dom}},
    \label{eq:supp_channel_forward_excitation} \\
    (\mathbf I-\mathbf W_{14}\mathbf T_{14})\mathbf a^{\mathrm{inc}}
     & =
    \mathbf a^{\mathrm{ext}},
    \label{eq:supp_channel_forward_solve}      \\
    \mathbf a^{\mathrm{sca}}
     & =
    \mathbf T_{14}\mathbf a^{\mathrm{inc}},
    \label{eq:supp_channel_forward_scattering} \\
    \mathbf e_{\mathrm{fwd}}^{\mathrm{sca}}
     & =
    \mathbf L_{O,14}\mathbf a^{\mathrm{sca}}.
    \label{eq:supp_channel_forward_observation}
\end{align}
Here $\mathbf a^{\mathrm{ext}}$, $\mathbf a^{\mathrm{inc}}$, and $\mathbf a^{\mathrm{sca}}$ are, respectively, the external regular, exciting regular, and scattered outgoing cluster coefficients. The collective equation is solved in the fixed regular-trace metric by GMRES with relative tolerance $10^{-12}$. Comparing $\mathbf e_{\mathrm{fwd}}^{\mathrm{sca}}$ from \eqref{eq:supp_channel_forward_excitation}--\eqref{eq:supp_channel_forward_observation} with \eqref{eq:supp_channel_display_output} gives relative observation-norm errors $9.370\times10^{-10}$ near resonance and $9.150\times10^{-12}$ off resonance. The corresponding trace-balanced equation residuals are $1.06\times10^{-13}$ and $1.16\times10^{-14}$.

Finally, Eq.~\eqref{eq:supp_channel_minimum_principal_cosine} is used to test whether the off-resonant leading low-order family remains present after the near-resonant reordering. Comparing the near-resonant channel-$3,4$ subspaces with the off-resonant channel-$1,2$ subspaces gives $c_{\min}=0.9941$ in $\mathcal X$ and $0.9967$ in $\mathcal Y$. This basis-invariant comparison supports the physical conclusion in the main paper: the collective resonance inserts a distinct high-order leading doublet while leaving the original low-order source--observation family almost unchanged beneath it.

\bibliographystyle{IEEEtran}
\bibliography{ref}